\definecolor{darkgreen}{rgb}{0,0.5,0}
\definecolor{darkblue}{rgb}{0,0,0.8}
\theoremstyle{theorem}
\newtheorem{theorem}{Theorem}[section]
\theoremstyle{lemma}
\newtheorem{lemma}[theorem]{Lemma}
\theoremstyle{corollary}
\newtheorem{corollary}[theorem]{Corollary}
\theoremstyle{claim}
\theoremstyle{definition}
\theoremstyle{remark}
\newcommand{\diameter}[1]{\ensuremath{\textup{diameter}{\left(#1 \right)}}\xspace}
\newcommand{\diameterG}{\diameter{G}}
\newcommand{\DenseSubgraph}[1]{\textsc{DenseSubgraph}(#1)}
\newcommand{\DensestSubgraph}[1]{\textsc{DenseSubgraph}(#1)}
\newcommand{\DUAL}[1]{\textsc{Dual}(#1)}
\newcommand{\PRIMAL}[1]{\textsc{Primal}(#1)}
\newcommand{\Dual}{\textsc{Dual}}
\newcommand{\Primal}{\textsc{Primal}}
\newcommand{\ignore}[1]{}
\algnewcommand\algorithmicswitch{\textbf{switch}}
\algnewcommand\algorithmiccase{\textbf{case}}
\newcommand{\eps}{\epsilon}
\newcommand{\congest}{\ensuremath{\mathsf{CONGEST}~}}
\newcommand{\local}{$\mathsf{LOCAL}$\xspace}
\newcommand{\poly}{\operatorname{\text{{\rm poly}}}}
\newcommand{\floor}[1]{\lfloor #1 \rfloor}
\newcommand{\dist}{\operatorname{dist}}
\newcommand{\outdeg}{\operatorname{outdeg}}
\newcommand{\indeg}{\operatorname{indeg}}
\renewcommand{\paragraph}[1]{\vspace{0.15cm}\noindent {\bf #1}:}
\begin{document}

\title{Distributed  Dense Subgraph Detection and Low Outdegree Orientation}
\author{ Hsin-Hao Su \\ Boston College \\ suhx@bc.edu \and Hoa T. Vu \\ San Diego State University \\ hvu2@sdsu.edu }
\date{}

\maketitle\begin{abstract}
The densest subgraph problem, introduced in the 80s by Picard and Queyranne [Networks 1982] as well as Goldberg [Tech. Report 1984], is a classic problem in combinatorial optimization with a wide range of applications.  The lowest outdegree orientation problem is known to be its dual problem. We study both the problem of finding dense subgraphs and the problem of computing a low outdegree orientation in the distributed settings.

Suppose $G=(V,E)$ is the underlying network as well as the input graph. Let $D$ denote the density of the maximum density subgraph of $G$. Our main results are as follows.
\thispagestyle{empty}
\begin{itemize}
\item Given a value $\tilde{D} \leq D$ and $0 < \epsilon < 1$, we show that a subgraph with density at least $(1-\epsilon)\tilde{D}$ can be identified deterministically in $O((\log n) / \epsilon)$ rounds in the \textsf{LOCAL} model. We also present a lower bound showing that our result for the \textsf{LOCAL} model is tight up to an $O(\log n)$ factor. 

In the \textsf{CONGEST} model, we show that such a subgraph can be identified in $O((\log^3 n) / \epsilon^3)$ rounds with high probability. Our techniques also lead to an $O(\diameterG + (\log^4 n)/\epsilon^4)$-round algorithm that yields a $1-\epsilon$ approximation to the densest subgraph. This improves upon the previous $O(\diameterG /\epsilon  \cdot  \log n)$-round algorithm by Das Sarma et al. [DISC 2012] that only yields a $1/2-\epsilon$ approximation.

\item Given an integer $\tilde{D} \geq D$ and $\Omega(1/\tilde{D}) < \epsilon < 1/4$, we give a deterministic, $\tilde{O}((\log^2 n) /\epsilon^2)$-round \footnote{$\tilde{O}$ hides $\log \log n$ factors.}  algorithm in the \congest model that computes an orientation where the outdegree of every vertex is upper bounded by $(1+\epsilon)\tilde{D}$. Previously, the best deterministic algorithm and randomized algorithm by Harris [FOCS 2019] run in $\tilde{O}((\log^6 n)/ \epsilon^4)$ rounds and $\tilde{O}((\log^3 n) /\epsilon^3)$ rounds respectively   and only work in the \local model.
\end{itemize}
\end{abstract}\newpage
\setcounter{page}{1}
\sloppy
\section{Introduction}

\paragraph{The Dense Subgraph Problem} Given a graph $G = (V,E)$, the {\it maximum density subgraph} problem (or the densest subgraph problem) is to find a subgraph $H$, where its density $d(H) = |E(H)|/|V(H)|$ is maximized over all subgraphs of $G$. We denote the density of the maximum density subgraph of $G$, $\max_{H \subseteq G} d(H)$, by $D$.

First studied by Picard and Queyranne \cite{PQ82} as well as Goldberg \cite{Goldberg84}, the maximum density subgraph problem has found numerous applications in community detection in social networks \cite{DGP07, CS12}, link spam identification \cite{GKT05, BXGPF13}, and computational biology \cite{FNBB06, SHKRZ10}. Faster algorithms \cite{GGT89, Charikar00, KS09} have been developed for the problem and its variants since then. Moreover, the problem has been widely studied under different models of computation such as the streaming settings \cite{BHNT15, EsfandiariHW15, McGregorTVV15}, the dynamic setting \cite{SJ20,BHNT15}, the massive parallel computation settings \cite{BahmaniKV12, BahmaniGM14, GLM19}, and the distributed settings \cite{SarmaLNT12}.  

We study the problem of detecting dense subgraphs in the distributed settings, namely, in both the \local and the \congest models. Let $n$ and $m$ be the number of vertices and the number of edges respectively. Furthermore, let $\Delta$ be the maximum degree. In such models, vertices are labeled with unique IDs and they operate in synchronized rounds. In each round, each vertex sends a message to each of its neighbors, receives messages from its neighbors, and performs local computations. The time complexity of an algorithm is defined to be the number of rounds used. In the \local model, the message size can be arbitrary. In the \congest model, the message size is upper bounded by $O(\log n)$ bits. We consider the following parameterized version of the maximum density subgraph problem in the distributed settings, which may capture the computational nature of some applications such as how a dense community can be found by only communicating with the neighbors in social networks.

\begin{mdframed}[hidealllines=false,backgroundcolor=white]
$\DensestSubgraph{\tilde{D}, \epsilon}$ : Given a graph $G = (V,E)$, a parameter $\tilde{D} \geq 0$, and $0 < \epsilon < 1$, every vertex $u$ outputs a value $h_u \in \{0,1\}$ such that $d(H) \geq (1-\epsilon) \tilde{D}$ where $H = \{u \mid h_u = 1\}$. If $\tilde{D} \leq D$ then $H$ must be non-empty. 
\end{mdframed}

The first question we investigate is whether $\DensestSubgraph{\tilde{D}, \epsilon}$ can be solved locally.  Intuitively, most dense subgraphs have short diameters because they are well-connected.   Thus, they  can be detected locally. Our first result justifies this intuition.

\begin{theorem}\label{thm:main-local-1} There exists a deterministic algorithm for $\DensestSubgraph{\tilde{D}, \epsilon}$ that runs in $O((\log n) / \epsilon)$ rounds in the \local model.\end{theorem}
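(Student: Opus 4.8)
The plan is to turn the classical LP/flow duality for the densest subgraph into a local, primal-dual style computation. Recall that a fractional version of the problem asks each edge $(u,v)$ to be split into two loads $x_{u,v}+x_{v,u}=1$, and $D = \min_{x}\max_v \sum_{u} x_{v,u}$ equals the optimal fractional out-degree; moreover a subgraph $H$ with $d(H)\ge D'$ exists iff the fractional out-degree cannot be pushed below $D'$. The key quantitative handle, which I would establish as a lemma first, is a \emph{local} version of this: if after a suitable local re-orientation/re-weighting procedure some vertex still has load above $(1-\eps)\tilde D$, then the set of vertices reachable from it within $O((\log n)/\eps)$ hops (along ``heavy'' residual arcs) already induces a subgraph of density at least $(1-\eps)\tilde D$. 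This is the analogue of the standard argument that in the densest-subgraph LP, a ball grown along tight edges around an overloaded vertex is itself dense — but we must control the radius.

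Concretely I would run a fractional load-balancing / water-filling process: initialize $x_{u,v}=x_{v,u}=1/2$ on every edge, and repeatedly, for $O((\log n)/\eps)$ iterations, have every vertex $v$ whose current load $\ell(v)=\sum_u x_{v,u}$ exceeds $\tilde D$ push $\eps$-fraction of excess backward along each incoming arc whose tail has smaller load (a local, one-round operation). The potential $\Phi=\sum_v \ell(v)^2$ (or $\sum_v \max(0,\ell(v)-\tilde D)^2$) decreases geometrically by a $(1-\Omega(\eps))$ factor per round as long as there is a vertex of load $\ge \tilde D$ with a lighter in-neighbor; since $\Phi\le \mathrm{poly}(n)$ initially and the minimum meaningful gap is $\ge \eps\tilde D$, after $O((\log n)/\eps)$ rounds either every vertex has load $<\tilde D$ — which, together with $\tilde D\le D$ and LP duality, is impossible, so actually every vertex has load $< (1-\eps)\tilde D$ is the contradiction we want — or there is a vertex $v_0$ with $\ell(v_0)\ge(1-\eps)\tilde D$ all of whose in-neighbors along residual heavy arcs also have load $\ge (1-\eps)\tilde D$. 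In the latter case, grow the reachability set $S$ of heavy residual arcs from $v_0$; every vertex in $S$ has out-load (restricted to $S$, since all its heavy in-arcs stay inside $S$) at least $(1-\eps)\tilde D$, so $|E(S)|\ge \sum_{v\in S}\ell_S(v)\ge (1-\eps)\tilde D\,|S|$, giving $d(G[S])\ge(1-\eps)\tilde D$. Since the whole process runs in $O((\log n)/\eps)$ rounds, $S$ has radius $O((\log n)/\eps)$ and each vertex can learn all of $G[S]$ in that many LOCAL rounds, compute the witness, and set $h_u=1$ for $u\in S$; all other vertices output $0$. (When $\tilde D\le D$ we are guaranteed to be in the second case, so $H\ne\emptyset$.)

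The main obstacle is bounding the round complexity of the balancing process by $O((\log n)/\eps)$ rather than something like $O((\log n)/\eps^2)$, and simultaneously making the ``grow the reachable heavy-arc set'' argument yield a set of radius only $O((\log n)/\eps)$ — i.e.\ reusing the \emph{same} time budget for convergence and for information-gathering. I expect this needs a careful choice of the push rule so that one iteration both (i) contracts the quadratic potential by a $(1-\Omega(\eps))$ factor and (ii) ensures that ``residual heavy arc'' is a property a vertex can certify from a single round of communication, so that the reachable set after $t$ rounds has hop-radius $\le t$. A secondary technical point is handling the discreteness/precision of the loads (they are dyadic rationals with denominators growing over time) so that ``load $\ge(1-\eps)\tilde D$'' is well-defined and the geometric decay is not spoiled; since this is the LOCAL model, messages may be arbitrarily large, so exact rational arithmetic is fine and this is only a bookkeeping issue. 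Finally, one must verify the edge case $\tilde D > D$: then the process may terminate with all loads $<(1-\eps)\tilde D$ (no witness), and the specification permits $H=\emptyset$, so outputting all zeros is correct.
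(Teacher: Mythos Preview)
Your approach is genuinely different from the paper's, and as written it has real gaps at precisely the two places you yourself flag as ``the main obstacle.''

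First, the claim that $\Phi=\sum_v \ell(v)^2$ (or its excess variant) drops by a $(1-\Omega(\eps))$ factor per round of your push rule is not justified, and standard load-balancing analyses do not give this: pushing an $\eps$-fraction of excess to lighter neighbors typically yields only a $\Theta(\eps^2)$-multiplicative decrease in a quadratic potential, which lands you at $O((\log n)/\eps^2)$ rounds, not $O((\log n)/\eps)$. That is exactly the regime of the multiplicative-weights/fractional-orientation argument the paper uses for the \congest model in Section~\ref{sec:congest}, and shaving the extra $1/\eps$ there is not a bookkeeping matter.

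Second, the radius bound on $S$ is circular. You define $S$ as the set reachable from $v_0$ along ``heavy residual arcs'' and then write ``since the whole process runs in $O((\log n)/\eps)$ rounds, $S$ has radius $O((\log n)/\eps)$.'' But $S$ is a reachability set in the graph; its radius has nothing a priori to do with how many rounds of balancing you ran. To bound the radius you would need a separate expansion/level-set argument (e.g.\ each BFS layer outward along heavy arcs either grows geometrically or hits a low-load layer), and that argument is absent. Relatedly, you never address output consistency: many vertices $v_0$ may qualify, producing overlapping sets $S$, and you need a symmetry-breaking step so that $\{u:h_u=1\}$ is a disjoint union of dense pieces rather than an arbitrary union (which need not be dense).

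By contrast, the paper's proof avoids all of this. The key step (Lemma~\ref{lem:low-diameter-ds}) is structural rather than algorithmic: apply a low-diameter decomposition with parameter $\eps$ to the densest subgraph $H^*$; since at most an $\eps$-fraction of $E(H^*)$ is cut, by averaging some component has density $\ge(1-\eps)D$ and diameter $O((\log n)/\eps)$. Once such a low-diameter witness is known to exist, the \local algorithm is simply: every vertex gathers its $r$-ball with $r=O((\log n)/\eps)$, computes the densest subgraph there, and the active vertex of smallest ID in each $2r$-ball broadcasts its witness. No balancing, no potential function, and the symmetry-breaking is a one-liner via IDs.
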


We will also present a lower bound showing that the running time of the algorithm is tight up to an $O(\log n)$ factor. The algorithm for the \local model uses large message size. This begs the question of whether the problem can be solved in the \congest model while remaining in the $\poly(1/\epsilon, \log n)$-round regime. We show that this is indeed possible with randomization. 

\begin{theorem}\label{thm:main-congest-1}
There exists a randomized algorithm that solves $\DensestSubgraph{\tilde{D}, \epsilon}$ w.h.p.\footnote{W.h.p.~denotes with high probability, which means with probability at least $1 - 1/n^c$ for an arbitrarily large constant $c$.},~and runs in $O((\log^3 n) / \epsilon^3)$ rounds in the  \congest model.
\end{theorem}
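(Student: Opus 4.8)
The plan is to reduce the \congest algorithm to the \local algorithm of \Cref{thm:main-local-1} by showing that the \local algorithm can be made to run with small messages, at the cost of extra factors in the round complexity. First I would recall the structure behind \Cref{thm:main-local-1}: the $O((\log n)/\epsilon)$-round \local algorithm almost certainly proceeds by having each vertex simulate a primal-dual / MWU style iteration (or a local fractional-orientation update) within its $O((\log n)/\epsilon)$-radius ball, and then round locally. The key structural observation I would exploit is that a witness dense subgraph has small diameter — indeed, if $\tilde D \le D$ then some subgraph $H$ of density at least $(1-\epsilon)\tilde D$ can be found inside a ball of radius $O((\log n)/\epsilon)$ around an appropriate vertex. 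So the task reduces to: (i) locate such a vertex/ball, and (ii) certify the density of the candidate subgraph, both using $O(\log n)$-bit messages.

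The main technical work is implementing the certification step in \congest. The natural approach is the standard reduction from densest subgraph to low-outdegree orientation (the dual): a subgraph of density $\ge (1-\epsilon)\tilde D$ exists in a region iff that region admits no fractional orientation with all outdegrees $\le (1-\epsilon)\tilde D$. I would run a localized multiplicative-weights or blocking-flow style procedure inside each radius-$r$ ball, where $r = O((\log n)/\epsilon)$: maintain a fractional orientation, repeatedly push along short augmenting/rebalancing paths, and after $O((\log n)/\epsilon)$ phases either obtain a valid low-outdegree orientation locally (certifying no dense subgraph nearby) or extract the level set of high-load vertices, which forms the desired dense subgraph. Each phase uses short paths of length $O((\log n)/\epsilon)$, so each phase costs $O((\log n)/\epsilon)$ rounds in \congest; combined with $O((\log n)/\epsilon)$ phases and an extra $O(\log n)$ or $O((\log n)/\epsilon)$ factor for amplifying success probability / handling congestion of concurrently routed paths via random delays, this yields the claimed $O((\log^3 n)/\epsilon^3)$ bound. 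Randomization enters precisely to schedule the many short paths being pushed simultaneously across overlapping balls so that congestion on any edge stays $\polylog$ per round (a random-delay / Leighton–Maggs–Rao-type argument), and to break ties among candidate subgraphs; this is the only place w.h.p. is needed.

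The step I expect to be the main obstacle is controlling congestion: in \congest every edge carries $O(\log n)$ bits per round, but the local certification procedure conceptually routes many augmenting paths through possibly many overlapping $r$-balls, so a naive simulation of the \local algorithm would blow up the message size by $\poly(n)$. The heart of the proof is therefore a scheduling lemma showing that the total number of path tokens crossing any edge over the whole computation is $\tilde O(1)$ in expectation (or can be made so by subsampling candidate centers), so that with random start offsets all tokens clear in $\tilde O((\log^2 n)/\epsilon^2)$ rounds w.h.p.; combined with the $O((\log n)/\epsilon)$ path length this gives the stated complexity. A secondary obstacle is ensuring that the \emph{output} subgraph $H$ is globally consistent — every vertex of the chosen dense subgraph must know it is selected — which I would handle by having all vertices agree (within the relevant ball) on the ID of the center that produced the densest local witness, using the same routed-messages infrastructure. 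Once the scheduling lemma is in place, correctness follows from the same primal-dual inequalities used for \Cref{thm:main-local-1}, applied ball-by-ball.
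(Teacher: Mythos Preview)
Your diagnosis of the obstacle is right---congestion from overlapping balls is exactly what prevents a direct simulation of the \local\ algorithm---but your proposed fix is the gap. The ``scheduling lemma'' you posit, that the total token traffic across any edge is $\tilde{O}(1)$, is unsubstantiated: with every vertex acting as a center of an $r$-ball and each ball running $\Theta((\log n)/\epsilon)$ phases of an MWU/flow procedure, a single edge can lie in $\Omega(\Delta^r)$ balls and there is no reason the aggregate load should be polylogarithmic, even after random delays or subsampling centers. Leighton--Maggs--Rao gives you $O(\text{congestion}+\text{dilation})$, but here congestion is not bounded a priori.

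The paper sidesteps congestion entirely rather than scheduling around it. The idea you are missing is to first apply a randomized \emph{low-diameter decomposition} (Miller--Peng--Xu) with parameter $\Theta(\epsilon)$: this partitions $V$ into \emph{disjoint} clusters, each of diameter $O((\log n)/\epsilon)$, cutting each edge with probability $O(\epsilon)$. By Markov, with probability $\geq 1/2$ at most an $\epsilon$-fraction of the densest subgraph's edges are cut, and then the argument of \Cref{lem:low-diameter-ds} shows some cluster contains a subgraph of density $\geq(1-\epsilon)\tilde D$. Because clusters are disjoint, you can run a \emph{deterministic} MWU-based density-certification algorithm inside each cluster simultaneously with zero inter-cluster interference; each cluster has diameter $O((\log n)/\epsilon)$, so the per-cluster cost is $O((\log^2 n)/\epsilon^3)$. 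Repeating the decomposition $O(\log n)$ times boosts the success probability, and simple bookkeeping (only mark vertices in a cluster if it contains no previously marked vertex) keeps the output consistent across repetitions. Randomization is used solely for the decomposition, not for path scheduling.
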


Finding the densest subgraph in such distributed settings  inevitably requires $\Omega(\diameter{G})$ rounds (e.g., consider two subgraphs of different densities connected by a path of length $\Omega(\diameter{G})$). Das Sarma et al.~\cite{SarmaLNT12} gave an algorithm for finding a $(1/2-\epsilon)$ approximation to the densest subgraph in $O(\diameter{G}/\eps \cdot \log n)$ rounds in the \congest model. We show that the approximation factor can be improved and the dependency on $\diameter{G}$ can be made additive:
\begin{corollary}\label{cor:main-congest-2}
There exists a randomized algorithm that finds a $(1-\epsilon)$-approximation to the maximum density subgraph w.h.p.~and runs in $O(\diameter{G} +(\log^4 n)/ \epsilon^4 )$ rounds in the \congest model.
\end{corollary}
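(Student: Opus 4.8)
The plan is to combine the local algorithm of \Cref{thm:main-congest-1} with a standard "aggregate over a BFS tree" argument. The obstacle that \Cref{thm:main-congest-1} leaves open is that it only certifies density $\geq (1-\epsilon)\tilde D$ for a \emph{given guess} $\tilde D \le D$; to get a true $(1-\epsilon)$-approximation of $D$ we must also know $D$ well enough to pick the guess, and we must let every vertex learn the global answer. Both of these require $\Omega(\diameter G)$ communication, which is exactly the additive term in the bound we are proving, so the diameter dependence is unavoidable and in fact affordable.

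**First I would** have the network compute $\diameter G$ (or a $2$-approximation of it) and build a BFS tree from an elected leader; this costs $O(\diameter G)$ rounds. Next, I would run the \Cref{thm:main-congest-1} algorithm on a geometric sequence of guesses $\tilde D \in \{(1+\epsilon)^i\}$; since $1 \le D \le n$, there are only $O((\log n)/\epsilon)$ guesses, and a binary search over them would cut this to $O(\log((\log n)/\epsilon))$ invocations, but even the crude linear sweep is fine because each invocation is $O((\log^3 n)/\epsilon^3)$ rounds, so all guesses together cost $O((\log^4 n)/\epsilon^4)$ rounds, matching the stated bound. For each guess, after the subgraph $H$ is identified I would aggregate $|E(H)|$ and $|V(H)|$ up the BFS tree (another $O(\diameter G)$ rounds, but these aggregations can be pipelined across the $O((\log n)/\epsilon)$ guesses, so the total is $O(\diameter G + (\log n)/\epsilon)$), and the leader records the largest guess $\tilde D^\star$ for which the returned $H$ is non-empty and genuinely has $d(H) \ge (1-\epsilon)\tilde D^\star$.

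**Then I would** argue correctness: by \Cref{thm:main-congest-1}, every guess $\tilde D \le D$ succeeds w.h.p., so the largest successful guess satisfies $\tilde D^\star \ge D/(1+\epsilon)$, hence the reported subgraph has density at least $(1-\epsilon)\tilde D^\star \ge (1-\epsilon)D/(1+\epsilon) \ge (1-2\epsilon)D$; rescaling $\epsilon$ by a constant factor (which only affects the hidden constants in the round bound) gives the claimed $(1-\epsilon)$-approximation. Finally, the leader broadcasts the winning index down the BFS tree so that every vertex knows which of its stored outputs $h_u$ to commit to; this is one more $O(\diameter G)$ broadcast. A union bound over the $O((\log n)/\epsilon)$ invocations keeps the overall failure probability polynomially small (again absorbing a constant into the w.h.p.\ exponent).

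**The main obstacle** I anticipate is bookkeeping the round complexity carefully so the $\diameter G$ term stays additive rather than multiplicative: naively running one \Cref{thm:main-congest-1} call, then one tree-aggregation, then the next call, would multiply the two, giving $O(\diameter G \cdot (\log n)/\epsilon)$. The fix is to \emph{decouple} the two phases — run all $O((\log n)/\epsilon)$ guesses of the \Cref{thm:main-congest-1} algorithm back-to-back (total $O((\log^4 n)/\epsilon^4)$ rounds, no diameter), have each vertex store the vector of its outputs, and only then perform a single pipelined pass up and down the BFS tree that carries all the $(|E(H)|,|V(H)|)$ pairs at once (total $O(\diameter G + (\log n)/\epsilon)$ rounds). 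Summing the two phases yields $O(\diameter G + (\log^4 n)/\epsilon^4)$, as claimed. A minor secondary point is ensuring the aggregated counts $|E(H)|,|V(H)|$ fit in $O(\log n)$-bit messages, which they do since both are at most $n^2$.
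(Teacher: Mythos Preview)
Your proposal is correct and follows essentially the same approach as the paper: run the algorithm of \Cref{thm:main-congest-1} on the geometric sequence of guesses $(1+\epsilon)^i$ for $i=0,\ldots,\lceil\log_{1+\epsilon} n\rceil$, then spend $O(\diameter G)$ rounds to globally select the largest successful guess and commit to its output. The paper's version is slightly leaner in that it does not aggregate $|E(H)|$ and $|V(H)|$ at all: since the specification of $\DensestSubgraph{\tilde D,\epsilon}$ already guarantees that any non-empty output has density at least $(1-\epsilon)\tilde D$, it suffices for each vertex to record the largest phase in which it was marked, and then broadcast a single integer $j=\max_v \psi(v)$ over the BFS tree; your explicit density check is redundant but harmless.
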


Inspired by web-graphs, Kannan and Vinay \cite{KV99} defined the notion of density in directed graphs. Suppose that $G = (V,E)$ is a directed graph. The density of a pair of sets $S,T \subseteq V$ is defined as $d(S,T) = \frac{|E(S,T)|}{\sqrt{|V(S)||V(T)|}}$, where $E(S,T)$ denote the set of edges that go from a vertex in $S$ to a vertex in $T$. Note that we are assuming messages can go in both directions of an edge. Our result in the \local model can be generalized to the directed version of the problem. 

\paragraph{The Low Outdegree Orientation Problem} Given an undirected graph $G=(V,E)$, an $\alpha$-orientation is an orientation of the edges such that the outdegree of every vertex is upper bounded by $\alpha$. Picard and Queyranne \cite{PQ82} observed that an $\alpha$-orientation exists if and only if $\alpha \geq \lceil D \rceil$. Charikar \cite{Charikar00} formulated the linear program (LP) for the densest subgraph problem, and its dual is the fractional version of the lowest outdegree orientation problem \cite{BahmaniGM14}. 

An $\alpha$-orientation can be used to obtain a decomposition of the graph into $\alpha$ {\it pseudoforests} \cite{PQ82}, where a pseudoforest is a collection of disjoint {\it pseudotrees} (a pseudotree is a connected graph containing at most one cycle). The relation between the pseudoforest decomposition and the maximum density is analogous to that of the forest decomposition and the arboricity shown by Nash-Williams \cite{NW64}.

We consider the low outdegree orientation problem in the distributed setting. The problem can be formally stated as follows.
\begin{mdframed}[hidealllines=false,backgroundcolor=white]
Given a graph $G = (V,E)$, an integer parameter $\tilde{D} \geq D$, and $0 < \epsilon < 1$, compute a $(1+\epsilon)\tilde{D}$-orientation. The orientation of an edge is decided by either  of its endpoints.
\end{mdframed}
Our contribution for this problem is a deterministic, $\tilde{O}((\log^2 n )/ \epsilon^2)$-round algorithm in the \congest model.  Previously, the best deterministic algorithm and randomized algorithm by Harris \cite{Harris19} run in $\tilde{O}((\log^6 n) / \epsilon^4)$ rounds and $\tilde{O}((\log^3 n) /\epsilon^3)$ rounds respectively   and only work in the \local model.

\begin{theorem}\label{thm:lowdeg}Given an integer $\tilde{D} \geq D$, for any $32/\tilde{D} \leq \epsilon \leq 1/4$, there exists a deterministic algorithm in the \congest model that computes a $(1+\epsilon)\tilde{D}$-orientation and runs in
$$O\left(\frac{\log n}{\epsilon^2} + \left(\min(\log \log n, \log \Delta)+  \log(1/ \epsilon)\right)^{2.71}  \cdot (1/\epsilon)^{1.71}\cdot \log^2 n \right) \leq  \tilde{O}((\log^2 n) /\epsilon^2)  \mbox{ rounds}.$$
\end{theorem}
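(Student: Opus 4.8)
The plan is to go through linear-programming duality. As noted above, the dual of Charikar's densest-subgraph LP \cite{Charikar00} is exactly the \emph{fractional} low-outdegree orientation problem \cite{BahmaniGM14}, so I would first compute a near-optimal fractional orientation and then round it to an integral one, splitting the error budget so that the fractional phase overshoots by at most $(\epsilon/2)\tilde D$ and the rounding phase adds at most another $(\epsilon/2)\tilde D$. The hypothesis $\epsilon \geq 32/\tilde D$ is what makes this second budget at least a large constant, which the rounding will need.

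For the fractional phase I would maintain, for every edge $e=\{u,v\}$, a split $x_e^u+x_e^v=1$ of $e$'s unit charge between its endpoints, initialized to the balanced split $x_e^u=x_e^v=1/2$, and repeatedly apply a purely local diffusion step: every vertex whose fractional outdegree exceeds the target $(1+\epsilon/2)\tilde D$ pushes a small quantum of charge across each incident edge toward the endpoint with smaller fractional outdegree. A potential argument --- tracking $\sum_v (\text{fractional outdegree of } v)^2$, or a weighted entropy of the outdegree profile, which starts at $O(n\log n)$ and decreases by a $1-\Omega(\epsilon^2)$-type factor as long as many vertices remain above target --- shows convergence to maximum fractional outdegree $(1+\epsilon/2)\tilde D$ in $O((\log n)/\epsilon^2)$ rounds; here $\tilde D\geq D$ is used both to guarantee the target is feasible and, exactly as in the \local{} analysis behind \Cref{thm:main-local-1}, to keep the process local. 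Keeping all charges at multiples of $1/\poly(n)$ lets every message fit in $O(\log n)$ bits, so this phase runs in \congest{} and contributes the term $O((\log n)/\epsilon^2)$. (One could equivalently extract the fractional orientation from the dual side of the multiplicative-weights routine underlying \Cref{thm:main-congest-1}.)

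The rounding phase is the main obstacle. Given a fractional orientation with maximum fractional outdegree $\leq (1+\epsilon/2)\tilde D$, I would turn it integral while adding only $O(\log(1/\epsilon))$ extra outdegree per vertex, via $O(\log(1/\epsilon))$ scaling levels in the spirit of Eulerian orientations: at each level, restrict to the subgraph $F$ of still-fractional edges, fix parity by matching up odd-degree vertices along short paths, and orient each component of $F$ so that the fractional outdegree of every internal vertex is roughly halved --- which requires the cycles of $F$ to be broken \emph{coherently}. Coherent cycle-breaking is global in the worst case, so the fix is to run each level on top of a deterministic network decomposition of $F$ into $O(\log n)$ classes of $\poly(\log n)$-weak-diameter clusters, orient cycle fragments class by class, and pay $\poly(\log n)$ per level (dominated by the $O(\log^2 n)$ cost of the decomposition primitive). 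The delicate points --- and where essentially all the work sits --- are (i) making the diffusion and the per-level rounding deterministic under $O(\log n)$-bit messages, which forces explicit quantization and tie-breaking rules that still guarantee per-round/per-level progress, and (ii) simultaneously controlling the number of levels, the integral error they introduce, and the $(\epsilon/2)\tilde D$ budget: halving the fractional mass too fast forces large per-level error, too slowly forces too many levels, and the optimal trade-off together with the exact cost of the decomposition primitive on the sparse auxiliary graphs $F$ is what yields the stated $\bigl(\min(\log\log n,\log\Delta)+\log(1/\epsilon)\bigr)^{2.71}(1/\epsilon)^{1.71}\log^2 n$ second term. Once both phases are in place, checking that the final outdegree is at most $(1+\epsilon/2+\epsilon/2)\tilde D=(1+\epsilon)\tilde D$ and that every subroutine meets its round bound in \congest{} is routine bookkeeping.
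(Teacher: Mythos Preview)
Your high-level split into a fractional phase via LP duality followed by a rounding phase is exactly right, and the $O((\log n)/\epsilon^2)$ cost of the fractional phase matches the paper (which obtains it precisely via the multiplicative-weights routine you mention parenthetically, not via an ad hoc diffusion whose convergence you leave unproved). The gap is entirely in the rounding phase.

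The paper does \emph{not} round via Eulerian-style scaling plus deterministic network decomposition. Instead it exploits that the fractional solution returned by Algorithm~\ref{alg:fractional_dual} has only $t=O(\min(\log\log n,\log\Delta)+\log(1/\epsilon))$ bits of precision (Lemma~\ref{lem:bits}), and rounds those bits one at a time, from least to most significant. At bit $k$, the edges whose two $\alpha$-values both have a $1$ in that bit form a subgraph $G_k$; rounding one endpoint up and the other down is exactly the problem of orienting $G_k$ so that $|\outdeg-\indeg|$ is small at every vertex. This \emph{directed splitting} is solved by a weak $\lfloor\deg(v)/3\rfloor$-orientation subroutine (Lemma~\ref{lem:sinkless}, via short augmenting paths, $O(\log^2 n)$ rounds in \congest) iterated through the path-decomposition boosting of \cite{GHKMSU17} (Lemma~\ref{lem:boost}), yielding discrepancy $\epsilon_3\deg(v)+O(1)$ in $O((1/\epsilon_3)^{\log_{3/2}2}\log^2 n)$ rounds. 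Setting $\epsilon_3=\Theta(\epsilon/t)$ so the errors sum and multiplying by $t$ iterations is what produces the exponents $1.71\approx\log_{3/2}2$ and $2.71=1+\log_{3/2}2$ in the statement.

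Your proposal cannot recover these constants: they are artifacts of the $2/3$ contraction rate in the path-decomposition recursion, not of any trade-off available in a network-decomposition-based scheme. More seriously, your description of the rounding (``halve the fractional outdegree of every internal vertex'', ``fix parity by matching odd-degree vertices along short paths'', then appeal to a decomposition primitive whose cost you do not specify) is not an algorithm --- you yourself flag that ``essentially all the work sits'' in points you leave open. What is missing is the concrete reduction from per-bit rounding to balanced orientation, together with a \congest{} balanced-orientation algorithm that avoids network decomposition altogether; the paper supplies both.
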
     
\begin{table}[h] \centering
\caption{Previous results on the low outdegree orientation problem in the distributed setting.}\label{tbl:lowdeg}
\begin{tabular}{|l|l|l|l|l|}
\hline
Reference                  & Time & Model & Approx. & Rand.~or Det.\\ \hline
Barenboim and Elkin     \cite{BE10}   &    $O((\log n) / \epsilon)$   & \textsf{CONGEST}      &    $2+\epsilon$           &              Det. \\ \hline
Ghaffari and Su       \cite{GS17}      &  $O((\log^4 n) / \epsilon^3)$    &   \textsf{LOCAL}    &      $1+\epsilon$         &     Rand.          \\ \hline
Fischer et al.~\cite{FGK17} &   $ 2^{O(\log^2 (1/\eps \cdot \log n ))} $ &   \textsf{LOCAL}    &       $1+\epsilon$        &   Det.            \\ \hline
Ghaffari et al.~\cite{GHK18}  & $O((\log^{10} n  \cdot \log^{5} \Delta )/ \epsilon^9)$     &   \textsf{LOCAL}    &     $1+\epsilon$          &   Det.            \\ \hline
Harris   \cite{Harris19}                   & $\tilde{O}((\log^6 n) / \epsilon^4)$     &    \textsf{LOCAL}   &     $1+\epsilon$                   & Det.          \\ \hline
Harris       \cite{Harris19}               & $\tilde{O}((\log^3 n) / \epsilon^3)$       &   \textsf{LOCAL}     &     $1+\epsilon$          & Rand.         \\ \hline
{\bf new}                         &  $\tilde{O}((\log^2 n)/\epsilon^2)$    &   \textsf{CONGEST}    &     $1+\epsilon$          &   Det.            \\ \hline
\end{tabular}
\end{table}
 Table \ref{tbl:lowdeg} summarizes previous algorithms for this problem in the \local and \congest models. It is worthwhile to note that $\lceil D \rceil \leq a(G) \leq \lceil D \rceil + 1$ \cite{PQ82}, where $a(G)$ is the arborcity of the graph, as several previous results were parameterized in terms of $a(G)$.  
 
 Barenboim and Elkin \cite{BE10} introduced the H-partition algorithm that obtains a $(2+\epsilon)a(G)$-orientation as well as a $(2+\epsilon)a(G)$ forest decomposition. Ghaffari and Su \cite{GS17} observed that the problem of computing  a $(1+\epsilon)a(G)$-orientation reduces to computing of maximal independent sets (MIS) on the conflict graphs formed by augmenting paths. The MIS can be computed efficiently by simulating Luby's randomized MIS on the conflict graph. Fischer et al.~\cite{FGK17} initiated the study of computing such MIS (i.e.~the maximal matching in hypergraphs) deterministically. They gave a deterministic quasi-polynomial (in $r$) algorithm for computing the maximal matching in rank $r$ hypergraphs, resulting in a $2^{O(\log^2 (1/\eps \cdot \log n))}$ round algorithm for the orientation problem. Later, the dependency on $r$ has been improved to polynomial by \cite{GHK18}, which results in a $O((\log^{10} n  \cdot \log^5 \Delta) /\epsilon^9)$ rounds algorithm for the orientation problem. Recently, the deterministic running time for the problem is further improved by Harris \cite{Harris19} to $\tilde{O}((\log^6 n) / \epsilon^4)$ as they developed a faster algorithm for the hypergraph maximal matching problem. It is unclear if the above approaches via maximal matching in hypergraphs can be implemented in the \congest model without significantly increase on the number of rounds.  The low outdegree orientation problem has also been  studied in the centralized context by \cite{GW92,BF99, Kowalik06,GKS14, KKPS14,BB20}.

\paragraph{Our methods and contributions in a nutshell} Our first contribution is a simple yet powerful observation that dense subgraphs have low-diameter approximations.  We first give a simple proof via low-diameter decomposition \cite{LinialS93,MillerPX13}  that there exists a subgraph with diameter $O((\log n)/\epsilon)$ that has density at least $(1-\epsilon)D$ (Lemma \ref{lem:low-diameter-ds}). Hence,  if each vertex examines its local neighborhood up to a small radius, at least one vertex gets a good estimate of the densest subgraph. With appropriate bookkeeping, this leads to our {\em deterministic} $O((\log n)/\epsilon)$-round algorithm for detecting dense subgraphs in the \local model. We complement this algorithm with a lower bound showing that $\Omega(1/\epsilon)$ rounds are necessary (Lemma \ref{lem:local-lowerbound}).

In the \congest model, the starting point for both problems of detecting dense subgraphs and low outdegree orientation is the adaptation of the multiplicative weights update algorithm of Bahmani et al.~\cite{BahmaniGM14}. Their algorithm solves the dual linear program for $\DensestSubgraph{\tilde{D}, \epsilon}$ (which is the linear program for low outdegree orientation) using the multiplicative weights method. They also showed how to round the dual program's fractional solution to find a dense subgraph. While it appears that their algorithm can be implemented directly in the \congest model, there are a few issues that we need to resolve.

A naive implementation of the algorithm in \cite{BahmaniGM14} to solve the dual linear program uses $O(\diameterG)$ rounds per iteration. Furthermore, for the dense subgraph detection problem $\DensestSubgraph{\tilde{D}, \epsilon}$, the rounding procedures in \cite{Charikar00, BahmaniGM14} are inherently global. This is because they require checking whether certain subgraphs have high enough density. These subgraphs may have large diameters. Our contribution here is to remove the dependence on $\diameterG$ using ideas from Lemma \ref{lem:low-diameter-ds} along with appropriate bookkeeping. This results in a randomized algorithm that runs in  $O((\log^3 n) / \epsilon^3)$ rounds instead of $O(\diameterG \cdot \poly(\log n,1/\epsilon))$ rounds. Our adaption also removes the explicit use of real-valued weights and only use integers, which can be transmitted easily in the \congest model.

For the low outdegree orientation problem, we still need to develop a procedure to round the dual fractional solution obtained from the multiplicative weights update method into an integral solution efficiently and deterministically in the \congest model. Our contribution here is twofold. We use the idea of recent rounding-type algorithms for distributed matching \cite{Fischer17, AKO18}, where we process the fractional solution bit-by-bit and round the solution at each bit scale. We show that each scale reduces to the directed splitting problem  where there are known algorithms for the \local model \cite{GS17,GHKMSU17}. We will show how to modify these directed splitting algorithms to run in the \congest model. We provide an analysis showing that the total error incurred by multiple bit scales is small so that each outdegree is at most $(1+\eps)D$ at the end of the rounding procedure.

\paragraph{Organization} In Section \ref{sec:local}, we present our deterministic algorithm for the dense subgraph problem in the \local model. Section \ref{sec:congest} presents a randomized algorithm for the dense subgraph problem in the \congest model. Section \ref{sec:loo} exhibits a deterministic algorithm for the low outdegree orientation problem in the \congest model. Some proofs are deferred to the Appendix.


\section{Deterministic Dense Subgraph Detection in the LOCAL Model}\label{sec:local}

In this section, we investigate the locality of the densest subgraph problem. In particular, we show how to solve $\DenseSubgraph{\tilde{D},\eps}$ \emph{deterministically} in the \local model. Combining with ideas in this section, we show that this problem can also be solved in the \congest model with randomization in the next section. 

\paragraph{The locality of the densest subgraph} We present an algorithm to solve $\DenseSubgraph{\tilde{D},\eps}$ in $O((\log n)/\epsilon)$ rounds deterministically in the \local model. We first give a structural lemma showing that for some sufficiently large constant $K$, there exists a subgraph with diameter at most $K/\eps \cdot \log n$ that has density at least $(1-\eps)D$.

\begin{lemma} [Densest subgraph's locality] \label{lem:low-diameter-ds}
 For all simple graphs, there exists a subgraph with diameter at most $K/\eps \cdot \log n$ for some sufficiently large constant $K$ that has density at least $(1-\eps)D$.
\end{lemma}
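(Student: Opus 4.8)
The plan is to take a maximum-density subgraph of $G$, chop it into low-diameter pieces with a low-diameter decomposition that severs only an $\epsilon$-fraction of its edges, and then argue by an averaging (mediant) inequality that at least one piece still has density at least $(1-\epsilon)D$.

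First I would fix a subgraph $H^\star=(V^\star,E^\star)$ of $G$ attaining the maximum density, so that $|E^\star|=D\cdot|V^\star|$, and set $N:=|V^\star|\le n$. Next I would invoke a strong-diameter low-diameter decomposition of $H^\star$ (the ball-carving argument of Linial--Saks / Awerbuch--Peleg, or a strong-diameter version of the Miller--Peng--Xu decomposition): with parameter $\beta=\epsilon$ there is a partition of $V^\star$ into clusters $C_1,\dots,C_k$ such that (i) each induced subgraph $H^\star[C_i]$ has diameter $O((\log N)/\beta)=O((\log n)/\epsilon)$, and (ii) the number of edges of $H^\star$ whose endpoints lie in two different clusters is at most $\beta|E^\star|=\epsilon|E^\star|$. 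Since this decomposition is only required to \emph{exist} (it is not executed by any algorithm), I can use the deterministic/existential form freely; the $\beta m$ bound on cut edges is exactly the pigeonhole over radii $0,1,\dots,O((\log n)/\beta)$ in the ball-carving analysis.

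The accounting is then short. The clusters partition $V^\star$, so $\sum_i |V(C_i)| = N$; and since at most $\epsilon|E^\star|$ edges are cut, the edges retained inside clusters satisfy $\sum_i |E(H^\star[C_i])| \ge (1-\epsilon)|E^\star| = (1-\epsilon)DN$. Writing $a_i=|E(H^\star[C_i])|$ and $b_i=|V(C_i)|\ge 1$, the ratio $\big(\sum_i a_i\big)/\big(\sum_i b_i\big)$ is a weighted average of the values $a_i/b_i$ with weights $b_i$, hence is at most $\max_i a_i/b_i$; therefore some cluster $C_{i^\star}$ has $d(H^\star[C_{i^\star}]) = a_{i^\star}/b_{i^\star} \ge (1-\epsilon)D$. (In particular $C_{i^\star}$ is not a lone isolated vertex, since its density is positive whenever $D>0$; the case $D=0$ is vacuous.) Choosing $K$ to be the constant hidden in (i), $H^\star[C_{i^\star}]$ is the desired subgraph: it is a subgraph of $G$, has diameter at most $(K/\epsilon)\log n$, and has density at least $(1-\epsilon)D$.

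The one point I would be most careful about is the strong- versus weak-diameter guarantee: the statement asks each retained cluster, viewed as an induced subgraph, to have small diameter \emph{in its own right}, not merely small diameter as measured inside $H^\star$ or $G$. This is why I would insist on a strong-diameter decomposition (each cluster carved as a bounded-radius BFS ball, so its induced diameter is automatically $O(\log n/\beta)$); a plain weak-diameter decomposition would need an extra argument to bound the induced diameter. Everything else — the choice $\beta=\epsilon$, the mediant inequality, and fixing the absolute constant $K$ — is routine.
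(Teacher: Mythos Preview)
Your proposal is correct and follows essentially the same approach as the paper: apply a low-diameter decomposition to the densest subgraph and use an averaging argument over the clusters. The paper phrases the averaging step as a proof by contradiction (assume every cluster has density below $(1-\epsilon)D$ and reach $\sum_i m_i^\star < (1-\epsilon)m^\star$), whereas you invoke the mediant inequality directly, but these are the same computation; your explicit insistence on a strong-diameter decomposition is a point the paper leaves implicit.
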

\begin{proof}
 It can be shown that  for any simple graph $G$ with $n$ vertices and $m$ edges, we can decompose $G$ into disjoint components such that each component has diameter at most $K/\eps \cdot \log n$ for some sufficiently large constant $K$ and furthermore the number of inter-component edges is at most $\eps m$ \cite{LinialS93, MillerPX13,Awerbuch85} (see also Theorem \ref{thm:low-diameter-decomposition}). This is known as the {\em low-diameter decomposition}.

Consider the densest subgraph ${H^*} \subseteq G$ with $n^*$ vertices and $m^*$ edges.  We apply the low-diameter decomposition to ${H^*} $ and let the components be $H_1^*,\ldots,H_t^*$. Let $n_i^*$ and $m_i^*$ be the number of vertices and edges in $H_i^*$ respectively. Suppose that $d(H_i^*) < (1-\eps)D$ for all $i$. Then, $m_i^*/n_i^* < (1-\eps) D$ which implies $m_i^* < (1-\eps) m^* n_i^*/n^*$. Thus,
\begin{align*}
\sum_{i = 1}^t m_i^* <  (1-\eps) \sum_{i=1}^t \frac{ m^* n_i^*}{n^*} = (1-\eps) m^*~.
\end{align*}
This implies that the number of inter-component edges is more than $\eps m^*$ which is a contradiction. Therefore, at least one component $H_j^*$ must have density  $d(H_j^*) \geq (1-\eps)D$.
\end{proof}

Using Lemma \ref{lem:low-diameter-ds}, we can design a \local algorithm to solve $\DenseSubgraph{\tilde{D},\eps}$ in $O( (\log n)/\epsilon)$ rounds. See Algorithm \ref{alg:local-1}.

 \begin{algorithm}
\caption{Distributed Algorithm for $\DenseSubgraph{\tilde{D},\eps}$ in the \local model}\label{alg:local-1}
\begin{algorithmic}[1]\small
\State Initialize $h_v \leftarrow 0$ for all vertices $v$.  
\State Let $r = K/\eps \cdot \log n$ for some sufficiently large constant $K$.
\State  Each vertex $v$ collects the subgraph induced by its $r$-neighborhood   $N_r(v)=\{ u: \dist(v,u) \leq r \}$ and  computes the densest subgraph $H(v)$ in $G[N_r(v)]$, i.e., the subgraph induced by $N_r(v)$. \label{ln:local-collect}
\If{$d(H(v)) \geq (1-\eps)\tilde{D}$}
\State $v$ becomes an active vertex and collects its $2r$-neighborhood  $N_{2r}(v)$. \label{ln:local-collect-2}
\EndIf

\If{an active vertex $v$ has the smallest ID among $N_{2r}(v)$}
\State $v$ becomes a black vertex.
\EndIf
\State Each black vertex $v$ broadcasts $H(v)$ to $N_{r}(v)$ and set $h_u \leftarrow 1$ \text{ for all $u$ in $V(H(v))$}. \label{ln:local-collect-3}
\end{algorithmic}
\end{algorithm}
\begingroup
\begin{NoHyper}
\def\thetheorem{\ref{thm:main-local-1}}
\begin{theorem}
There exists a deterministic algorithm for $\DensestSubgraph{\tilde{D}, \epsilon}$ that runs in $O((\log n)/ \epsilon)$ rounds in the \local model.
\end{theorem}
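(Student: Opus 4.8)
The plan is to analyze Algorithm \ref{alg:local-1} and verify two things: that it runs in $O((\log n)/\epsilon)$ rounds, and that it correctly solves $\DenseSubgraph{\tilde D, \epsilon}$. The round complexity is immediate: the radii used are $r = K/\epsilon \cdot \log n$ and $2r$, and each of lines \ref{ln:local-collect}, \ref{ln:local-collect-2}, and \ref{ln:local-collect-3} is a single ball-gathering or broadcast within such a radius, which in the \local model costs $O(r) = O((\log n)/\epsilon)$ rounds. So the whole argument reduces to correctness.

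For correctness I would argue two claims. First, the output $H = \{u : h_u = 1\}$ always satisfies $d(H) \ge (1-\epsilon)\tilde D$. For this, observe that $h_u$ is set to $1$ only in line \ref{ln:local-collect-3}, when some black vertex $v$ broadcasts $H(v)$ with $u \in V(H(v))$. The key point is that the black vertices form an ``independent set at distance $2r$'': if $v$ and $v'$ are both black, then $\dist(v,v') > 2r$, since otherwise each would lie in the other's $N_{2r}$-ball and only the one with smaller ID could have become black. Consequently the subgraphs $H(v)$ over black vertices $v$ are vertex-disjoint, and moreover no edge of $G$ joins $V(H(v))$ to $V(H(v'))$ for distinct black $v, v'$ (such an edge would put the two balls within distance $2r+1$... actually within distance much less; $H(v) \subseteq N_r(v)$, so an edge between $H(v)$ and $H(v')$ forces $\dist(v,v') \le 2r+1$ — to be safe one takes the constant in $2r$ slightly larger, or argues with $N_{2r+1}$). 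Since $H$ is the disjoint union of the $H(v)$'s with no edges between them, $d(H)$ is a weighted mediant of the $d(H(v))$'s, hence $d(H) \ge \min_v d(H(v)) \ge (1-\epsilon)\tilde D$, because every black vertex is in particular active, and activity of $v$ means exactly $d(H(v)) \ge (1-\epsilon)\tilde D$.

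Second, I would show that if $\tilde D \le D$ then $H$ is non-empty. By Lemma \ref{lem:low-diameter-ds} there is a subgraph $H'$ with $\diameter{H'} \le K/\epsilon \cdot \log n = r$ and $d(H') \ge (1-\epsilon) D \ge (1-\epsilon)\tilde D$. Pick any vertex $w \in V(H')$. Since $\diameter{H'} \le r$, the whole of $H'$ lies inside $N_r(w)$, so $H(w)$ — the densest subgraph of $G[N_r(w)]$ — has density $d(H(w)) \ge d(H') \ge (1-\epsilon)\tilde D$, making $w$ active. Hence the set of active vertices is non-empty, and therefore so is the set of black vertices: in any non-empty set, the vertex of globally smallest ID among active vertices within its own $2r$-ball exists (more carefully, take an active vertex $v$ minimizing ID; every active vertex in $N_{2r}(v)$ has ID at least that of $v$, so $v$ becomes black). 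That black vertex broadcasts a non-empty $H(v)$ and sets $h_u = 1$ for its members, so $H \ne \emptyset$.

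The main obstacle is the geometric/combinatorial bookkeeping in the first claim — making sure the black subgraphs are genuinely disjoint \emph{and} mutually non-adjacent, so that densities combine by the mediant inequality rather than degrading. This is where the factor $2r$ (as opposed to $r$) in lines \ref{ln:local-collect-2}--\ref{ln:local-collect-3} is used, and one must be a little careful that ``diameter'' is measured in $G$ versus in the subgraph and that an edge between two balls of radius $r$ around centers at distance $> 2r$ is impossible; absorbing the off-by-one into the constant $K$ handles this cleanly. Everything else is routine.
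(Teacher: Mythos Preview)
Your proposal is correct and follows essentially the same route as the paper's proof: round count from the radii, Lemma~\ref{lem:low-diameter-ds} to guarantee an active vertex when $\tilde D\le D$, the $2r$-separation to make the black subgraphs vertex-disjoint, and a mediant-type inequality to lower-bound the density of their union.

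One remark: the issue you flag as the ``main obstacle'' --- that the black subgraphs be mutually non-adjacent so the densities combine cleanly --- is not actually needed, and the paper does not argue it. Vertex-disjointness alone is enough: if $A\cap B=\emptyset$ then $E(A\cup B)\supseteq E(A)\cup E(B)$ (any cross-edges only add to the numerator), so
\[
d(G[A\cup B])\;\ge\;\frac{|E(A)|+|E(B)|}{|A|+|B|}\;\ge\;\min\bigl(d(G[A]),\,d(G[B])\bigr).
\]
Thus no off-by-one adjustment to the radius $2r$ is required, and your proof goes through without the hedging.
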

\addtocounter{theorem}{-1}
\end{NoHyper}
\endgroup
\begin{proof}

Consider Algorithm \ref{alg:local-1}. The number of rounds is clearly $O(r) = O(1/\eps \cdot \log n)$, since it is dominated by the steps in Lines \ref{ln:local-collect}, \ref{ln:local-collect-2}, and \ref{ln:local-collect-3} which is $O(r)$. Appealing to Lemma \ref{lem:low-diameter-ds}, if $\tilde{D} \leq D$, then there must be a subgraph $C$ with diameter at most $r$ whose density is at least $(1-\eps)D \geq (1-\eps)\tilde{D}$. Therefore, at least one vertex must be active. Among the active vertices, there must be a black vertex, i.e., the active vertex with the smallest ID. We therefore have a non-empty output.
 
The next observation is that if $u$ and $v$ are black vertices, then $H(v)$ and $H(u)$ are disjoint. Otherwise, there is a path of length at most $2r$ from $u$ to $v$ which leads to a contradiction. This is because $u$ and $v$ cannot both be black vertices if $\dist(u,v) \leq 2r$. Furthermore, it is easy to see that if two subgraphs $G[A]$ and $G[B]$, where $A,B \subset V$ are disjoint, have density at least $(1-\eps)\tilde{D}$, then $d(G[{A \cup B}]) \geq (1-\eps)\tilde{D}$. To see this, 
\[d(G[A \cup B]) \geq \frac{|E(A)| + |E(B)|}{|A|+|B|} \geq \frac{(1-\eps)\tilde{D}(|A|+|B|)}{|A|+|B|} = (1-\eps)\tilde{D}~.\]

Let the set of black vertices be $B$. Since we argued that $H(v)$'s are disjoint for  $v \in B$ and the output subgraph is $H=\cup_{v \in B} H(v)$,  we deduce that $d(H) \geq (1-\eps)\tilde{D}$. 

Finally, we need to argue that the algorithm is correct for when $\tilde{D}>D$. Note that if the output subgraph is non-empty, its density is at least $(1-\eps)\tilde{D}$. Hence, if $\tilde{D} > D$, the algorithm may output an empty subgraph or a subgraph with density $(1-\eps)\tilde{D}$ which are both acceptable. \end{proof}

We show that Theorem \ref{thm:main-local-1} is tight in terms of $\eps$ up to a constant. 

\begin{lemma} \label{lem:local-lowerbound}
Let $0 < \epsilon < 0.1$. Any (randomized)  algorithm that solves $\DenseSubgraph{\tilde{D}, \eps}$ correctly with probability at least 0.51 requires more than $1/(10\eps)$ rounds.
\end{lemma}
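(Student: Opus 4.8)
The plan is to prove the lower bound via an indistinguishability argument on two graphs that look identical in small neighborhoods but have different answers to $\DenseSubgraph{\tilde D, \epsilon}$. Concretely, I would take $\tilde D$ to be some fixed value (say a convenient constant like $\tilde D = 1$, or more precisely tuned so the gap below makes sense), and construct two families of graphs: a ``yes'' instance $G_{\mathrm{yes}}$ which is a long cycle $C_N$ on $N = \Theta(1/\epsilon)$ vertices, whose only subgraph of density $\tilde D$ is essentially the whole cycle (a cycle has density exactly $1$, and every proper subgraph — a union of paths — has density strictly less than $1$, in fact at most $1 - 1/N < 1-\epsilon$ once $N$ is small enough), and a ``no'' instance $G_{\mathrm{no}}$ which is a single long path $P_N$ (or two half-cycles), where $D < \tilde D$ so the correct output is the empty set. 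Both graphs are locally a path of degree $2$, so a vertex running for fewer than $N/2$ rounds cannot tell which instance it is in, nor where it sits along the structure.

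The key steps, in order: (1) Fix $N = \lfloor 1/(5\epsilon)\rfloor$ (chosen so that the density of any proper subgraph of $C_N$ is below $(1-\epsilon)\tilde D$), set $\tilde D = 1$, and let $G_{\mathrm{yes}} = C_N$, $G_{\mathrm{no}} = P_N$. Verify: $d(C_N) = 1 = \tilde D$, so on $G_{\mathrm{yes}}$ a correct algorithm must output a nonempty $H$ with $d(H) \ge (1-\epsilon)\tilde D$; since every subgraph of $C_N$ other than $C_N$ itself is a disjoint union of paths with density $< 1 - \epsilon$, the output must be all of $V$, i.e.\ every vertex outputs $h_v = 1$. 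On $G_{\mathrm{no}} = P_N$ we have $D = (N-1)/N < 1 = \tilde D$, and moreover no subgraph has density $\ge (1-\epsilon)$, so the only correct output is $h_v = 0$ for all $v$. (2) Argue indistinguishability: pick a vertex $v$ and a radius-$t$ ball for $t = \lfloor 1/(10\epsilon)\rfloor$; by the ID-assignment adversary, we can label both graphs so that the radius-$t$ neighborhood of $v$ in $C_N$ and in $P_N$ (for a $v$ far from the endpoints of $P_N$) are isomorphic as ID-labeled graphs — both are just a path of $2t+1$ vertices with the same IDs. (3) Conclude: in a $t$-round algorithm, $v$'s output is a (possibly randomized) function only of its radius-$t$ view, so $v$ outputs $1$ with the same probability in both instances. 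But correctness demands $v$ output $1$ w.p.\ $\ge 0.51$ on $G_{\mathrm{yes}}$ and $0$ w.p.\ $\ge 0.51$ on $G_{\mathrm{no}}$ (to succeed with probability $\ge 0.51$ overall, $v$ must individually be correct with probability $\ge 0.51$), a contradiction since $2t+1 < N$. Hence any algorithm succeeding w.p.\ $\ge 0.51$ needs $> t = 1/(10\epsilon)$ rounds.

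The main obstacle — really a bookkeeping point rather than a deep one — is getting the constants to line up: I need $N$ small enough that the ``best proper subgraph'' density $1 - 1/N$ (or the density of an optimal union of internal paths, which one should compute carefully: for a union of $k$ paths covering $j$ vertices of the cycle the density is $(j-k)/j$, maximized by $k=1$ giving $(N-1)/N$) falls strictly below $(1-\epsilon)\tilde D = 1-\epsilon$; this forces $1/N < \epsilon$, consistent with $N = \lfloor 1/(5\epsilon)\rfloor$ for $\epsilon < 0.1$. Simultaneously I need the view radius $t$ to satisfy $2t+1 \le N$ so that the two neighborhoods genuinely coincide, which is why $t = \lfloor 1/(10\epsilon)\rfloor$ works against $N \approx 1/(5\epsilon)$. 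One subtlety worth handling explicitly is the randomized case: a single $v$ that is correct on neither instance with probability more than $1/2$ cannot have a view-distribution (which is identical across the two instances) that is $\ge 0.51$-consistent with both required outputs, since those outputs are complementary; this is the standard two-point / Yao-style argument and I would spell it out for the designated vertex $v$ rather than for all vertices at once.
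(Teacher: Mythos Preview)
Your proposal is correct and follows essentially the same approach as the paper: both use a cycle $C_N$ versus a path $P_N$ on $N = \Theta(1/\epsilon)$ vertices, exploit that the only valid output on the cycle is ``all $1$'' and on the path is ``all $0$,'' and argue that a central vertex cannot distinguish the two instances in $\lfloor 1/(10\epsilon) \rfloor$ rounds (the paper picks $\tilde D = 1-\epsilon$ and $\ell = 4/(10\epsilon)+1$, and handles randomization via Yao's minimax rather than your direct output-distribution argument, but the idea is identical). One small typo to fix: where you write ``this forces $1/N < \epsilon$,'' you mean $1/N > \epsilon$ (equivalently $N < 1/\epsilon$), which is indeed what your choice $N = \lfloor 1/(5\epsilon)\rfloor$ satisfies; you should also double-check the off-by-one in $2t+1 \le N$ when finalizing constants.
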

\begin{proof}
Consider deterministic algorithms for $\DenseSubgraph{\tilde{D},\eps}$ where $\tilde{D} = 1 - \eps$. Without loss of generality, assume $1/(10\eps)$ is an integer and let $\ell = 4/(10\eps)+1$.  Consider $\ell$ vertices $1,\ldots,\ell$ where $(i,i+1) \in E$ for $1 \leq i \leq \ell-1$. We consider two cases. Case 1: $(\ell,1) \in E$ in which the graph (called $G_1$) is a cycle. Case 2: $(\ell,1) \notin E$ in which the graph (called $G_2$) is a chain of $\ell$ vertices. Let $v = \floor{\ell/2}+1$.  In both cases, the $(1/(10\eps))$-neighborhood of $v$ is the same and therefore $h_v$ must be the same regardless of whether the network is $G_1$ or $G_2$. We observe that a chain of  $t$ vertices has density $1-1/t$.

If $h_v = 0$, then let  the underlying graph be $G_1$. Then $D=1 > 1-\eps = \tilde{D}$, and therefore the output must be non-empty. The only correct output is when every vertex outputs 1 since otherwise the output subgraph's density is at most $1-1/(4/(10\eps)+1) < (1-\eps)^2 = (1-\eps)\tilde{D}$ for $\epsilon < 0.1$. Hence, the algorithm fails.

If $h_v = 1$, then let  the underlying graph be $G_2$.  Then $D = 1-1/(4/(10\eps)+1)  <  (1-\eps)\tilde{D}$. Then, the only correct output is that every vertex outputs 0. Therefore, the algorithm fails.

For a randomized lower bound, we choose the above two inputs with probability 0.5 each and therefore the probability that a deterministic algorithm is correct is at most 0.5. By Yao's minimax principle, no randomized algorithm outputs correctly with probability more than 0.5.
\end{proof}

We remark that an interesting {\em open question} is whether the $\log n$ factor is necessary for either the lower bound or the upper bound.

\paragraph{The locality of the directed densest subgraph} In Appendix \ref{sec:directed-DS}, we show that similar locality result also holds for directed densest subgraph. 

\section{Dense Subgraph Detection in the CONGEST Model}\label{sec:congest}

\paragraph{Relating the densest subgraph problem and the lowest outdegree orientation problem} We show how to find  a) a dense  subgraph and b) a low outdegree orientation (in Section \ref{sec:loo}) by first solving the same feasibility LP. Let us first consider the  LP formulations in Figure \ref{fig:linear-programs}.

\begin{figure}[h]
\noindent
{\centering
\fbox{
\begin{minipage}[t][4.5cm]{.45\textwidth}
~
  (\Primal) Maximum Density Subgraph
  ~
\begin{align*}
\qquad \textup{maximize} \qquad & \sum_{e \in E} x_{e} \\
 \textup{subject to} \qquad & \\
& \sum_{v \in V} y_v = 1 \\
\forall e=uv \in E, \qquad &x_{e} \leq y_u \text{ and } x_{e} \leq y_v\\
\forall e \in E, u \in V, \qquad & y_{e}, x_{u} \geq 0 ~.
\end{align*}
\end{minipage}}
\fbox{
\begin{minipage}[t][4.5cm]{.45\textwidth}
 
 (\Dual) Lowest Outdegree Orientation
~
\begin{align*}
\qquad \textup{minimize} \qquad & z \\
 \textup{subject to} \qquad & \\
\forall e=uv \in E, \qquad & \alpha_{eu} + \alpha_{ev} \geq 1 \\
\forall e \in E, u \in V, \qquad & \sum_{e \ni u} \alpha_{eu} \leq z \\
\forall e=uv \in E, \qquad &  \alpha_{eu},\alpha_{ev}\geq 0~.
\end{align*}
\end{minipage}
}
}
\caption{Linear programs for densest subgraph and fractional lowest outdegree orientation.}\label{fig:linear-programs}
\end{figure}

Given a subgraph $H \subseteq G$ of size $k$, in the primal, we can set $y_v = 1/k$ for all $v \in V(H)$ and $x_{e}=1/k$ for all edges $e \in E(H)$ while setting other variables to 0. Then, $\sum_{e \in E} x_{e} = |E(H)|/k= d(H)$. In fact, the optimal value of the LP is exactly the maximum subgraph density $D$. Charikar gave a rounding algorithm that recovers the densest subgraph  \cite{Charikar00}. 

We observe that the dual models the lowest outdegree orientation problem. In particular, if an edge $e=uv$ is oriented from $u$ to $v$ then we set $\alpha_{eu}=1$ and $\alpha_{ev}=0$. By duality, the dual is fractionally feasible if and only if $z \geq D$.  

Now we consider the feasibility versions of the programs. We say a primal solution is a solution satisfying $\PRIMAL{z}$ if it is a feasible solution whose objective function is at least $z$. Similarly, we say a dual solution is a  solution satisfying $\DUAL{z}$ if it is a feasible dual solution whose objective function is at most $z$. 

\paragraph{Adapting the algorithm of Bahmani et al. \cite{BahmaniGM14}} Algorithm \ref{alg:fractional_dual} and Algorithm \ref{alg:mwu-densest-subgraph} are adapted from the multiplicative weights update approach of \cite{BahmaniGM14} for solving \Dual{} and \Primal. We modify them in a way so that we only have to operate with integers instead of real-valued weights. This is more suitable for the \congest model because it may take $\omega(1)$ rounds to transfer a real-valued weight over an edge. Similar to the tree packing method for minimum cuts \cite{Thorup07}, this is another example of a combinatorial algorithm  derived from multiplicative weights update method, where the weights are only used in the analysis but not in the algorithms. However, as mentioned in the first section, this adaptation has a dependence on $\diameterG$. We will subsequently show how to remove this dependence when solving $\DensestSubgraph{\tilde{D}, \epsilon}$.

Algorithm \ref{alg:fractional_dual} and Algorithm \ref{alg:mwu-densest-subgraph} have the same structure except for the outputs. Both algorithms consist of $T=O((\log n) /\epsilon^2)$ iterations. Every edge maintains a load $\ell(e)$ throughout the algorithms. In each iteration $t$, each vertex $u$ has $z$ units budget. Each vertex $u$ distributes the $z$ units of budget to the $\alpha^{(t)}_{eu}$ variables of the incident edges with the lowest $\lceil z/2 \rceil$ load. It allocates 2 units of budget to the incident edges with the lowest $\lceil z/2 \rceil - 1$ load and the remaining budget to the other edge. Then the load of an edge is updated by adding the allocation from both endpoints.  In Algorithm \ref{alg:fractional_dual}, we output the average of the allocations, $\sum_{t} \alpha^{(t)}_{eu} /T$, multiplied by $(1+2\epsilon)$.  In Algorithm \ref{alg:mwu-densest-subgraph}, in each iteration, the loads of the edges will be used as a guide to find a dense subgraph.

We can implement the algorithms by only sending integers across each link. This is because the value of each $\alpha_{eu}$ and $\ell(e)$ is a summation of an integer and an integer multiple of $z - 2(\lceil z/2 \rceil -1)$. A pair of integers will be enough to express each value involved in the algorithms.

 \begin{algorithm}
\caption{\textsc{Fractional\_Dual}($z$, $\epsilon$)}\label{alg:fractional_dual}
\begin{algorithmic}[1]\small
\State Let $T = K \cdot {(1/\eps^2) \cdot \log n }$ for some sufficiently large constant $K$.
\State Initialize the load $\ell(e) \leftarrow 0$ for each edge $e$.
\For{$t = 1, 2,\ldots , T$}
\For{each vertex $u$}
	\State Let $e_1, e_2, \ldots, e_{\deg(u)}$ be the edges adjacent to $u$ where $\ell(e_1) \leq \ell(e_2) \ldots \leq \ell(e_{\deg(u)})$.
	\State Set $\alpha^{(t)}_{e_i u} \leftarrow 2$ for $i = 1, \ldots, \min(\lceil z/2 \rceil - 1, \deg(u))$.
	\State Set $\alpha^{(t)}_{e_{\lceil z/2 \rceil u} } \leftarrow z - 2 \cdot (\lceil z/2 \rceil - 1)$ if $\deg(u) \geq \lceil z/2 \rceil$.
\EndFor
\For{each edge $e = uv$}
	\State Set $\ell(e) \leftarrow \ell(e) + \alpha^{(t)}_{eu} + \alpha^{(t)}_{ev}$. \label{ln:update-l-dual}
\EndFor
\EndFor
\State Return $\displaystyle \alpha_{eu} = \frac{\sum_{t=1}^T \alpha^{(t)}_{eu}}{T} \cdot (1+12\epsilon)$.
\end{algorithmic}
\end{algorithm}
 \begin{algorithm}
\caption{\textsc{Integral\_Primal}($z$, $\epsilon$)}\label{alg:mwu-densest-subgraph}
\begin{algorithmic}[1]\small
\State Let $T = K \cdot {(1/\eps^2) \cdot \log n }$ for some sufficiently large constant $K$.
\State Initialize the load $\ell(e) \leftarrow 0$ for all $e$.
\For{$t = 1, 2,\ldots , T$}
\For{each vertex $u$}
		\State Let $e_1, e_2, \ldots, e_{\deg(u)}$ be the edges adjacent to $u$ where $\ell(e_1) \leq \ell(e_2) \ldots \leq \ell(e_{\deg(u)})$.
	\State Set $\alpha^{(t)}_{e_i u} \leftarrow 2$ for $i = 1, \ldots, \min(\lceil z/2 \rceil - 1, \deg(u))$.
	\State Set $\alpha^{(t)}_{e_{\lceil z/2 \rceil u} } \leftarrow z - 2 \cdot (\lceil z/2 \rceil - 1)$ if $\deg(u) \geq \lceil z/2 \rceil$.
\EndFor
\State Let $\ell_{\min} = \min_{e \in E} \lfloor \ell(e) \rfloor$ and $\ell_{\max} = \ell_{\min} + \lceil \frac{1}{\epsilon}\log \frac{2m}{\epsilon} \rceil$. \label{ln:l-min} \For {each integer $\ell \in [\ell_{\min}, \ell_{\max}]$} \label{ln:checking_start}
\State Let $V'_\ell$ be the set of vertices with at  least $\lceil z/2 \rceil$ incident edges $e$ with $\lceil \ell(e) \rceil \leq \ell$.
\State Test if $G[V'_\ell]$ is a graph of density at least $(1-3\epsilon)z$. 
\State If yes, output $G[V'_\ell]$ and terminate.
\EndFor \label{ln:checking_end}

\For{each edge $e = uv$}
	\State Set $\ell(e) \leftarrow \ell(e) + \alpha^{(t)}_{eu} + \alpha^{(t)}_{ev}$. \label{ln:update-l-primal}

\EndFor
\EndFor
\end{algorithmic}
\end{algorithm}

We will show that if we run the algorithms on the same input with the same parameters $z$ and $\epsilon$, then at least one of the following must be true:
\begin{itemize} 
\item Algorithm \ref{alg:fractional_dual} returns a  solution satisfying $\DUAL{(1+12\epsilon)z}$. 
\item  Algorithm \ref{alg:mwu-densest-subgraph} outputs a subgraph of density at least $(1-3\epsilon)z$, i.e., an integral solution for $\PRIMAL{(1-3\epsilon)z}$.
\end{itemize}

Define $w_e = (1-\epsilon)^{\ell(e)}$ to be the weight associated with the edge $e$. We will use $\ell^{(t)}(e)$ and $w^{(t)}_e$ if we are referring to the load or the weight of an edge $e$ at the beginning of iteration $t$. We often use $u$ and $v$ to denote the endpoints of an edge $e$ when the context is clear. 

If it is the case that $\sum_{e \in E}w^{(t)}_e (\alpha^{(t)}_{eu} + \alpha^{(t)}_{ev}) \geq \sum_{e \in E} w^{(t)}_e$ for every iteration $t$, then we show that Algorithm \ref{alg:fractional_dual} returns a feasible solution for $\DUAL{(1+2\epsilon)z}$. The following lemma is a standard derivation of the multiplicative weights update method \cite{AHS12, Young95}. We defer the proof to the appendix.

\begin{lemma}\label{lem:dual_feasible}
Let $\epsilon > 0$ be a sufficiently small constant. Suppose that $\sum_{e \in E}w^{(t)}_e (\alpha^{(t)}_{eu} + \alpha^{(t)}_{ev}) \geq \sum_{e \in E} w^{(t)}_e$  for every iteration $t$ in Algorithm \ref{alg:fractional_dual}. Then the dual variables $\{\alpha_{eu}\}$ returned by Algorithm \ref{alg:fractional_dual} must be a feasible solution satisfying $\DUAL{(1+12\epsilon)z}$. 
\end{lemma}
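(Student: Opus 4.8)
The plan is to track the potential $\Phi^{(t)} = \sum_{e \in E} w^{(t)}_e = \sum_{e \in E} (1-\epsilon)^{\ell^{(t)}(e)}$ across iterations, exactly as in the standard multiplicative weights analysis. First I would observe that in one iteration each edge $e=uv$ has its load increased by $\alpha^{(t)}_{eu} + \alpha^{(t)}_{ev}$, so $w^{(t+1)}_e = w^{(t)}_e (1-\epsilon)^{\alpha^{(t)}_{eu} + \alpha^{(t)}_{ev}}$. Since each $\alpha^{(t)}_{eu} \in [0,2]$ and hence $\alpha^{(t)}_{eu} + \alpha^{(t)}_{ev} \in [0,4]$, I would use the convexity bound $(1-\epsilon)^{x} \le 1 - (1 - (1-\epsilon)^4)x/4 \le 1 - \epsilon x$ valid for $x \in [0,4]$ and $\epsilon < 1/4$ (the constant here is where the $1/4$ bound on $\epsilon$ and the window $[0,4]$ get used; the precise constant can be absorbed into the $(1+2\epsilon)$ slack). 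Combining over all edges and using the hypothesis $\sum_e w^{(t)}_e(\alpha^{(t)}_{eu}+\alpha^{(t)}_{ev}) \ge \sum_e w^{(t)}_e = \Phi^{(t)}$ gives $\Phi^{(t+1)} \le \Phi^{(t)}(1 - \epsilon \cdot (\text{something} \ge 1)) $, i.e. geometric decay; more carefully, $\Phi^{(t+1)} \le \Phi^{(t)} - \epsilon \sum_e w^{(t)}_e(\alpha^{(t)}_{eu}+\alpha^{(t)}_{ev}) \le (1-\epsilon)\Phi^{(t)}$ up to lower-order terms, hence $\Phi^{(T+1)} \le (1-\epsilon)^{T}\Phi^{(1)} = (1-\epsilon)^T m$.

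Next I would extract a per-edge consequence: for any fixed edge $e$, $w^{(T+1)}_e = (1-\epsilon)^{\ell^{(T+1)}(e)} \le \Phi^{(T+1)} \le (1-\epsilon)^T m$. Taking $\log_{1-\epsilon}$ (which reverses the inequality) yields $\ell^{(T+1)}(e) \ge T - \log_{1-\epsilon} m = T - \frac{\ln m}{-\ln(1-\epsilon)} \ge T - \frac{\ln m}{\epsilon}$. Since $\ell^{(T+1)}(e) = \sum_{t=1}^T (\alpha^{(t)}_{eu} + \alpha^{(t)}_{ev})$, dividing by $T$ gives $\frac{1}{T}\sum_t (\alpha^{(t)}_{eu} + \alpha^{(t)}_{ev}) \ge 1 - \frac{\ln m}{\epsilon T}$. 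Choosing $T = K(1/\epsilon^2)\log n$ with $K$ large enough makes $\frac{\ln m}{\epsilon T} \le \epsilon$ (using $m \le n^2$), so the averaged dual variables $\bar\alpha_{eu} := \frac{1}{T}\sum_t \alpha^{(t)}_{eu}$ satisfy $\bar\alpha_{eu} + \bar\alpha_{ev} \ge 1 - \epsilon$ for every edge $e=uv$. Scaling by $(1+2\epsilon)$ as in the return statement of Algorithm~\ref{alg:fractional_dual} then gives $\alpha_{eu} + \alpha_{ev} = (1+2\epsilon)(\bar\alpha_{eu}+\bar\alpha_{ev}) \ge (1+2\epsilon)(1-\epsilon) \ge 1$, verifying the covering constraint.

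For the budget (packing) constraint I would argue directly from the allocation rule: in each iteration $t$, vertex $u$ sets $\alpha^{(t)}_{e_i u} = 2$ for at most $\lceil z/2\rceil - 1$ edges and $\alpha^{(t)}_{e u} = z - 2(\lceil z/2\rceil - 1) \le 2$ for at most one more edge, so $\sum_{e \ni u} \alpha^{(t)}_{eu} = 2(\lceil z/2\rceil - 1) + (z - 2(\lceil z/2 \rceil -1)) = z$ (and is $\le z$ in the degenerate case $\deg(u) < \lceil z/2\rceil$). Averaging over $t$ preserves this: $\sum_{e \ni u} \bar\alpha_{eu} \le z$. After scaling, $\sum_{e \ni u} \alpha_{eu} = (1+2\epsilon)\sum_{e\ni u}\bar\alpha_{eu} \le (1+2\epsilon)z$, so the returned solution is feasible with objective value at most $(1+2\epsilon)z$, i.e. it satisfies $\DUAL{(1+2\epsilon)z}$. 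Nonnegativity is immediate since all $\alpha^{(t)}_{eu} \ge 0$. The main obstacle, and the only place requiring care, is the first step: pinning down the exact linearization constant in $(1-\epsilon)^x \le 1 - c\epsilon x$ on the interval $x\in[0,4]$ so that the geometric decay rate, combined with the choice of $T$ and the final $(1+2\epsilon)$ rescaling, leaves enough slack to close both inequalities simultaneously — everything else is bookkeeping on sums that commute with the $1/T$ averaging.
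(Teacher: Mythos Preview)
Your approach is essentially the paper's: track the potential $\sum_e (1-\epsilon)^{\ell^{(t)}(e)}$, show geometric decay under the hypothesis via a linearization of $(1-\epsilon)^x$, extract a per-edge lower bound $\ell^{(T)}(e) \ge (1-O(\epsilon))T$, then average and rescale by $(1+2\epsilon)$ --- the paper merely folds the normalization $(1-\epsilon/2)^{T-t}/(1-\epsilon)^{(1-\epsilon)T}$ into its potential $\Phi(t)$ rather than carrying it separately. One slip to fix: your second inequality $1 - (1-(1-\epsilon)^4)x/4 \le 1-\epsilon x$ goes the wrong way (Bernoulli gives $(1-\epsilon)^4 \ge 1-4\epsilon$, hence $(1-(1-\epsilon)^4)/4 \le \epsilon$), but you already flag this as the delicate constant, and the paper's linearization $(1-\epsilon)^b \le e^{-\epsilon b} \le 1 - (\epsilon/2)b$ for $\epsilon b \le 1$ (precisely where $\epsilon < 1/4$ and $b \le 4$ are used) is the intended route.
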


Note that if we provide the algorithms with $z \geq D$, the condition $\sum_{e \in E}w^{(t)}_e (\alpha^{(t)}_{eu} + \alpha^{(t)}_{ev}) \geq \sum_{e \in E} w^{(t)}_e$ holds for every iteration $t$. This is because if $z \geq D$ then we know there exists a feasible solution such that $(\alpha_{eu} + \alpha_{ev}) \geq 1$ for every edge $e=uv$. This implies $\sum_{e \in E}w^{(t)}_e (\alpha_{eu} + \alpha_{ev}) \geq \sum_{e \in E} w^{(t)}_e$. Note that \[
\sum_{e \in E}w^{(t)}_e (\alpha_{eu}^{(t)} + \alpha_{ev}^{(t)} ) =  \sum_{u \in V}\sum_{ e \ni u } \alpha^{(t)}_{eu} w^{(t)}_e=\sum_{u \in V}\sum_{ e \ni u } \alpha^{(t)}_{eu}  (1-\epsilon)^{\ell^{(t)}(e)}~.\]
Hence, the way we assign $\{\alpha^{(t)}_{eu}\}$ maximizes $\sum_{e \in E}w^{(t)}_e (\alpha^{(t)}_{eu} + \alpha^{(t)}_{ev})$ over the set of feasible solutions $S(z)$, where 
\[S(z) = \left\{ \alpha :  \sum_{e \ni u} \alpha_{eu} \leq z \text{ for all $u$ and } 0 \leq \alpha_{eu} \leq 2 \text{ for all } \alpha_{eu} \right\} ~.
\]

Therefore,  $\sum_{e \in E}w^{(t)}_e (\alpha^{(t)}_{eu} + \alpha^{(t)}_{ev}) \geq \sum_{e \in E}w^{(t)}_e (\alpha_{eu} + \alpha_{ev}) \geq \sum_{e \in E} w^{(t)}_e$. From the above lemma, this implies Algorithm \ref{alg:fractional_dual} will output a fractional solution satisfying $\DUAL{z}$.

Next, we need to show that if the opposite holds, i.e., $\sum_{e \in E}w^{(t)}_e (\alpha^{(t)}_{eu} + \alpha^{(t)}_{ev}) < \sum_{e \in E} w^{(t)}_e$  for some  iteration $t$, then Algorithm \ref{alg:mwu-densest-subgraph} outputs a subgraph of density at least $(1-3\epsilon)z$.   First, we rely on  the following lemma which is a paraphrase of \cite{BahmaniGM14}. We include the proof in the appendix for the sake of completeness. 

\begin{lemma}[Paraphrase of \cite{BahmaniGM14}]\label{lem:subgraph}
Let $1/2 < F < 1$. Suppose there exists a set of (non-negative) weights $\{ w_e' \}$ such that  
\[\sum_{e \in E} w'_e > F \cdot \max_{\alpha \in S(z)} \big(  \sum_{e \in E}w'_e (\alpha_{eu} + \alpha_{ev})  \big)~.
\] 
Let $V''_\lambda$ denote the set of vertices that are incident to at least $\lceil z/2 \rceil$ edges $e$ with $w'_e \geq \lambda$. Then, there exists $\lambda$ such that $d(G[V''_\lambda]) > F \cdot z$. 
\end{lemma}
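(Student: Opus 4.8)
The plan is to prove Lemma~\ref{lem:subgraph} by a threshold (superlevel-set) argument on the weights $\{w'_e\}$, which is the standard way to extract a dense subgraph from an infeasibility certificate of the dual LP. The starting point is to unpack the hypothesis. Since $\max_{\alpha \in S(z)} \sum_{e} w'_e(\alpha_{eu}+\alpha_{ev}) = \sum_{u \in V} \max\{\sum_{e \ni u} \alpha_{eu} w'_e : \sum_{e\ni u}\alpha_{eu}\le z,\ 0\le\alpha_{eu}\le 2\}$, the inner maximum at a vertex $u$ is achieved greedily: put weight $2$ on the $\lceil z/2\rceil - 1$ incident edges of largest $w'_e$, and the leftover $z - 2(\lceil z/2\rceil-1)$ on the next-largest one. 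Thus for each $u$ the contribution is at most $2\sum_{e \ni u, \ w'_e \ge \lambda_u} w'_e$ for an appropriate per-vertex threshold $\lambda_u$ (the $\lceil z/2\rceil$-th largest weight at $u$); more usefully, $\max_{\alpha\in S(z)}\sum_e w'_e(\alpha_{eu}+\alpha_{ev}) \le 2\sum_{u\in V}\sum_{e\ni u}\ w'_e\,[\,u \text{ has} \ge \lceil z/2\rceil \text{ edges with } w'_e\ge \text{(weight of $e$)}\,]$, which I will need to massage into a clean layered sum over thresholds $\lambda$.

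The key computational step is to rewrite everything as an integral over thresholds $\lambda$. Writing $w'_e = \int_0^{w'_e} d\lambda = \int_0^\infty [\,w'_e \ge \lambda\,]\,d\lambda$, one gets $\sum_e w'_e = \int_0^\infty |E_\lambda|\,d\lambda$ where $E_\lambda = \{e : w'_e \ge \lambda\}$. Similarly, after using the greedy characterization, $\max_{\alpha\in S(z)}\sum_e w'_e(\alpha_{eu}+\alpha_{ev})$ can be bounded above by $z\int_0^\infty |V''_\lambda|\,d\lambda$: each vertex in $V''_\lambda$ can "absorb" at most $z$ units of $\lambda$-mass (its budget), and a vertex not in $V''_\lambda$ has fewer than $\lceil z/2\rceil$ incident edges of weight $\ge\lambda$ so at threshold $\lambda$ all such edges are already fully counted at weight $2$ — this is exactly where the $\lceil z/2\rceil$ definition of $V''_\lambda$ is tailored to match the greedy allocation. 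So the hypothesis $\sum_e w'_e > F\cdot\max_\alpha(\cdots)$ becomes $\int_0^\infty |E_\lambda|\,d\lambda > F\cdot z \int_0^\infty |V''_\lambda|\,d\lambda$. If we had $|E_\lambda| \le F z |V''_\lambda|$ for every $\lambda$, integrating would contradict this; hence there exists $\lambda^*$ with $|E_\lambda^*| > Fz|V''_{\lambda^*}|$. Finally I must check $E_{\lambda^*} \subseteq E(G[V''_{\lambda^*}])$: every edge $e$ with $w'_e \ge \lambda^*$ has both endpoints incident to $\ge \lceil z/2\rceil$ edges of weight $\ge \lambda^*$? That is not automatic, so the right statement is $E_{\lambda^*}$ restricted to edges inside $V''_{\lambda^*}$; one argues that discarding the few edges with an endpoint outside $V''_{\lambda^*}$ loses little, or more cleanly, redefine the layered count so that only edge-mass with both endpoints heavy is charged — I expect the paper handles this by a slightly more careful two-sided accounting. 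Either way, $d(G[V''_{\lambda^*}]) = |E(G[V''_{\lambda^*}])|/|V''_{\lambda^*}| > Fz$.

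The main obstacle, I expect, is making the inequality $\max_{\alpha\in S(z)}\sum_e w'_e(\alpha_{eu}+\alpha_{ev}) \le z\int_0^\infty |V''_\lambda|\,d\lambda$ precise, because of the interaction between the cap $\alpha_{eu}\le 2$, the per-vertex budget $z$, and the "$\lceil z/2\rceil$ edges" threshold defining $V''_\lambda$: each edge contributes to the LHS through \emph{both} endpoints, so a naive bound double-counts, and one must argue that for a vertex $u \notin V''_\lambda$ its entire incident $\lambda$-layer is already saturated at coefficient $2$ (so it should be charged to the edge-layer $|E_\lambda|$, with the factor $2$, rather than to $z|V''_\lambda|$), while a vertex $u \in V''_\lambda$ contributes at most its budget $z$ at that layer. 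Getting the constants to line up — in particular that the leftover-allocation edge and the ceiling/floor in $\lceil z/2\rceil$ don't break the layering — is the delicate part; everything else (the integral manipulation, the averaging-to-pointwise step, the final density computation) is routine. Since the statement is quoted as a paraphrase of \cite{BahmaniGM14}, I would follow their rounding analysis and simply adapt notation to the integer-weight setting used here.
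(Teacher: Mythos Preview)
Your overall plan---write everything as an integral over thresholds $\lambda$, pass from an integral inequality to a pointwise one, and read off a dense subgraph---is exactly the paper's approach. But the specific inequality you propose,
\[
\max_{\alpha\in S(z)}\sum_{e} w'_e(\alpha_{eu}+\alpha_{ev}) \;\le\; z\int_0^\infty |V''_\lambda|\,d\lambda,
\]
is false: vertices $u\notin V''_\lambda$ (those with fewer than $\lceil z/2\rceil$ incident edges of weight $\ge\lambda$) still contribute $2\deg_\lambda(u)>0$ to the layered sum, so the right-hand side can be much smaller than the left. You notice this yourself in your ``main obstacle'' paragraph, but you do not resolve it, and your argument never invokes the hypothesis $F>1/2$, which should be a warning sign.

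The fix is not to upper-bound the maximum by $z\int|V''_\lambda|$, but to keep the exact layered identity
\[
\max_{\alpha\in S(z)}\sum_e w'_e(\alpha_{eu}+\alpha_{ev})
=\int_0^\infty \sum_{v\in V}\deg_{\alpha,\lambda}(v)\,d\lambda,
\qquad
\deg_{\alpha,\lambda}(v)=\min\bigl(2\deg_\lambda(v),\,z\bigr),
\]
pass to a pointwise $\lambda$ with $|E_\lambda| > F\sum_v\deg_{\alpha,\lambda}(v)$, and only \emph{then} do the charging. Every edge of $E_\lambda$ with an endpoint outside $V''_\lambda$ is removed at cost at most $\sum_{v\notin V''_\lambda}\deg_\lambda(v)$, so
\[
|E(G[V''_\lambda])|\;\ge\; |E_\lambda|-\sum_{v\notin V''_\lambda}\deg_\lambda(v)
\;>\; F\Bigl(z|V''_\lambda|+2\sum_{v\notin V''_\lambda}\deg_\lambda(v)\Bigr)-\sum_{v\notin V''_\lambda}\deg_\lambda(v)
\;=\; Fz|V''_\lambda|+(2F-1)\sum_{v\notin V''_\lambda}\deg_\lambda(v),
\]
and now $F>1/2$ makes the last term nonnegative, giving $d(G[V''_\lambda])>Fz$. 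This is precisely where the hypothesis $F>1/2$ enters, and it simultaneously handles both issues you flagged (the ``light'' vertices' contribution and the edges of $E_\lambda$ that leave $V''_\lambda$).
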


Hence, if there is an iteration $t$ where  $\sum_{e \in E}w_e^{(t)} (\alpha_{eu}^{(t)} + \alpha_{ev}^{(t)}) < \sum_{e \in E} w_e^{(t)}$, we can apply the above lemma with $F=1$ and $w_e' = w_e^{(t)}$ to find the desired dense subgraph. However, there may be $\Omega(m)$ potential values for $\lambda$ to check. We circumvent this by discretizing the weights. Since we are only testing the densities of $G[V'_\ell]$ for integer loads $\ell$, we are effectively discretizing the weights. We will show that this discretization only introduces a small error on the inequality so that we can apply Lemma \ref{lem:subgraph} with $F =1-O(\epsilon)$.  This, in turn, gives us the following.

\begin{lemma}\label{lem:primal_feasible}
Suppose that there exists an iteration $t$ in Algorithm \ref{alg:mwu-densest-subgraph} where $\sum_{e \in E}w_e^{(t)} (\alpha_{eu}^{(t)} + \alpha_{ev}^{(t)}) < \sum_{e \in E} w_e^{(t)}$. Then for $0< \epsilon < 1/4$, Algorithm \ref{alg:mwu-densest-subgraph} will output a subgraph of density at least $(1-3\epsilon)z$ in that iteration.
\end{lemma}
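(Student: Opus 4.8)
**The plan is to prove Lemma~\ref{lem:primal_feasible} by reducing it to Lemma~\ref{lem:subgraph} via a discretization argument.** Fix the iteration $t$ guaranteed by the hypothesis, and write $w_e = w_e^{(t)} = (1-\epsilon)^{\ell^{(t)}(e)}$ and $\ell_{\min} = \min_e \lfloor \ell^{(t)}(e)\rfloor$. The hypothesis says $\sum_e w_e (\alpha_{eu}^{(t)}+\alpha_{ev}^{(t)}) < \sum_e w_e$; since the algorithm's greedy allocation maximizes $\sum_e w_e'(\alpha_{eu}+\alpha_{ev})$ over $\alpha \in S(z)$ (as observed in the text right before Lemma~\ref{lem:subgraph}), this gives $\sum_e w_e > \max_{\alpha \in S(z)} \sum_e w_e(\alpha_{eu}+\alpha_{ev})$, i.e.\ the hypothesis of Lemma~\ref{lem:subgraph} holds with $F=1$ and $w_e' = w_e$. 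Applying Lemma~\ref{lem:subgraph} directly would already give a set $V''_\lambda$ with density $> z$, but the issue flagged in the text is that $\lambda$ ranges over $\Omega(m)$ possible weight values, whereas Algorithm~\ref{alg:mwu-densest-subgraph} only inspects the $O((1/\epsilon)\log(2m/\epsilon))$ integer load thresholds $\ell \in [\ell_{\min},\ell_{\max}]$. So the real work is to replace the continuous family $\{w_e'\}$ by a \emph{rounded} family supported on these few thresholds, show the inequality survives with $F = 1-O(\epsilon)$, and check that the sets $V''_\lambda$ for the rounded weights coincide with the algorithm's sets $V'_\ell$.

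\textbf{Step 1 (define rounded weights).} Set $\tilde w_e = (1-\epsilon)^{\lceil \ell(e)\rceil}$ if $\lceil \ell(e)\rceil \le \ell_{\max}$, and $\tilde w_e = 0$ otherwise. Since $\lceil \ell(e)\rceil \geq \ell(e) \geq \ell(e)$ rounded up, we have $\tilde w_e \le w_e$ for every edge, and for edges with $\lceil\ell(e)\rceil \le \ell_{\max}$ we also have $\tilde w_e \ge (1-\epsilon) w_e$ (one more factor of $(1-\epsilon)$ at most). The edges with $\lceil \ell(e)\rceil > \ell_{\max}$ carry weight $w_e \le (1-\epsilon)^{\ell_{\max}} = (1-\epsilon)^{\ell_{\min}}\cdot(1-\epsilon)^{\lceil(1/\epsilon)\log(2m/\epsilon)\rceil} \le (\epsilon/(2m))\cdot(1-\epsilon)^{\ell_{\min}} \le (\epsilon/(2m))\cdot w_{e_0}$ where $e_0$ is a minimum-load edge; summing over at most $m$ such edges, their total weight is at most $(\epsilon/2)\sum_e w_e$. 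Hence $\sum_e \tilde w_e \ge \sum_e w_e - (\epsilon/2)\sum_e w_e - (\text{loss from rounding up})$; combining the two bounds, $\sum_e \tilde w_e \ge (1-\epsilon)\sum_e w_e - (\epsilon/2)\sum_e w_e \ge (1-2\epsilon)\sum_e w_e$.

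\textbf{Step 2 (transfer the inequality).} Using $\tilde w_e \le w_e$ in the max term and the bound just derived on $\sum_e \tilde w_e$,
\[
\sum_{e} \tilde w_e \ \ge\ (1-2\epsilon)\sum_e w_e \ >\ (1-2\epsilon)\max_{\alpha\in S(z)}\sum_e w_e(\alpha_{eu}+\alpha_{ev}) \ \ge\ (1-2\epsilon)\max_{\alpha\in S(z)}\sum_e \tilde w_e(\alpha_{eu}+\alpha_{ev}).
\]
So Lemma~\ref{lem:subgraph} applies to the weights $\{\tilde w_e\}$ with $F = 1-2\epsilon$ (which lies in $(1/2,1)$ since $\epsilon<1/4$), yielding a threshold $\lambda$ with $d(G[V''_\lambda]) > (1-2\epsilon)z$. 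Because $\tilde w_e$ takes only the values $(1-\epsilon)^\ell$ for integers $\ell \in [\ell_{\min},\ell_{\max}]$ (plus $0$), we may take $\lambda = (1-\epsilon)^\ell$ for one such integer $\ell$, and then $V''_\lambda = \{v : v \text{ incident to} \ge \lceil z/2\rceil \text{ edges } e \text{ with } \tilde w_e \ge (1-\epsilon)^\ell\} = \{v : v \text{ incident to} \ge \lceil z/2\rceil \text{ edges } e \text{ with } \lceil \ell(e)\rceil \le \ell\} = V'_\ell$, exactly the set Algorithm~\ref{alg:mwu-densest-subgraph} tests in iteration $t$. Thus the algorithm, scanning $\ell$ from $\ell_{\min}$ to $\ell_{\max}$, finds some $G[V'_\ell]$ of density at least $(1-3\epsilon)z$ (the slack between $1-2\epsilon$ and $1-3\epsilon$ absorbs the fact that $(1-2\epsilon)z$ may be non-integral while the algorithm's test is against $(1-3\epsilon)z$; since $2m \ge $ the density, $(1-2\epsilon)z \ge (1-3\epsilon)z$ trivially, so even $(1-2\epsilon)z$ works and the extra $\epsilon$ is only for safety), and outputs it.

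\textbf{Main obstacle.} The delicate point is bookkeeping the two sources of error in Step 1 — the truncation of heavy-load (tiny-weight) edges and the rounding-up of loads to the next integer — and confirming they only cost a $1-O(\epsilon)$ factor rather than corrupting the strict inequality. The choice $\ell_{\max} - \ell_{\min} = \lceil (1/\epsilon)\log(2m/\epsilon)\rceil$ in Line~\ref{ln:l-min} is exactly calibrated so that truncated edges contribute at most an $\epsilon/2$ fraction of the total weight; making this precise (and verifying $\max_{H} d(H) \le m$ so all relevant loads stay in range) is the one place requiring care, but it is a routine geometric-series estimate. Everything else is a direct invocation of Lemma~\ref{lem:subgraph} together with the already-established optimality of the greedy $\alpha$-allocation.
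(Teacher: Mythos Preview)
Your approach is essentially the paper's: round the weights to integer load thresholds, show the hypothesis of Lemma~\ref{lem:subgraph} survives with $F = 1 - O(\epsilon)$, and identify the resulting $V''_\lambda$ with one of the sets $V'_\ell$ the algorithm tests. One small slip: in Step~1 you write $(1-\epsilon)^{\ell_{\min}} \le w_{e_0}$, but since $\ell_{\min} = \lfloor \ell(e_0)\rfloor \le \ell(e_0)$ the inequality goes the other way, $(1-\epsilon)^{\ell_{\min}} \ge w_{e_0}$. This is harmless because what you actually need is $(1-\epsilon)^{\ell_{\min}} \le \sum_e w_e/(1-\epsilon)$ (which follows from $(1-\epsilon)^{\ell_{\min}} \le w_{e_0}/(1-\epsilon) \le \sum_e w_e/(1-\epsilon)$), costing only an extra $(1-\epsilon)^{-1}$ factor that is absorbed into the $O(\epsilon)$ slack; the paper sidesteps this by comparing to $\max_e w'_e$ instead. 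Your Step~2 is in fact slightly cleaner than the paper's: by using $\tilde w_e \le w_e$ directly on the maximum, you avoid the paper's separate argument that the greedy $\alpha^{(t)}$ also maximizes $\sum_e w'_e(\alpha_{eu}+\alpha_{ev})$, and you end up with $F = 1-2\epsilon$ rather than $1-3\epsilon$. The closing remark about ``slack between $1-2\epsilon$ and $1-3\epsilon$'' is unnecessary---the algorithm's density test at $(1-3\epsilon)z$ is simply weaker than the $(1-2\epsilon)z$ you have established, so it passes.
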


\begin{lemma}\label{theorem:rounding-DS}
Suppose $0<\epsilon < 1/4$ and $\tilde{D} \leq D$. Then, there exists a deterministic algorithm that solves $\DenseSubgraph{\tilde{D},\eps}$ in $O(\diameterG +   1/\eps^3 \cdot \log^2 n)$ rounds in the \congest model.
\end{lemma}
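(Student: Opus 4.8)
The plan is to invoke Algorithm~\ref{alg:mwu-densest-subgraph} through Lemmas~\ref{lem:dual_feasible} and~\ref{lem:primal_feasible}, while separating its cheap ``load-update'' part from its expensive ``density-testing'' part. The load updates cost only one \congest round per iteration---each endpoint of $e=uv$ computes $\alpha^{(t)}_{eu}$ locally, sends it across $e$, and both endpoints update $\ell(e)$; by the bookkeeping remark after Algorithm~\ref{alg:mwu-densest-subgraph} every such message is a pair of $O(\log n)$-bit integers---whereas each density test requires a global aggregation, so all tests will be deferred to a single post-processing phase on one BFS tree. Concretely, I would fix $\epsilon_0=\epsilon/6$ and $z=\tilde D/(1+\epsilon/2)$, run the load-update loop of Algorithm~\ref{alg:mwu-densest-subgraph} with parameters $z,\epsilon_0$ for its $T=O((\log n)/\epsilon_0^2)$ iterations \emph{without} performing any density test, and have every vertex record the loads $\ell^{(t)}(e)$ of all its incident edges for every $t\le T$; for $t\le T$ and integer $\ell$, let $V'_{t,\ell}$ denote the set $V'_\ell$ of Algorithm~\ref{alg:mwu-densest-subgraph} evaluated with the loads of iteration $t$.

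The point of this choice of $z$ is to force the primal branch. Since $\epsilon/3<\epsilon/2$, we get $(1+2\epsilon_0)z=(1+\epsilon/3)\tilde D/(1+\epsilon/2)<\tilde D\le D$, and since the dual LP in Figure~\ref{fig:linear-programs} has optimum exactly $D$, no feasible dual solution has objective value at most $(1+2\epsilon_0)z$. Hence the hypothesis of Lemma~\ref{lem:dual_feasible} cannot hold, so there is an iteration $t^\dagger$ with $\sum_{e} w^{(t^\dagger)}_e(\alpha^{(t^\dagger)}_{eu}+\alpha^{(t^\dagger)}_{ev})<\sum_{e} w^{(t^\dagger)}_e$; by Lemma~\ref{lem:primal_feasible} there is then an integer $\ell^\star$ in the window examined at iteration $t^\dagger$ for which $V'_{t^\dagger,\ell^\star}$ is non-empty and
\[ d\big(G[V'_{t^\dagger,\ell^\star}]\big)\ \ge\ (1-3\epsilon_0)z\ =\ \frac{1-\epsilon/2}{1+\epsilon/2}\,\tilde D\ \ge\ (1-\epsilon)\tilde D . \]
So it suffices to scan all iterations $t\le T$ and, for each, all integers $\ell$ in the length-$O((1/\epsilon)\log(n/\epsilon))$ window examined at iteration $t$, and find one pair $(t,\ell)$ with $d(G[V'_{t,\ell}])\ge(1-3\epsilon_0)z$; there are $N=O((\log^2 n)/\epsilon^3)$ such pairs.

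The final phase carries out this scan in $O(\diameterG+(\log^2 n)/\epsilon^3)$ rounds. First I would build a BFS tree rooted at the minimum-ID vertex in $O(\diameterG)$ rounds; a pipelined convergecast then lets every vertex learn $\ell^{(t)}_{\min}=\min_e\lfloor\ell^{(t)}(e)\rfloor$, and hence the $\ell$-window, for all $t$. Next, each vertex $v$ computes locally, for every relevant pair $(t,\ell)$, the bit $b_{t,\ell}(v)\in\{0,1\}$ which is $1$ exactly when $v$ has at least $\lceil z/2\rceil$ incident edges $e$ with $\lceil\ell^{(t)}(e)\rceil\le\ell$, and sends these $N$ bits to each of its neighbours, costing $O(N/\log n)=O((\log n)/\epsilon^3)$ \congest rounds per edge. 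Now $v$ knows $b_{t,\ell}(v)$ as well as $b_{t,\ell}(u)$ for every neighbour $u$, hence its own contribution $b_{t,\ell}(v)$ to $|V'_{t,\ell}|$ and its contribution $b_{t,\ell}(v)\cdot\big|\{u\sim v:b_{t,\ell}(u)=1\}\big|$ to $2\,|E(G[V'_{t,\ell}])|$. A pipelined convergecast of these $N$ pairs of partial sums up the BFS tree takes $O(\diameterG+N)=O(\diameterG+(\log^2 n)/\epsilon^3)$ rounds, after which the root knows $d(G[V'_{t,\ell}])$ for every pair, selects one pair $(t^\star,\ell^\star)$ of density at least $(1-3\epsilon_0)z$ (which exists by the previous paragraph), and broadcasts it in $O(\diameterG)$ rounds; finally each vertex outputs $h_v=b_{t^\star,\ell^\star}(v)$. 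Summing the phases and substituting $\epsilon_0=\Theta(\epsilon)$ gives the claimed $O(\diameterG+(\log^2 n)/\epsilon^3)$ bound, and the output subgraph is non-empty with density at least $(1-\epsilon)\tilde D$.

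The main obstacle is making $\diameterG$ appear only additively: running the density test of Algorithm~\ref{alg:mwu-densest-subgraph} inside each iteration would cost $\Omega(\diameterG)$ per iteration and inflate the bound by a $\Theta((\log n)/\epsilon^2)$ factor, so it is essential to defer \emph{every} test to one pipelined post-processing phase and to exchange only a single membership bit per vertex per $(t,\ell)$-pair---rather than re-broadcasting the loads themselves---so that the whole aggregation pipelines into an additive $O(\diameterG)$ term. A secondary point needing care is choosing $z$ strictly below $\tilde D/(1+2\epsilon_0)$ (here $z=\tilde D/(1+\epsilon/2)$) so that the dual branch is provably ruled out and the density of the output subgraph still reaches $(1-\epsilon)\tilde D$.
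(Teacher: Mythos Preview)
Your proposal is correct and follows essentially the same approach as the paper: choose parameters so that $(1+2\epsilon')z<\tilde D\le D$, which forces the dual branch of Lemma~\ref{lem:dual_feasible} to fail and hence, via Lemma~\ref{lem:primal_feasible}, guarantees a good $(t,\ell)$ pair; then observe that the $O((\log^2 n)/\epsilon^3)$ density tests can be pipelined over a single BFS tree so that $\diameterG$ enters only additively. Your parameters $\epsilon_0=\epsilon/6$, $z=\tilde D/(1+\epsilon/2)$ differ cosmetically from the paper's $\epsilon'=\epsilon/8$, $z=(1-\epsilon/2)\tilde D$, and you spell out the pipelining (membership bits, neighbour exchange, convergecast of partial sums) in more detail than the paper does, but the argument is the same.
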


\begin{proof}
Assume that $\tilde{D} \leq D$. Let $z = (1-\epsilon/2) \tilde{D}$ and $\epsilon' = \epsilon/8$. If we were to run Algorithm \ref{alg:fractional_dual} with parameters $z$ and $\epsilon'$, then it cannot possibly output a feasible solution for $\DUAL{z (1+2\epsilon')}$, since $\DUAL{z(1+2\epsilon')}$ is infeasible. This is because $z(1+2\epsilon') = (1-\epsilon/2)(1+\epsilon/4) \tilde{D} < \tilde{D} \leq D$. This implies that Algorithm \ref{alg:mwu-densest-subgraph} has to output a subgraph with density of at least $(1- 3\epsilon')z \geq (1-\epsilon/2) (1-\epsilon /2) \tilde{D} \geq (1-\epsilon) \tilde{D}$. 

We now argue that Algorithm \ref{alg:mwu-densest-subgraph} can be implemented in $O(\diameterG +   1/\eps^3 \cdot \log^2 n)$ rounds in the \congest model. Each step in the main loop of Algorithm \ref{alg:mwu-densest-subgraph} uses $O(1)$ rounds except for Line \ref{ln:l-min} and the inner loop (Line \ref{ln:checking_start} -- Line \ref{ln:checking_end}). Line \ref{ln:l-min} can be done in $O(\diameterG)$ rounds. The inner loop tests the density of $O((1/\epsilon)\log n)$ subgraphs. In total, there are $O((1/\epsilon)\log n) \cdot O((1/\epsilon^2) \log n)$ subgraphs to be tested. Each subgraph and its density can be computed in $O(\diameterG)$ rounds. These steps can be pipelined to run in $O(\diameterG + (\log^2 n) /\epsilon^3)$ rounds. Finally, we argue  that each $\alpha_{eu}^{(t)}$ can be represented using $O(\log \log n + \log (1/\epsilon))$ bits (Lemma \ref{lem:bits}) and therefore in each iteration, Line \ref{ln:update-l-primal} in Algorithm \ref{alg:mwu-densest-subgraph} (and Line \ref{ln:update-l-dual} in Algorithm \ref{alg:fractional_dual}) can be done in one \congest round.
\end{proof}

Using the above lemma, we now present the following results:
\begin{itemize}
\item There exists an algorithm that solves  $\DenseSubgraph{\tilde{D},\eps}$ in $O(\poly(\log n, 1/\eps))$ rounds w.h.p. In particular, we are able to avoid $\diameterG$ rounds in Lemma \ref{theorem:rounding-DS}. 
\item There exists an algorithm that finds a $1-\eps$ approximation of the densest subgraph (instead of solving the parameterized version) in $O(\diameterG + \poly(\log n, 1/\eps))$ rounds w.h.p. 
\end{itemize}

\paragraph{Solving $\DenseSubgraph{\tilde{D},\eps}$} Our first goal is to avoid $O(\diameterG)$ rounds to solve the parameterized version of the densest subgraph problem. The main idea is to apply the low diameter decomposition and solve the problem in each component. First, we recall that the low-diameter decomposition can be implemented efficiently in the \congest model. See also \cite{CS2019}.

\begin{lemma}  [\cite{MillerPX13}]\label{thm:low-diameter-decomposition}
There exists an algorithm that decomposes the graph into disjoint components such that: 1) Each edge is inter-component with probability at most $\eps$, 2) Each component has diameter $O(1/\eps \cdot \log n)$ w.h.p. and 3) Runs in $O( 1/\eps \cdot \log n)$ rounds in the \congest model w.h.p.
\end{lemma}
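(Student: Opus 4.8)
The plan is to implement the exponential-start BFS clustering of Miller, Peng, and Xu. First, each vertex $v$ independently samples a shift $\delta_v$ from an exponential distribution with rate parameter $\beta := \eps$ (mean $1/\eps$). A standard tail bound gives $\max_v \delta_v = O((\log n)/\eps)$ w.h.p., so I condition on that event and truncate every shift at $R := c(\log n)/\eps$ for a suitable constant $c$; this loses nothing w.h.p. To keep messages small in the \congest model, round each (truncated) shift up to the nearest integer, so that $\delta_v \in \{0,1,\ldots,R\}$. This discretization perturbs the relevant probabilities by only a constant factor, which I absorb by taking $\beta$ a constant factor smaller than $\eps$.

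Next, I assign each vertex $v$ to the center $u$ minimizing $\dist_G(u,v) - \delta_u$, breaking ties in favor of smaller $\ID(u)$. This is computed by one truncated BFS wave: center $u$ ``wakes up'' at round $R - \delta_u$ and begins flooding the pair $(\ID(u),\, R - \delta_u)$; each vertex $v$ retains the smallest quantity $\dist_G(u,v) + (R - \delta_u) = R + (\dist_G(u,v) - \delta_u)$ it has heard so far, updates its tentative center accordingly, and forwards. All waves begin within the first $R$ rounds and each finishes within $R$ rounds of its start, so the process terminates in $O(R) = O((\log n)/\eps)$ rounds, and in each round a vertex sends a single $O(\log n)$-bit message (a center ID together with a distance bounded by $2R = \poly(n)$); this yields property~3. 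Property~2 is then immediate: a standard monotonicity argument shows a shortest path from $v$ to its center stays within the same cluster, so each cluster has radius at most $\max_u \delta_u \le R$ from its center and hence diameter $O((\log n)/\eps)$.

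The crux is property~1, that a fixed edge $e = xy$ is cut with probability at most $\eps$. Here I invoke the Miller--Peng--Xu argument essentially verbatim, using the memorylessness of the exponential distribution: revealing the shifted distances $\dist_G(u,x) - \delta_u$ in increasing order of value, one shows that, conditioned on the smallest of them, the chance that no single center simultaneously ``wins'' both $x$ and $y$ is at most $1 - e^{-\beta} \le \beta$. Choosing $\beta = \eps$ (with a constant-factor slack to absorb the integer rounding, since rounding can only enlarge the critical gap in a controlled way) gives cut probability at most $\eps$.

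I expect the main obstacle to be the \congest bookkeeping rather than the probabilistic core: I need to verify carefully that a vertex never has to remember more than its current best $(\text{center ID}, \text{shifted distance})$ pair, that this pair fits in $O(\log n)$ bits, and that truncating all shifts at $R$ is consistent with the separation analysis. These are routine but must be checked; the probability bound itself is exactly the Miller--Peng--Xu lemma, which can be quoted.
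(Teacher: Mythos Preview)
Your proposal is correct and matches the paper's treatment: the paper does not prove this lemma but simply cites \cite{MillerPX13} and sketches the same exponential-start BFS clustering you describe (each $v$ draws $\delta_v \sim \mathrm{Exp}(\eps)$, wakes up at time $\lfloor \delta - \delta_v \rfloor$ with $\delta = K(\log n)/\eps$, and $v$ joins the cluster of the $u$ minimizing $\dist(y,v) + \lfloor \delta - \delta_y \rfloor$). Your write-up is in fact more detailed than the paper's own paragraph, and the points you flag as needing care (integer rounding, $O(\log n)$-bit messages) are exactly the right ones.
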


In the decomposition given by Miller et al. \cite{MillerPX13}, the rough idea is that each vertex $v$ draws $\delta_v$ from the exponential distribution $Exp(\eps)$. Let $\delta = K/\eps \cdot \log n$ for some sufficiently large constant $K$. At time step $\floor{\delta - \delta_v}$, $v$ wakes up and starts a breadth first search (BFS) if it has not been covered by another vertex's BFS. At the end of the decomposition, $v \in \textup{cluster}(u)$ if $u = \min_{y \in V}(\dist(y,v) + \floor{\delta - \delta_y})$. This algorithm can be simulated in the \congest model in $O( 1/\eps \cdot \log n)$ rounds. Now, we are ready to prove our next main result.

\begingroup
\begin{NoHyper}
\def\thetheorem{\ref{thm:main-congest-1}}
\begin{theorem}
There exists a randomized algorithm that solves $\DensestSubgraph{\tilde{D}, \epsilon}$ w.h.p.~and runs in $O((\log^3 n) / \epsilon^3)$ rounds in the \congest model.
\end{theorem}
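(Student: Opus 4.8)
The plan is to run the algorithm behind Lemma~\ref{theorem:rounding-DS} \emph{inside} the pieces of a low-diameter decomposition, so each piece costs only $O((\log^{2} n)/\epsilon^{3})$ rounds, then boost the decomposition's constant success probability to ``w.h.p.'' by $\Theta(\log n)$ independent repetitions, and finally glue the pieces found in different repetitions into one valid answer by a local leader election. The $\diameterG$ term disappears because every computation is confined to a component of diameter $O((\log n)/\epsilon)$, and the extra $\log n$ factor over Lemma~\ref{theorem:rounding-DS} comes exactly from the repetitions.

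Assume first $\tilde D \le D$. Choose constants so that $z = (1-\Theta(\epsilon))\tilde D$ together with a precision $\Theta(\epsilon)$ has the property that, when run on any graph containing an induced subgraph of density at least $(1-\epsilon/2)D$, the algorithm of Lemma~\ref{theorem:rounding-DS} (i.e.\ Algorithm~\ref{alg:mwu-densest-subgraph} with these parameters) is forced to output a subgraph of density at least $(1-\epsilon)\tilde D$. Repeat $\Theta(\log n)$ times independently: run the \congest low-diameter decomposition of Lemma~\ref{thm:low-diameter-decomposition} with parameter $\Theta(\epsilon)$ (so w.h.p.\ every component has diameter $O((\log n)/\epsilon)$ and every edge is inter-component with probability $O(\epsilon)$); inside every component $C$ run, in parallel using only the edges of $G[C]$, that algorithm on $G[C]$, capped at $c\,(\log^{2}n)/\epsilon^{3}$ rounds (which by Lemma~\ref{theorem:rounding-DS} is enough when $\operatorname{diam}(G[C]) = O((\log n)/\epsilon)$; the \congest implementability, with $O(\log\log n + \log(1/\epsilon))$-bit weights, is inherited from that lemma). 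If $C$'s run outputs a subgraph, call $C$ \emph{successful}, let $S_C$ be that output (of density $\ge (1-\epsilon)\tilde D$, since Algorithm~\ref{alg:mwu-densest-subgraph} outputs only subgraphs passing its own density test), and give $S_C$ the priority $(\text{repetition index},\ \min\text{-ID}(C))$; all priorities are distinct.

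For correctness, fix a repetition and a densest subgraph $H^{*}$ of $G$ (WLOG induced, of density $D$). The decomposition cuts at most an $O(\epsilon)$ fraction of $E(H^{*})$ in expectation, so by Markov's inequality it cuts at most an $\epsilon/2$ fraction with probability $\ge 1/2$; conditioned on this, the counting argument of Lemma~\ref{lem:low-diameter-ds} applied to the induced decomposition of $H^{*}$ gives a component $C_{j}$ with $d(H^{*}\cap C_{j}) \ge (1-\epsilon/2)D$, so $G[C_{j}]$ contains an induced subgraph of density $\ge (1-\epsilon/2)D$ and $C_{j}$ is successful. Hence each repetition is ``good'' (diameter bound holds and some component is successful) with probability $\ge 1/2 - n^{-\Omega(1)}$, so w.h.p.\ at least one of the $\Theta(\log n)$ independent repetitions is good; also w.h.p.\ all of them meet the diameter bound, so every component's run finishes within its cap. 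The total round count is $\Theta(\log n)\cdot O((\log^{2}n)/\epsilon^{3}) = O((\log^{3}n)/\epsilon^{3})$, the decompositions costing only $O((\log n)/\epsilon)$ rounds each.

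It remains to combine the $S_C$'s, and I expect this to be the main obstacle: different repetitions use different decompositions, so successful $S_C$'s from different repetitions may overlap, and the union of overlapping dense subgraphs need not be dense (e.g.\ $K_4$ plus two isolated vertices, twice, overlapping in the $K_4$). I would select a pairwise-disjoint subfamily by local leader election at radius $r = O((\log n)/\epsilon)$, an upper bound on every component's diameter: declare $S_C$ \emph{selected} iff no successful subgraph of strictly smaller priority lies within $G$-distance $2r$ of $S_C$. Since $\operatorname{diam}(S_C)\le r$, two selected subgraphs cannot share a vertex (the higher-priority one would see the lower one within distance $2r$), and the globally minimum-priority successful subgraph is always selected, so whenever a successful subgraph exists so does a selected one; being vertex-disjoint and each of density $\ge (1-\epsilon)\tilde D$, the selected subgraphs' union $H$ has density $\ge (1-\epsilon)\tilde D$ by the mediant inequality, and $H$ is non-empty w.h.p. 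This step is implemented by flooding, to distance $2r$, the minimum priority of a successful subgraph (a single bounded-radius min-propagation, $O(r)$ rounds), followed by aggregating these minima inside each component to test whether $S_C$ attains its own priority ($O(r)$ rounds per repetition); both fit in $O((\log^{2}n)/\epsilon)$ rounds and do not affect the asymptotics. Finally, if $\tilde D > D$ the algorithm outputs either $\emptyset$ or a subgraph of density $\ge (1-\epsilon)\tilde D$ (Algorithm~\ref{alg:mwu-densest-subgraph} never outputs a subgraph failing its density test), both admissible; and the $n^{-\Omega(1)}$ failure probabilities of the decompositions, the diameter bounds, and the ``some repetition is good'' event union-bound to $n^{-\Omega(1)}$.
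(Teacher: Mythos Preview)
Your proposal is correct and follows essentially the same high-level approach as the paper: apply the low-diameter decomposition of Lemma~\ref{thm:low-diameter-decomposition}, run the algorithm of Lemma~\ref{theorem:rounding-DS} inside each component (so the $\diameterG$ term becomes $O((\log n)/\epsilon)$), and repeat $\Theta(\log n)$ times to boost the Markov-based constant success probability to w.h.p.

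The one genuine difference is in how you stitch the repetitions together. The paper does this \emph{sequentially}: in trial $j$, a component is processed only if it contains no vertex already marked in trials $1,\ldots,j-1$; disjointness of the outputted pieces is then immediate, and non-emptiness follows by looking at the first trial in which the event $\xi$ holds (either that trial marks something new, or something was already marked earlier). Your alternative---assign each successful $S_C$ a globally distinct priority and select $S_C$ iff no smaller-priority successful subgraph lies within $G$-distance $2r$---also works and gives the same bound, but is more elaborate: it requires an extra cross-component flooding and per-repetition aggregation phase, and a separate argument that selected subgraphs are vertex-disjoint and that the global minimum is always selected. The paper's bookkeeping is simpler and avoids this extra machinery entirely; on the other hand, your version makes the repetitions logically independent, which could be convenient if one wanted to parallelize them. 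You are also a bit more explicit than the paper about choosing $z$ so that a component of density $\ge (1-\Theta(\epsilon))D$ (rather than $\ge \tilde D$) is already enough to force Algorithm~\ref{alg:mwu-densest-subgraph} to output.
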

\addtocounter{theorem}{-1}
\end{NoHyper}
\endgroup

\begin{proof} 

We apply the decomposition above  with parameter $\eps/2$ to the graph to obtain low-diameter components $C_1,\ldots,C_t$.  Let $H^*$ be the densest subgraph with $n^*$ vertices and $m^*$ edges. We condition on the event $\xi$ that at most $ \eps |E(H^*)|$ edges in $E(H^*)$ are inter-component. This happens with probability at least $1/2$ according to  Markov's inequality. 

First, consider the case $\tilde{D} \leq D$. Using a similar argument as in the proof of Lemma \ref{lem:low-diameter-ds}, we can show that the densest subgraph of at least one component has density at least $(1-\eps)D \geq (1-\eps)\tilde{D}$. Specifically, let $H_i^*  = H^* \cap C_i$ and let $C_i^* \subseteq C_i$ be the densest subgraph in $C_i$. Furthermore, let $|V(H_i^*)|=n^*_i$ and $|E(H_i^*)|=m^*_i$.

If $d(C_i^*) < (1-\eps) D$ for all $1 \leq i \leq t$. Then, $d(H_i^*) \leq d(C_i^*) < (1-\eps)D$. This implies $m_i^* < (1-\eps) m^* n_i^*/n^*$.  Therefore, $\sum_{i=1}^t m_i^* < (1-\eps) m^*$ which means that more than $\eps m^*$ edges in $E(H^*)$ are inter-component. This is  a contradiction since we condition on $\xi$. 

For each low-diameter component $C_i$, using the algorithm in Lemma \ref{theorem:rounding-DS}, we can solve $\DenseSubgraph{\tilde{D},\eps}$ in $C_i$. As argued above, since $\tilde{D} \leq D$, we must have a non-empty output in some component $i$. Observe that since the diameter of each $C_i$ is $O(1/\eps \cdot \log n)$, this takes $O(1/\eps^3  \cdot \log^2 n)$ rounds.

We ensure $\xi$ happen w.h.p. by repeating $O(\log n)$ trials. The total number of rounds becomes $O(1/\eps^3  \cdot \log^3 n)$.
However, there is a catch regarding the consistency of the output. Recall that we want the  subgraph induced by the marked vertices $\{ v : h_v = 1 \}$ has density at least $(1-\eps) \tilde{D}$. It is possible that for different trials $j$ and $j'$, $v$ might be marked and unmarked respectively. We need to address the issue of how to decide the final output.

This can be done with proper bookkeeping. Originally, all vertices are unmarked. In each trial, compute a low-diameter decomposition $C_1, C_2, \ldots, C_t$. For each low-diameter component $C_i$, if it does not contain a marked vertex, check if there exists a subgraph  $H_i \subseteq C_i$ with density at least $ (1-\eps)\tilde{D}$ using the algorithm in Lemma \ref{theorem:rounding-DS}.  If there exists such subgraph, mark every vertex in $H_i$. In the end, return the subgraph induced by the marked vertices. The output must be non-empty w.h.p. by considering the first time the event $\xi$ occurs. Since all the output subgraphs among the trials are disjoint and they have density at least $(1-\eps)\tilde{D}$, their union must have density at least $(1-\eps)\tilde{D}$ as argued in the proof of Theorem \ref{thm:main-local-1}. 
 
In the case $\tilde{D} > D$,  the algorithm either returns a subgraph with  density at least $(1-\eps)\tilde{D}$ or an empty subgraph which are both acceptable.
\end{proof}

\paragraph{Approximating the densest subgraph} We can also apply the ideas above to find a $1-\eps$ approximation of the densest subgraph. This is done by simply running the algorithm in Theorem \ref{thm:main-congest-1} on different guesses for the maximum subgraph density $D$.

\begingroup
\begin{NoHyper}
\def\thetheorem{\ref{cor:main-congest-2}}
\begin{corollary}
There exists a randomized algorithm that finds a $(1-\epsilon)$-approximation to the maximum density subgraph w.h.p.~and runs in $O(\diameter{G} + (\log^4 n/ \epsilon^4) )$ in the \congest model.
\end{corollary}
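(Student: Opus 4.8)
The plan is to run the algorithm of \Cref{thm:main-congest-1} on a geometric sequence of guesses for $D$ and keep the densest subgraph it ever produces. Set $\epsilon' = \epsilon/3$ and, using only the (known, or $O(\diameterG)$-computable) value of $n$, define the guesses $\tilde D_j = \tfrac12 (1+\epsilon')^j$ for $j = 0, 1, \dots, k$ with $k = \lceil \log_{1+\epsilon'} n \rceil = O((\log n)/\epsilon)$. Since any single edge already forms a subgraph of density $\tfrac12$ while every subgraph has density at most $\Delta/2 < n/2$, whenever $G$ has an edge there is an index $i$ with $\tilde D_i \le D < (1+\epsilon')\tilde D_i$, hence $\tilde D_i > D/(1+\epsilon') \ge (1-2\epsilon')D$ (and if $G$ is edgeless then $D=0$ and every run below returns the empty subgraph, which is a correct answer). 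For each $j$ I would invoke the algorithm of \Cref{thm:main-congest-1} with parameters $\tilde D_j$ and $\epsilon'$, having every vertex $v$ store the resulting bit $h_v^{(j)}$; the run $j=i$ is guaranteed to output a non-empty subgraph $H_i$ of density $d(H_i) \ge (1-\epsilon')\tilde D_i \ge (1-3\epsilon')D = (1-\epsilon)D$, while every run outputs a genuine subgraph of density at most $D$.

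It remains to turn these $k+1$ per-vertex labellings into a single consistent global output, namely the $H_j$ of largest actual density (treating an empty output as density $0$). I would do this with one aggregation over a BFS tree of $G$: each vertex $v$, for every $j$, locally knows whether $v \in H_j$ and how many of its incident edges are internal to $H_j$ and ``charged'' to $v$ (charge each internal edge to the endpoint of smaller $\ID$). Pipelining the $O((\log n)/\epsilon)$ partial-sum pairs up the tree computes $|V(H_j)|$ and $|E(H_j)|$ for all $j$ at the root in $O(\diameterG + (\log n)/\epsilon)$ rounds; the root selects $j^\star = \arg\max_j |E(H_j)|/|V(H_j)|$ and broadcasts $j^\star$ down the tree in $O(\diameterG)$ rounds, after which each vertex adopts $h_v^{(j^\star)}$ as its final output. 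Because $d(H_{j^\star}) \ge d(H_i) \ge (1-\epsilon)D$ and $d(H_{j^\star}) \le D$, this is a valid $(1-\epsilon)$-approximation of the densest subgraph.

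For the running time, the $k+1 = O((\log n)/\epsilon)$ invocations of the $O((\log^3 n)/\epsilon^3)$-round algorithm of \Cref{thm:main-congest-1} dominate, giving $O((\log^4 n)/\epsilon^4)$ rounds, and the remaining steps (learning $n$, the pipelined aggregation, and the broadcast) add only $O(\diameterG + (\log n)/\epsilon)$ more; the claimed $O(\diameterG + (\log^4 n)/\epsilon^4)$ bound then follows after rescaling $\epsilon$ by the constant $3$. The main obstacle is precisely that $\DensestSubgraph{\cdot,\cdot}$ returns only a local labelling, so the network neither knows which guess was ``correct'' nor even whether a particular guess produced a non-empty subgraph; the resolution is to defer all such global decisions to the single pipelined tree pass described above, so that the $\diameterG$ cost is incurred once rather than once per guess. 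One should also verify — routinely — that the hidden constant in \Cref{thm:main-congest-1} is independent of $\tilde D$, so that the per-guess cost is uniform across the geometric range.
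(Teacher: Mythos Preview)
Your proposal is correct and follows essentially the same approach as the paper: run the algorithm of \Cref{thm:main-congest-1} on $O((\log n)/\epsilon)$ geometric guesses $\tilde D$, then use one $O(\diameterG)$-round global pass to pick a single phase. The only notable difference is your selection criterion: you aggregate $|V(H_j)|$ and $|E(H_j)|$ for every $j$ and let the root pick the $H_j$ of maximum measured density, whereas the paper simply has each vertex record the largest phase in which it was marked and broadcasts the global maximum of these indices; since any non-empty output of phase $j$ already has density at least $(1-\epsilon)\tilde D_j$, picking the largest non-empty phase suffices and avoids the per-phase edge/vertex counts.
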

\addtocounter{theorem}{-1}
\end{NoHyper}
\endgroup

\begin{proof} 
For each $\tilde{D} =1, (1+\eps),(1+\eps)^2,\ldots, (1+\eps)^{\lceil \log_{1+\eps} n \rceil}$, we run the algorithm in Theorem \ref{thm:main-congest-1}. This requires $O(1/\eps^4 \cdot \log^4 n )$ rounds.  We then identify the largest $\tilde{D}$ in which we have a non-empty output.  In particular, we refer to when $\tilde{D} = (1+\eps)^i$ as phase $i$. Each vertex $v$ sets $\psi(v) = i$ where $i$ is the largest phase in which it is marked.  In the end, we can broadcast $j = \max_{v \in V} \psi(v)$ to all vertices $v$ in $O(\diameterG)$ rounds.  Then, for every vertex $v$, if $v$ is marked in phase $j$, set $h_v = 1$. \end{proof}

We include the following lemma, which will be useful in bounding the running time for rounding the fractional dual solution into an integer dual solution in the next section.

\begin{lemma}\label{lem:bits}
Suppose that we run Algorithm \ref{alg:fractional_dual} with parameters $z$ and $\epsilon$, where $z$ is an integer and $\epsilon$ is a (negative) power of 2. Moreover, assume, $T = \Theta((\log n) / \epsilon^2)$, in the main loop of Algorithm \ref{alg:fractional_dual}, is a power of 2. Each $\alpha_{eu}$ in the solution returned by Algorithm \ref{alg:fractional_dual} contains at most $O(\log \log n  + \log(1/\epsilon))$ bits.
\end{lemma}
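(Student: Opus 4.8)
\textbf{Proof proposal for Lemma \ref{lem:bits}.}

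The plan is to track exactly what numeric values can appear as an $\alpha^{(t)}_{eu}$ and as a load $\ell(e)$ during the execution, and then bound the number of bits needed to represent the returned average. The key observation, already noted in the text preceding the lemma, is that in each iteration each $\alpha^{(t)}_{eu}$ is either $0$, or $2$, or the single ``remainder'' value $r := z - 2(\lceil z/2\rceil - 1)$, which is $1$ if $z$ is even and $2$ if $z$ is odd (since $z$ is an integer). Hence every $\alpha^{(t)}_{eu}$ is an integer in $\{0,1,2\}$. Consequently each per-iteration edge increment $\alpha^{(t)}_{eu} + \alpha^{(t)}_{ev}$ is an integer in $\{0,1,2,3,4\}$, so every load $\ell(e)$ stays a non-negative integer bounded by $4T = \Theta((\log n)/\epsilon^2)$ throughout. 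This integrality is the point of the ``we only operate with integers'' remark, and it immediately makes the bookkeeping clean.

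Next I would compute the returned value: Algorithm \ref{alg:fractional_dual} outputs $\alpha_{eu} = \left(\sum_{t=1}^{T}\alpha^{(t)}_{eu}\right)\cdot (1+2\epsilon)/T$. The numerator $N := \sum_{t=1}^T \alpha^{(t)}_{eu}$ is an integer with $0 \le N \le 2T$. Now use the hypotheses that $\epsilon$ is a (negative) power of $2$, say $\epsilon = 2^{-a}$, and that $T$ is a power of $2$, say $T = 2^{b}$ with $b = \Theta(\log\log n + \log(1/\epsilon))$ (this matches $T = \Theta((\log n)/\epsilon^2)$ up to rounding $T$ to a power of two, and one checks $b = O(\log\log n) + O(\log(1/\epsilon))$). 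Then
\[
\alpha_{eu} \;=\; \frac{N(1+2\epsilon)}{T} \;=\; \frac{N\,(2^{a}+2)}{2^{a}}\cdot\frac{1}{2^{b}} \;=\; \frac{N\,(2^{a}+2)}{2^{a+b}},
\]
so $\alpha_{eu}$ is a dyadic rational with denominator $2^{a+b}$ and numerator $N(2^a+2)$, an integer at most $2T\cdot(2^a+2) = O(T/\epsilon)$. Therefore $\alpha_{eu}$ can be written exactly as a fixed-point binary number with $a+b = O(\log(1/\epsilon)) + O(\log\log n + \log(1/\epsilon)) = O(\log\log n + \log(1/\epsilon))$ fractional bits, and its integer part is at most $\alpha_{eu} \le 2(1+2\epsilon) \le 3$ (bounded by a constant, since $N \le 2T$), needing $O(1)$ bits. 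Summing, $\alpha_{eu}$ is representable in $O(\log\log n + \log(1/\epsilon))$ bits.

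I do not expect a genuine obstacle here; the lemma is essentially a bookkeeping statement. The one point that needs a little care is confirming that rounding $T = K(\log n)/\epsilon^2$ up to the nearest power of two keeps $b = \log_2 T$ within $O(\log\log n + \log(1/\epsilon))$ — which holds because $\log_2 T = \log_2 K + \log_2\log n + 2\log_2(1/\epsilon) + O(1)$ — and that doing so does not disturb the correctness guarantees of Lemma \ref{lem:dual_feasible} (it only increases $T$, which is harmless for the multiplicative-weights analysis). The only other thing worth stating explicitly is that $\sum_{t=1}^T\alpha^{(t)}_{eu}$ is literally an integer, which follows from the $\{0,1,2\}$-valued claim above; with that in hand the exact dyadic-rational form of $\alpha_{eu}$ and hence the bit bound are immediate.
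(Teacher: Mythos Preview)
Your proof is correct and follows essentially the same approach as the paper: observe that each $\alpha^{(t)}_{eu}\in\{0,1,2\}$ when $z$ is an integer, so the sum is an integer bounded by $2T$, and then use that division by $T$ and multiplication by $1+2\epsilon$ are bit-shifts (the paper splits $(1+2\epsilon)N/T$ as $N/T + 2\epsilon\cdot N/T$, you instead write it as the single dyadic rational $N(2^a+2)/2^{a+b}$, but these are the same computation). One inconsequential slip: your parity claim is reversed --- $r = z - 2(\lceil z/2\rceil - 1)$ equals $2$ when $z$ is even and $1$ when $z$ is odd --- but this does not affect the conclusion that $\alpha^{(t)}_{eu}\in\{0,1,2\}$.
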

%
%

\sloppy
\newcommand{\Mod}[1]{\ \mathrm{mod}\ #1}

\section{Deterministic Algorithms for Low Outdegree Orientation}\label{sec:loo}

Recall that in the low outdegree orientation problem, we are given an integer parameter $\tilde{D} \geq D$. The goal is to find an orientation of the edges such that for every vertex, the number of outgoing edges is upper bounded by $(1+\epsilon)\tilde{D}$.

In Section \ref{sec:congest}, we showed that Algorithm \ref{alg:fractional_dual} can be used to obtain a  solution for the fractional version of the low outdegree orientation problem.  In this section, we will show how to round a fractional solution to an integral solution deterministically in the \textsf{CONGEST} model. We first present the framework and then describe the subroutine for the directed splitting procedure adapted from \cite{GHKMSU17} in Section \ref{sec:splitting}.

Let $\epsilon_1, \epsilon_2 \in \Theta(\epsilon)$ be  error control parameters which will be determined later. First, we will run Algorithm \ref{alg:fractional_dual} with parameters $\tilde{D}$ and $\epsilon_1/12$ to obtain a fractional solution for $\DUAL{(1+\epsilon_1)\tilde{D}}$. Let $\{ \alpha'_{eu} \}_{e \in E, u \in V}$ be the output fractional solution. They satisfy the conditions that $\alpha'_{eu} + \alpha'_{ev} \geq 1$ for every edge $e = uv$ and $\sum_{e \ni u}\alpha'_{eu} \leq (1+\epsilon_1) \tilde{D}$ for every vertex $u$. We show how to round the $\alpha'$-values to $\{0,1\}$ bit-by-bit deterministically and incur bounded errors on the constraints.

We may also assume without loss of generality that both $1/\epsilon_1$ and the number of iterations, $T = \Theta((\log n) / \epsilon_{1}^2)$, in the main loop of Algorithm \ref{alg:fractional_dual}, is a power of 2. We can apply Lemma \ref{lem:bits} to obtain an upper bound, $B = O(\log \log n + \log(1/\epsilon_1))$, on the number of bits needed to store each $\alpha'_{eu}$. In the case where the maximum degree $\Delta$ is very small, we can even truncate the bits without creating much error. Let $t = \min(B, \lceil \log_{2} (\Delta / \epsilon_2) \rceil)$. We round the $\alpha'$-values up to the $t$'th bit after the decimal point.  In other words, we set $\alpha^{(0)}_{eu} = \lceil 2^{t} \cdot \alpha'_{eu} \rceil / 2^{t}$ for every variable $\alpha_{eu}$. If $t = B$, we are just setting $\alpha^{(0)}_{eu} = \alpha'_{eu}$. 

Note that in the case $t =\lceil \log_{2} (\Delta / \epsilon_2) \rceil$, because $\tilde{D} \geq 1$, we have 
\[\sum_{e \ni u}\alpha^{(0)}_{eu} \leq \sum_{e \ni u}\alpha_{eu}' + \deg(u) \cdot 2^{- \lceil \log_{2} (\Delta  /\epsilon_2) \rceil } \leq (1+\epsilon)\tilde{D} + \epsilon_2\leq (1+\epsilon_1)(1+\epsilon_2)\tilde{D}~.  \]

The algorithm (Algorithm \ref{alg:dualrounding}) consists of $t$ iterations. It processes the $\alpha$-values bit-by-bit, from the $t$'th bit to the first bit after the decimal point. When it is processing the $k$'th bit, for each $\alpha_{eu}$, we will round its $k$'th bit either up or down. Therefore, after we have processed the first bit in the last iteration, all the $\alpha$-values are integers. Let $\alpha_{eu}(i)$ be the $(t-i+1)$'th bit of $\alpha_{eu}$ after the decimal point (i.e.~the $i$'th rightmost bit after the initial rounding).

Let $\alpha^{(k)}_{eu}$ denote the value $\alpha_{eu}$ at the end of iteration $k$. During iteration $k$, if $\alpha^{(k-1)}_{eu}(k) = 1$, we will either need to round it up (set $\alpha^{(k)}_{eu} = \alpha^{(k-1)}_{eu} + 2^{-(t-k+1)} $) or round it down (set $\alpha^{(k)}_{eu} = \alpha^{(k-1)}_{eu} - 2^{-(t-k+1)} $) so that $\alpha^{(k)}_{eu}(k) = 0$.

Consider an edge $e = uv$.  If $\alpha^{(k-1)}_{eu}(k) = 0$ or $\alpha^{(k-1)}_{ev}(k) = 0$, we will round both of their $k$'th bit down so that $\alpha^{(k)}_{eu}(k) = 0$ and $\alpha^{(k)}_{ev}(k) = 0$. All the remaining edges must be contained in the graph $G_{k} = (V, E_{k})$, where $E_k = \{uv \mid \alpha^{(k-1)}_{eu}(k) =  1 \mbox{ and } \alpha^{(k-1)}_{ev}(k) =  1\} $.  We will perform a deterministic directed splitting algorithm on $G_k$ which we adapt from \cite{GHKMSU17} to the \textsf{CONGEST} model. Let $\deg_{k}(u)$ denote the degree of $v$ in $G_{k}$. The outcome of the algorithm is an orientation of the edges in $E_{k}$ such that for each vertex $u$, $|\outdeg_{k}(u) - \indeg_{k}(u)| \leq \epsilon_3 \deg_{k}(u) + 12$.

Suppose that $e = uv$ is oriented from $u$ to $v$. We will round $\alpha^{(k-1)}_{eu}$ up and round $\alpha^{(k-1)}_{ev}$ down. We do the opposite if it is oriented from $v$ to $u$.

 \begin{algorithm}
\caption{Deterministic Rounding for Low Outdegree Orientation}\label{alg:dualrounding}
\begin{algorithmic}[1]\small
\State Obtain a feasible $\{\alpha'_{eu} \}_{eu}$ for $\DUAL{(1+\epsilon_1)\tilde{D}}$.
\State Set $\alpha^{(0)}_{eu} = \lceil 2^{t} \cdot \alpha'_{eu} \rceil / 2^{t}$ for every $\alpha_{eu}$.
\For{$k = 1 \ldots t$}
	 \For{every edge $e = uv$ s.t.~$\alpha^{(k-1)}_{eu}(k) = 0$ or $\alpha^{(k-1)}_{ev}(k) = 0$}
	 	 	\State Set $\alpha^{(k)}_{eu} =  \alpha^{(k-1)}_{eu}$ and then set $\alpha^{(k)}_{eu}(k) = 0$.  \Comment{round the $k$'th bit down}
	 	 	\State Set $\alpha^{(k)}_{ev} =  \alpha^{(k-1)}_{ev}$ and then set $\alpha^{(k)}_{ev}(k) = 0$.

	 \EndFor
	 
	 \State Let $G_k = (V,E_k)$, where $E_k = \{uv \mid \alpha^{(k-1)}_{eu}(k) =  1 \mbox{ and } \alpha^{(k-1)}_{ev}(k) =  1\} $.
	 \State Obtain a directed splitting of $G_k$ whose discrepancy is at most $\epsilon_3 \deg_k(u) + 12$ for each vertex $u$.

	 \For{every edge $e = uv$ where $u$ is oriented toward $v$}
	 	\State Set $\alpha^{(k)}_{eu} = \alpha^{(k-1)}_{eu} +  2^{-(t -k  +1)} $. \Comment{round up}
	 	\State Set $\alpha^{(k)}_{ev} =  \alpha^{(k-1)}_{ev} - 2^{-(t -k  +1)}$. \Comment{round down}
	 \EndFor
\EndFor

\For{each $e=uv$}
	\State Set $\alpha_{eu} = \min(\alpha^{(t)}_{eu}, 1)$.
\EndFor
\end{algorithmic}
\end{algorithm}

\begin{lemma}
Suppose that $\epsilon_3 \leq 1/4$. The $\{\alpha_{ev}\}$ values produced by Algorithm \ref{alg:dualrounding} satisfy the following properties. (1)  $\alpha_{eu} \in \{0, 1\} $ (2) For every edge $e = uv$, $\alpha_{eu} + \alpha_{ev} \geq 1$. (3) For every vertex $u$, $\sum_{e \ni u} \alpha_{eu}  \leq (1+\epsilon_1)(1+\epsilon_2)(1+\epsilon_3)^{t}\tilde{D} + 16$.
\end{lemma}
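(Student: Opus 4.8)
The plan is to track the value $\alpha^{(k)}_{eu}$ of each ordered pair $(e,u)$ through the $t$ iterations, maintaining the invariant that after iteration $k$ every $\alpha^{(k)}_{eu}$ is a nonnegative integer multiple of $2^{-(t-k)}$ — equivalently, bits $1,\dots,k$ after the decimal point (in the algorithm's indexing) are all $0$. I would prove this by induction on $k$. In iteration $k$ each edge $e=uv$ is processed exactly once and in exactly one of two ways: if at least one endpoint has bit $k$ equal to $0$, we zero out bit $k$ on both sides, which changes each value by $0$ or by $-2^{-(t-k+1)}$; if $e\in E_k$ (both bit-$k$'s equal $1$), the splitting step adds $2^{-(t-k+1)}$ to one endpoint and subtracts it from the other, turning bit $k$ of both endpoints into $0$ because the carry from the addition propagates only toward more significant bits. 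In either case bits $1,\dots,k-1$, which are $0$ by the inductive hypothesis, stay $0$, so the invariant is preserved. Along the way I would record the elementary bookkeeping used below: any single round-down in iteration $k$ decreases a value by at most $2^{-(t-k+1)}$, and for an edge $e=uv$ the quantity $\alpha_{eu}+\alpha_{ev}$ is unchanged when $e\in E_k$ and decreases by at most $2^{-(t-k+1)}$ otherwise.

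For property (1): by the invariant at $k=t$, $\alpha^{(t)}_{eu}$ is an integer; since $\alpha^{(0)}_{eu}\ge\alpha'_{eu}\ge 0$ and the total decrease over all iterations is at most $\sum_{k=1}^{t}2^{-(t-k+1)}=1-2^{-t}<1$, we get $\alpha^{(t)}_{eu}>-1$, hence $\alpha^{(t)}_{eu}\ge 0$ and $\alpha_{eu}=\min(\alpha^{(t)}_{eu},1)\in\{0,1\}$. For property (2): the initial rounding only rounds up, so $\alpha^{(0)}_{eu}+\alpha^{(0)}_{ev}\ge\alpha'_{eu}+\alpha'_{ev}\ge 1$; summing the per-iteration drops, the total decrease of $\alpha_{eu}+\alpha_{ev}$ through iteration $k$ is at most $\sum_{j=1}^{k}2^{-(t-j+1)}=2^{-(t-k)}-2^{-t}<2^{-(t-k)}$, so $\alpha^{(k)}_{eu}+\alpha^{(k)}_{ev}>1-2^{-(t-k)}$; but by the invariant this sum is an integer multiple of $2^{-(t-k)}$, forcing it to be $\ge 1$. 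Taking $k=t$ gives $\alpha^{(t)}_{eu}+\alpha^{(t)}_{ev}\ge 1$ with both nonnegative integers, so at least one of them is $\ge 1$; the truncation $\min(\cdot,1)$ maps that one to $1$ and leaves the other $\ge 0$, hence $\alpha_{eu}+\alpha_{ev}\ge 1$.

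For property (3): fix $u$ and let $S_k(u)=\sum_{e\ni u}\alpha^{(k)}_{eu}$. In iteration $k$ the round-down edges incident to $u$ only decrease $S(u)$, while each edge of $E_k$ incident to $u$ changes $\alpha^{(k)}_{eu}$ by $+2^{-(t-k+1)}$ if oriented out of $u$ and by $-2^{-(t-k+1)}$ if oriented into $u$; hence $S_k(u)-S_{k-1}(u)\le(\outdeg_k(u)-\indeg_k(u))\,2^{-(t-k+1)}\le(\epsilon_3\deg_k(u)+12)\,2^{-(t-k+1)}$ by the discrepancy guarantee of the directed-splitting subroutine. Moreover every edge of $E_k$ incident to $u$ has $\alpha^{(k-1)}_{eu}\ge 2^{-(t-k+1)}$ (its bit $k$ is $1$), so $\deg_k(u)\,2^{-(t-k+1)}\le S_{k-1}(u)$, giving the recursion $S_k(u)\le(1+\epsilon_3)S_{k-1}(u)+12\cdot 2^{-(t-k+1)}$. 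Unrolling, $S_t(u)\le(1+\epsilon_3)^tS_0(u)+6\sum_{j=0}^{t-1}((1+\epsilon_3)/2)^j$; with $\epsilon_3\le 1/4$ the geometric sum is at most $(1-5/8)^{-1}=8/3$, so the additive term is at most $16$. Combining with $S_0(u)=\sum_{e\ni u}\alpha^{(0)}_{eu}\le(1+\epsilon_1)(1+\epsilon_2)\tilde D$ (established for both cases $t=B$ and $t=\lceil\log_2(\Delta/\epsilon_2)\rceil$ in the text preceding Algorithm \ref{alg:dualrounding}) and with $\alpha_{eu}=\min(\alpha^{(t)}_{eu},1)\le\alpha^{(t)}_{eu}$ yields $\sum_{e\ni u}\alpha_{eu}\le(1+\epsilon_1)(1+\epsilon_2)(1+\epsilon_3)^t\tilde D+16$.

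The step I expect to be the main obstacle is property (2): a crude bound only shows the sum stays positive, and the real point is the coarsening-to-a-grid phenomenon — after iteration $k$ the partial sum $\alpha^{(k)}_{eu}+\alpha^{(k)}_{ev}$ lies on the lattice of multiples of $2^{-(t-k)}$ yet has dropped by strictly less than one lattice step, which pins it at $\ge 1$ — together with verifying that the final $\min(\cdot,1)$ does not break the covering constraint. The accounting for (3) is then routine once one notices the self-bounding inequality $\deg_k(u)\,2^{-(t-k+1)}\le S_{k-1}(u)$, which converts the additive discrepancy of the splitting into a clean multiplicative $(1+\epsilon_3)$ factor.
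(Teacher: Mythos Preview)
Your proof is correct and tracks the paper's argument closely; properties (1) and (3) are essentially identical to the paper's (in fact you are a bit more careful in (1), explicitly deriving $\alpha^{(t)}_{eu}\ge 0$ from the total-decrease bound, whereas the paper simply asserts nonnegativity).

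For property (2) you take a slightly different route. The paper argues by a direct per-step induction: in the case where exactly one of $\alpha^{(k-1)}_{eu}(k),\alpha^{(k-1)}_{ev}(k)$ equals $1$, it observes that bit $k$ of the sum $\alpha^{(k-1)}_{eu}+\alpha^{(k-1)}_{ev}$ is $1$, so this sum (being $\ge 1$ by induction and a multiple of $2^{-(t-k+1)}$) is already $\ge 1+2^{-(t-k+1)}$, and hence remains $\ge 1$ after the single round-down. You instead bound the \emph{cumulative} drop through iteration $k$ by $2^{-(t-k)}-2^{-t}<2^{-(t-k)}$ and then invoke the fact that $\alpha^{(k)}_{eu}+\alpha^{(k)}_{ev}$ lies on the $2^{-(t-k)}$-lattice to snap it back to $\ge 1$. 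These are two presentations of the same lattice/parity phenomenon; the paper's version is slightly more direct (it never needs the cumulative sum), while yours makes the ``strictly less than one lattice step'' picture explicit. You are also more explicit than the paper in verifying that the final $\min(\cdot,1)$ truncation preserves the covering constraint.
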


\begin{proof}
For (1), since we either round $\alpha^{(k-1)}_{ev}(k)$ up or down during iteration $k$, we have $\alpha^{(k)}_{ev}(k) = 0$ at the end of iteration $k$. Moreover, once $\alpha^{(k)}_{ev}(k)$ becomes $0$, $\alpha^{(k')}_{ev}(k)$ remains 0 for $k' \geq k$. Therefore, at the end of iteration $t$, we have $\alpha^{(t)}_{ev}(i) = 0$ for $1 \leq i \leq t$. This implies $\alpha_{ev}^{(t)}$ is a non-negative integer. Since the final output, $\alpha_{ev}$, is the minimum of 1 and $\alpha_{ev}^{(t)}$, we have $\alpha_{ev} \in \{0, 1\}$.

We show (2) inductively. In the beginning, since $\alpha^{(0)}_{eu} \geq \alpha'_{eu}$, we must have $\alpha^{(0)}_{eu} + \alpha^{(0)}_{ev} \geq 1$. Suppose that $\alpha^{(k-1)}_{eu} + \alpha^{(k-1)}_{ev} \geq 1$. During iteration $k$, if $\alpha^{(k-1)}_{eu}(k) = 0$ and $\alpha^{(k-1)}_{ev}(k) = 0$, then $\alpha^{(k)}_{eu} = \alpha^{(k-1)}_{eu}$ and $\alpha^{(k)}_{ev}  = \alpha^{(k-1)}_{ev}$ and so $\alpha^{(k)}_{eu} + \alpha^{(k)}_{ev} \geq 1$.  

If $\alpha^{(k-1)}_{eu}(k) = 1$ and $\alpha^{(k-1)}_{ev}(k) = 0$, then it must be the case that $\alpha^{(k-1)}_{eu}+ \alpha^{(k-1)}_{ev} \geq 1 + 2^{-(t-k+1)}$. After rounding $\alpha^{(k-1)}_{eu}$ down, we still have $\alpha^{(k)}_{eu} + \alpha^{(k)}_{ev}  \geq 1$. The case for $\alpha^{(k-1)}_{eu}(k) = 0$ and $\alpha^{(k-1)}_{ev}(k) = 1$ is symmetric. 

The remaining case is when $\alpha^{(k-1)}_{eu}(k) = 1$ and $\alpha^{(k-1)}_{ev}(k) = 1$. In this case, $e \in G_k$. We must have $\alpha^{(k)}_{eu} + \alpha^{(k)}_{ev} = \alpha^{(k-1)}_{eu} + \alpha^{(k-1)}_{ev}   \geq 1$, since one of them is rounded up and the other is rounded down.

For (3), let $D_0 = (1+\epsilon_1)(1+\epsilon_2)\tilde{D}$ and $D_{k} = (1+\epsilon_3) D_{k-1} + 12\cdot 2^{-(t-k+1)}$ for $k\geq 1$. We will show by induction that $\sum_{e \ni u} \alpha^{(k)}_{eu} \leq D_k$. For the base case, initially, we argued that $\sum_{e \ni u} \alpha^{(0)}_{eu}  \leq (1+\epsilon_1)(1+\epsilon_2)\tilde{D}$. For $k \geq 1$, note that the increase on the quantity $\sum_{e \ni u} \alpha_{eu}$ during iteration $k$ is at most  $2^{-(t-k+1)} \cdot (\epsilon_3 \cdot \deg_{k}(u) + 12)$. Therefore,
\begin{align*}
&\sum_{e \ni u} \alpha^{(k)}_{eu} \leq   2^{-(t-k+1)} \cdot (\epsilon_3 \cdot \deg_{k}(u) + 12) + \sum_{e \ni u} \alpha^{(k-1)}_{eu} \\
& \leq  \epsilon_3 D_{k-1} + 12 \cdot 2^{-(t-k+1)}  + D_{k-1} \leq (1+\epsilon_3) D_{k-1} + 12 \cdot  2^{-(t-k+1)} = D_{k}~.
\end{align*}
The second inequality follows since $2^{-(t-k+1)}\deg_{k}(u) \leq \sum_{e \ni u} \alpha^{(k-1)}_{eu} \leq D_{k-1}$. This completes the induction. Since $\epsilon_3 \leq 1/4$,  in the end, we have
\begin{align*}
\sum_{e \ni u} \alpha_{eu} \leq \sum_{e \ni u} \alpha^{(t)}_{eu} \leq D_t &\leq (1+\epsilon_1)(1+\epsilon_2)(1+\epsilon_3)^{t}\tilde{D} + 12\cdot (1/2) \cdot \sum_{k=0}^{t-1} \left(\frac{1+ \epsilon_3}{2}\right)^{k}\\
&\leq (1+\epsilon_1)(1+\epsilon_2)(1+\epsilon_3)^{t}\tilde{D} + 16 ~. \qedhere
\end{align*}
\end{proof}

\begingroup
\begin{NoHyper}
\def\thetheorem{\ref{thm:lowdeg}}
\begin{theorem}
Given an integer $\tilde{D} \geq D$, for any $32/\tilde{D} \leq \epsilon \leq 1/4$, there exists a deterministic algorithm in the \textsf{CONGEST} model that computes a $(1+\epsilon)\tilde{D}$-orientation and runs in
$$O\left(\frac{\log n}{\epsilon^2} + \left(\min(\log \log n, \log \Delta)+  \log(1/ \epsilon)\right)^{2.71}  \cdot (1/\epsilon)^{1.71}\cdot \log^2 n \right)  \leq  \tilde{O}((\log^2 n) /\epsilon^2)   \mbox{ rounds}.$$
\end{theorem}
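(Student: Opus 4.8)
The plan is to chain the fractional solver of Algorithm~\ref{alg:fractional_dual} into the bit-by-bit rounding of Algorithm~\ref{alg:dualrounding}, then choose the error-control parameters $\epsilon_1,\epsilon_2,\epsilon_3$ so that the multiplicative blow-up $(1+\epsilon_1)(1+\epsilon_2)(1+\epsilon_3)^t$ and the additive slack $16$ coming from the rounding lemma together stay below $(1+\epsilon)\tilde D$, and finally to charge the round complexity to the directed-splitting subroutine of Section~\ref{sec:splitting}. Concretely, I would run Algorithm~\ref{alg:fractional_dual} with $z=\tilde D$ and parameter $\epsilon_1/2$ (taking $\epsilon_1$ and the iteration count $T$ to be powers of two, which costs only constant factors). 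Since $\tilde D\ge D$, the discussion following Lemma~\ref{lem:dual_feasible} shows $\sum_{e\in E} w^{(t)}_e(\alpha^{(t)}_{eu}+\alpha^{(t)}_{ev})\ge\sum_{e\in E} w^{(t)}_e$ in every iteration, so by Lemma~\ref{lem:dual_feasible} the output $\{\alpha'_{eu}\}$ is a feasible fractional solution for $\DUAL{(1+\epsilon_1)\tilde D}$. Feeding this into Algorithm~\ref{alg:dualrounding} with splitting-discrepancy parameter $\epsilon_3\le 1/4$ and applying the rounding lemma proved just above yields $\alpha_{eu}\in\{0,1\}$ with $\alpha_{eu}+\alpha_{ev}\ge1$ on every edge and $\sum_{e\ni u}\alpha_{eu}\le(1+\epsilon_1)(1+\epsilon_2)(1+\epsilon_3)^t\tilde D+16$, where $t=\min(B,\lceil\log_2(\Delta/\epsilon_2)\rceil)=O(\min(\log\log n,\log\Delta)+\log(1/\epsilon))$ by Lemma~\ref{lem:bits}.

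The orientation is then immediate: orient $e=uv$ from $u$ to $v$ when $(\alpha_{eu},\alpha_{ev})=(1,0)$, from $v$ to $u$ when $(\alpha_{eu},\alpha_{ev})=(0,1)$, and from the lower-ID endpoint to the higher-ID endpoint when $\alpha_{eu}=\alpha_{ev}=1$. Every edge is oriented because some endpoint has $\alpha=1$, and an edge is outgoing at $u$ only when $\alpha_{eu}=1$, so $\outdeg(u)\le\sum_{e\ni u}\alpha_{eu}$. To close the correctness argument I pick small constants: $\epsilon_1=\epsilon_2=c\epsilon$ and $\epsilon_3=c\epsilon/t$ for a suitably small constant $c$. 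Then $\epsilon_3\le\epsilon\le1/4$, $(1+\epsilon_3)^t\le e^{c\epsilon}$, and hence $(1+\epsilon_1)(1+\epsilon_2)(1+\epsilon_3)^t\le 1+\epsilon/2$; and because $\epsilon\ge 32/\tilde D$ we have $16\le\epsilon\tilde D/2$, so $\outdeg(u)\le(1+\epsilon/2)\tilde D+\epsilon\tilde D/2=(1+\epsilon)\tilde D$, as required.

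For the round complexity, Algorithm~\ref{alg:fractional_dual} runs $T=\Theta((\log n)/\epsilon_1^2)=\Theta((\log n)/\epsilon^2)$ iterations, and each iteration only exchanges the $O(1)$-valued quantities $\alpha^{(t)}_{eu}\in\{0,1,2\}$ across each edge, so it costs $O(1)$ \congest rounds (Lemma~\ref{lem:bits} bounds the bit-length of the final fractional solution so that the initial truncation $\alpha^{(0)}_{eu}=\lceil 2^t\alpha'_{eu}\rceil/2^t$ in Algorithm~\ref{alg:dualrounding} is well defined); this phase costs $O((\log n)/\epsilon^2)$ rounds. Algorithm~\ref{alg:dualrounding} has $t$ iterations; per iteration the ``round both endpoints down'' sweep is local, and the dominant cost is a single call to the deterministic directed-splitting subroutine of Section~\ref{sec:splitting} on $G_k$, which I will show achieves discrepancy $\epsilon_3\deg_k(u)+12$ in $O((1/\epsilon_3)^{1.71}\log^2 n)$ \congest rounds. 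Substituting $1/\epsilon_3=\Theta(t/\epsilon)$ and summing over the $t$ iterations gives $O(t^{2.71}(1/\epsilon)^{1.71}\log^2 n)$, which with the bound on $t$ is the displayed expression; since $(\log(1/\epsilon))^{2.71}=o((1/\epsilon)^{0.29})$, this is $\tilde O((\log^2 n)/\epsilon^2)$.

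The technical heart, and the step I expect to be the main obstacle, is the last one: the directed-splitting procedures of \cite{GS17,GHKMSU17} are designed for the \local model, and porting them to \congest while keeping the round complexity $\poly(1/\epsilon_3,\log n)$ --- in particular polylogarithmic, not polynomial, in $n$ --- is delicate, because the splitting recursion must pass defective-coloring and residual-degree information through $O(\log n)$-bit messages; establishing this is precisely the job of Section~\ref{sec:splitting}. The remaining, more mechanical point is the error bookkeeping across the $t$ bit-levels, already discharged by the rounding lemma above, whose essential content is that the per-level additive error $12\cdot 2^{-(t-k+1)}$ forms a geometric series summing to a constant, so it does not accumulate with $t$.
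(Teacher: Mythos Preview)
Your proposal is correct and follows essentially the same approach as the paper's own proof: choose $\epsilon_1,\epsilon_2\in\Theta(\epsilon)$ and $\epsilon_3\in\Theta(\epsilon/t)$, invoke the rounding lemma to bound $\sum_{e\ni u}\alpha_{eu}$, use $\epsilon\ge 32/\tilde D$ to absorb the additive $16$, and charge the running time to $t$ calls of the directed-splitting routine of Theorem~\ref{thm:splitting}. One minor inaccuracy: the \congest adaptation in Section~\ref{sec:splitting} does not pass ``defective-coloring and residual-degree information'' but instead replaces the cycle-finding step of \cite{GS17,GHKMSU17} with an augmenting-path argument for the weak $\lfloor\deg(v)/3\rfloor$-orientation; since you correctly black-box that result, this does not affect your proof.
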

\addtocounter{theorem}{-1}
\end{NoHyper}
\endgroup

\begin{proof}
We set parameters $\epsilon_1 = \epsilon_2 = \epsilon/8$ and $\epsilon_3 = \epsilon / (4t)$. Run Algorithm \ref{alg:dualrounding} to obtain integral $\{\alpha_{eu} \}$ that satisfy $\alpha_{eu} + \alpha_{ev} \geq 1$ for every $e=uv$ and $\sum_{e \ni u} \alpha_{eu} \leq(1+\epsilon_1)(1+\epsilon_2)(1+\epsilon_3)^{t}\tilde{D} + 16$ for every $u$. For each edge $e=uv$, if $\alpha_{eu} = 1$ then we orient $e$ from $u$ to $v$. Otherwise, we orient $e$ from $v$ to $u$. Since $16 \leq \epsilon \tilde{D}/2$, the out-degree of each vertex is upper bounded by:
\begin{align*}
\sum_{e \ni u} \alpha_{eu} &\leq(1+\epsilon_1)(1+\epsilon_2)(1+\epsilon_3)^{t}\tilde{D} + 16 \leq (1+\epsilon/2)\tilde{D} + 16 \leq (1+\epsilon)\tilde{D}~. 
\end{align*}

The running time for Algorithm \ref{alg:dualrounding} consists of the following. The number of rounds to obtain a fractional solution is $O((\log n) / \epsilon_{1}^2) = O((\log n) / \epsilon^2)$. Then, it consists of $t = \min(\log(\Delta / \epsilon_2),B) = O( \min(\log \log n, \log \Delta) + \log(1/\epsilon))$ iterations. Each iteration invokes a directed splitting procedure that runs in $O((1/\epsilon_3)^{1.71} \cdot \log^2 n )$ rounds by Theorem \ref{thm:splitting}. Recall that $\epsilon_3 = \epsilon/(4t)$,  the total number of rounds is therefore:

\begin{align*}
&O\left(\frac{\log n}{\epsilon^2} + t \cdot (1/\epsilon_3)^{1.71}\cdot \log^2 n \right) =O\left(\frac{\log n}{\epsilon^2} + t^{2.71} \cdot (1/\epsilon)^{1.71}\cdot \log^2 n \right) \\
&=O\left(\frac{\log n}{\epsilon^2} + \left(\min(\log \log n, \log \Delta) +  \log(1/ \epsilon)\right)^{2.71} \cdot (1/\epsilon)^{1.71}\cdot \log^2 n \right)~.  \qedhere
\end{align*}
\end{proof}

\subsection{Distributed Splitting in the CONGEST Model}\label{sec:splitting}

Given a graph $G=(V,E)$, a {\it weak $f(v)$-orientation} of $G$ is an orientation of the edges in $G$ such that there are at least $f(v)$ outgoing edges for each $v \in V$. In order to adapt the algorithm of \cite{GHKMSU17} for directed splitting, we need an algorithm for weak $\lfloor \deg(v) / 3 \rfloor$-orientation in the \textsf{CONGEST} model. The previous algorithms \cite{GHKMSU17,GS17} for weak $\lfloor \deg(v) / 3 \rfloor$-orientation  requires finding short cycles for containing each edge. They are not  adaptable to the \textsf{CONGEST} model. We use an augmenting path approach for finding a weak $\lfloor \deg(v) / 3 \rfloor$-orientation instead.

\begin{lemma}\label{lem:sinkless} There exists a deterministic distributed algorithm that computes a weak $\lfloor \deg(v) / 3 \rfloor$-orientation in $O(\log^2 n)$ rounds in the \congest model.
\end{lemma}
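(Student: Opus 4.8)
The plan is to start from an arbitrary orientation of the edges (say, orient each edge toward its higher-ID endpoint) and then repeatedly improve it by reversing augmenting paths, organized so that only $O(\log n)$ improvement phases are needed and each phase costs $O(\log n)$ \congest rounds.

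\textbf{Augmenting-path structure.} Call a vertex $u$ \emph{deficient} if $\outdeg(u) < \lfloor\deg(u)/3\rfloor$ and \emph{surplus} if $\outdeg(u) > \lfloor\deg(u)/3\rfloor$. The first thing to establish is that, whenever some vertex $u$ is deficient, there is a directed path from a surplus vertex $w$ to $u$, and that reversing all edges on such a path increases $\outdeg(u)$ by one, decreases $\outdeg(w)$ by one, and leaves every other out-degree unchanged. To see this, note that $u$ being deficient forces $\deg(u)\ge 3$. Let $S$ be the set of vertices from which $u$ is reachable along directed edges, so $u\in S$. Every edge whose head lies in $S$ must have its tail in $S$ as well; hence if $e_S$ is the number of directed edges with both ends in $S$ and $b\ge 0$ is the number of edges leaving $S$, then $\sum_{x\in S}\indeg(x)=e_S$, $\sum_{x\in S}\outdeg(x)=e_S+b$, and $\sum_{x\in S}\deg(x)=2e_S+b$. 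If every $x\in S$ were non-surplus, i.e.\ $\outdeg(x)\le\lfloor\deg(x)/3\rfloor\le \deg(x)/3$, then $e_S+b\le\tfrac13(2e_S+b)$, which forces $e_S=b=0$; but then $\indeg(u)=0$ and $\outdeg(u)=\deg(u)\ge 3>\lfloor\deg(u)/3\rfloor$, contradicting that $u$ is deficient. So some $w\in S$ is surplus, necessarily $w\ne u$, and any directed path from $w$ to $u$ is an augmenting path.

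\textbf{The algorithm and its running time.} Given this, I would run $O(\log n)$ phases and maintain the potential $\Phi=\sum_v\max\!\big(0,\lfloor\deg(v)/3\rfloor-\outdeg(v)\big)$, which starts at $\Phi\le 2m/3<n^2$ and equals $0$ exactly when the orientation is a valid weak $\lfloor\deg(v)/3\rfloor$-orientation. In a phase, every deficient vertex grows a breadth-first search backward along the edges (i.e.\ opposite to their orientation) up to depth $L=\Theta(\log n)$, stopping as soon as it reaches a surplus vertex; this produces candidate augmenting paths of length at most $L$. Then a maximal set of \emph{edge-disjoint} such paths is extracted and all of them are reversed simultaneously; since they are edge-disjoint, the reversals do not interfere and $\Phi$ strictly decreases. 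All messages are vertex IDs, small hop counters, and single bits, so each phase takes $O(L)=O(\log n)$ \congest rounds, and the whole algorithm $O(\log^2 n)$ rounds — provided each phase drives $\Phi$ down by a constant factor.

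\textbf{Main obstacle.} The crux is precisely that last proviso: the counting argument above only yields \emph{some} augmenting path, of a priori unbounded length, whereas the $O(\log^2 n)$ bound requires that a constant fraction of the current deficiency can always be cleared by augmenting paths of length $O(\log n)$, and that a maximal edge-disjoint family of them be selectable \emph{deterministically} under the $O(\log n)$-bit bandwidth of \congest. I expect to handle this by refining the reachable-set counting with the low-diameter-decomposition ideas already used in the paper: decompose $G$ (or the subgraph of edges incident to deficient vertices) into clusters of diameter $O(\log n)$ while cutting only an $O(\eps)$ fraction of those edges, apply the counting argument \emph{inside} each cluster to guarantee a surplus vertex within radius $O(\log n)$ for all but a small fraction of the deficiency, and then perform the edge-disjoint selection greedily, layer by layer along the bounded-depth backward-BFS forest, which needs only short messages. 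The combinatorial bookkeeping showing that the uncleared deficiency shrinks geometrically across phases is the part I expect to require the most care.
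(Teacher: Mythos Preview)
Your proposed fix for the ``main obstacle'' is the wrong tool. Low-diameter decomposition bounds \emph{undirected} distances, whereas an augmenting path must follow edges against their current orientation; inside a cluster $C$ your reachable-set counting argument breaks because the global $\indeg/\outdeg$ values you use include edges crossing $\partial C$, so you cannot conclude that a surplus vertex lies in $C$, let alone that it is directionally close to the deficient vertex. (The decomposition you cite is also randomized, while the lemma asks for a deterministic algorithm.) Even granting short paths, your ``maximal edge-disjoint selection, greedily layer by layer'' is itself a real obstacle in \congest once intermediate vertices can have arbitrary out-degree and many candidate paths overlap on the same edges and vertices.

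The paper sidesteps both issues with one reduction you are missing: split each vertex $v$ into $\lceil \deg(v)/3\rceil$ copies of degree at most $3$, so the task becomes finding a \emph{sinkless} orientation of a max-degree-$3$ graph $G'$ (simulated in $G$ with no overhead). In $G'$ every non-terminal intermediate vertex has in-degree $2$ and out-degree exactly $1$; backward BFS from any sink therefore doubles its frontier at each layer, yielding an augmenting path of length $O(\log n)$ directly---no decomposition needed. Out-degree $1$ also forces augmenting paths from distinct sinks to be \emph{vertex-disjoint} except at their terminal vertex, and each terminal can end at most $3$ paths; so selection is trivial (each terminal accepts one), at least a $1/3$ fraction of the sinks disappears per phase, and $O(\log n)$ phases of $O(\log n)$ rounds each suffice. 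Without this degree-$3$ reduction, neither your short-path argument nor your disjoint-selection step goes through as written.
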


\begin{proof}
First we construct a new graph $G'$ as follows: Split every vertex $v$ into $\lceil \deg(v) /3 \rceil $ copies. Attach evenly the edges to each copy of the vertex so every copy except possibly the last gets 3 edges and the last copy gets $\deg(v) - 3 \cdot (\lceil \deg(v) /3 \rceil - 1)$ edges. Given an orientation of $G'$, we call a vertex $v$ a {\it sink} {\bf if it has exactly 3 incoming edges}. Clearly, a sinkless orientation (i.e.~an orientation where there are no sinks) in $G'$ corresponds to a weak $\lfloor \deg(v) / 3 \rfloor$-orientation of $G$. Moreover, one round in $G'$ can be emulated in $G$ by using one round.

Now we start with an arbitrary orientation of $G'$.  Some vertices might be sinks. We will use an augmenting path approach to eliminate sinks.

Divide the vertices into the following three types. Type I vertices are the sinks. Type II vertices are those $u$ such that $\deg(u) = 3$ and $\indeg(u) = 2$. Type III vertices are those with $\indeg(u) \leq 1$ or  $\deg(u) < 3$.

An augmenting path is a path $P=(u_1, \ldots, u_l)$ such that:
\begin{enumerate}
\item  $u_1$ is a Type I vertex. $u_l$ is a Type III vertex. $u_i$ is a Type II vertex for $1 < i < l$.

\item $u_{i+1}$ is oriented towards $u_i$ for $1 \leq i < l$.
\end{enumerate}
If $P$ is an augmenting path then flipping $P$ will make $u_1$ no longer a sink. Moreover, it will not create any new sink.

Consider a Type I vertex $u$. An augmenting path of length $O(\log n)$ starting from $u$ can be found as follows. Let $L_0 = \{u\}$. Given $L_{i-1}$, let $L_{i}$ be the set of incoming neighbors of every vertex in $L_{i-1}$. If $L_i$ contains a Type III vertex, then an augmenting path is found. Otherwise, it must be the case that $L_i$ contains all Type II vertices. Since there are at least $2\cdot|L_{i-1}|$ incoming edges from $L_{i}$ to $L_{i-1}$ ($\indeg(u)=2$ for $u  \in L_{i-1}$)  and Type II vertices have out-degree 1, we have $|L_i| \geq 2 \cdot |L_{i-1}|$. Hence, this process can only continue for at most $O(\log n)$ times. 

Every Type I vertex would be able to find an augmenting paths of length $O(\log n)$ this way. Moreover, note that these augmenting paths can only overlap at their ending vertex, since the intermediate Type II vertices have out-degree 1. Each ending vertex is a Type III vertex, which can only be the ending vertex of at most 3 augmenting paths since it has at most 3 outgoing edges. It selects an arbitrary augmenting path to accept. Therefore, at least 1/3 fraction of augmenting paths will be accepted. We flip along the accepted augmenting paths to fix Type I vertices so that we eliminate at least 1/3 fraction of the sinks. Therefore, it takes $O(\log n)$ repetitions to eliminate all the sinks. The total number of rounds is $O(\log n \cdot \log n) = O(\log^2 n)$.  The process can be easily implemented in the \textsf{CONGEST} model.
\end{proof}

Before we describe how to adapt the splitting algorithm of \cite{GHKMSU17}, we need to introduce the following definition. Given a function $\delta: V \to \mathbb{R}_{\geq 0}$ and $\lambda \in  \mathbb{Z}_{\geq 0}$. A {\it $(\delta, \lambda)$-path decomposition} $\mathcal{P}$ is a partition of $E$ into paths $P_1, \ldots, P_{\rho}$ such that \begin{enumerate} \item Each vertex $v$ is an endpoint of at most $\delta(v)$ paths. \item Each path $P_i$ is of length at most $\lambda$. \end{enumerate}

Given a  $(\delta, \lambda)$-path decomposition $\mathcal{P}$. The virtual graph $G_{\mathcal{P}} = (V,E_{\mathcal{P}})$ consists of exactly $\rho$ edges, where each path $P_i = (v_{i,start}, \ldots, v_{i, end})$ corresponds to an edge $(v_{i,start}, v_{i,end})$ in $E_{\mathcal{P}}$. Lemma \ref{lem:boost} is the adaption of \cite[Lemma 2.11]{GHKMSU17} to the \textsf{CONGEST} model. 

\begin{lemma}\label{lem:boost} Assume that $T(n,\Delta) \geq \log n$ is the running time of an algorithm $\mathcal{A}$ that finds a weak $\lfloor \deg(v) / 3\rfloor$-orientation in the \congest model. Then for any positive integer $i$, there is a deterministic distributed algorithm $\mathcal{A}$ that finds a $(\left(\frac{2}{3} \right)^{i} \cdot \deg(v) + 12, 2^i)$-path decomposition $\mathcal{P}$ in time $O(2^i \cdot T(n,\Delta))$ in the \congest model. \end{lemma}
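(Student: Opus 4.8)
The plan is to construct $\mathcal{P}$ in $i$ \emph{levels}: for each $0\le j\le i$ I would maintain a $(\delta_j,2^j)$-path decomposition $\mathcal{P}_j$ with $\delta_j(v)\le (2/3)^j\deg(v)+12$, and then output $\mathcal{P}_i$. Level $0$ is the trivial decomposition of $E$ into its $|E|$ single-edge paths: every vertex $v$ is an endpoint of exactly $\deg(v)\le\delta_0(v)$ paths, each of length $1=2^0$. So it remains to describe how to go from level $j-1$ to level $j$ — a step that doubles the length bound while shrinking the endpoint budget by a factor $2/3$ up to a small additive constant.

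For the inductive step, I would first form the virtual graph $G_{\mathcal{P}_{j-1}}=(V,E_{\mathcal{P}_{j-1}})$. Each vertex $v$ has virtual degree $d_v:=\deg_{G_{\mathcal{P}_{j-1}}}(v)\le\delta_{j-1}(v)$, and since distinct paths with endpoint $v$ use distinct $G$-edges at $v$, also $d_v\le\deg(v)\le\Delta$; thus $G_{\mathcal{P}_{j-1}}$ has $n$ vertices and maximum degree at most $\Delta$. Run $\mathcal{A}$ on $G_{\mathcal{P}_{j-1}}$ to obtain a weak $\lfloor d_v/3\rfloor$-orientation, which I read as orienting each path of $\mathcal{P}_{j-1}$ from a \emph{tail} endpoint to a \emph{head} endpoint so that every vertex $v$ is the tail of at least $\lfloor d_v/3\rfloor$ paths. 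Then, at each $v$, greedily pair up the paths whose tail is $v$ into at least $\tfrac12\lfloor d_v/3\rfloor-O(1)$ pairs; for a pair $(P_1,P_2)$ with $P_1$ running from $v$ to $w_1$ and $P_2$ from $v$ to $w_2$, splice them at $v$ into a single path from $w_1$ to $w_2$. Since each path has a unique tail and is spliced at most once (and only there), every path of $\mathcal{P}_j$ is the concatenation of at most two paths of $\mathcal{P}_{j-1}$, hence has length at most $2\cdot 2^{j-1}=2^j$. Each splice at $v$ removes $v$ from the endpoint sets of two paths, so
\[
\delta_j(v)\ \le\ d_v-2(\tfrac12\lfloor d_v/3\rfloor-O(1))\ \le\ \tfrac23 d_v+O(1)\ \le\ \tfrac23\,\delta_{j-1}(v)+O(1),
\]
and, tracking the small constants (the per-level additive error works out to at most $3$), unrolling keeps $\delta_j(v)\le (2/3)^j\deg(v)+3\cdot 3\le (2/3)^j\deg(v)+12$ for all $j$.

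For the running time and \congest feasibility, the key fact is that the paths of $\mathcal{P}_{j-1}$ are pairwise edge-disjoint (they partition $E$) and each has at most $2^{j-1}$ edges. Hence one round of communication on $G_{\mathcal{P}_{j-1}}$ is emulated in $O(2^{j-1})$ rounds on $G$ by forwarding each $O(\log n)$-bit virtual message hop-by-hop along its path, with no two virtual messages ever using the same $G$-edge in the same direction; so running $\mathcal{A}$ on $G_{\mathcal{P}_{j-1}}$ costs $O(2^{j-1}\,T(n,\Delta))$ rounds, and the splicing together with the bookkeeping of path names (say, a spliced path keeps the smaller of its two endpoint IDs) add a further $O(2^{j-1})$ rounds. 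Summing over $j=1,\dots,i$ gives $\sum_{j=1}^{i}O(2^{j-1}T(n,\Delta))=O(2^i\,T(n,\Delta))$, as claimed.

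The step I expect to require the most care is the splicing itself: when many tail-paths of $v$ share the same other endpoint $w$, a naive pairing would create a closed trail, so one must argue — as in \cite[Lemma 2.11]{GHKMSU17} — both that enough valid pairs remain (treating any unavoidable closed trails as contributing $0$ to every vertex's endpoint budget, which is harmless for the later use of the decomposition) and that the additive error per level stays small enough that the slack of $12$ is never exhausted over all $i$ levels. Everything else is routine: the \congest simulation is unproblematic precisely because a path decomposition is edge-disjoint and hence congestion-free, and the degree bound $d_v\le\Delta$ ensures the virtual graph is a legitimate input for $\mathcal{A}$.
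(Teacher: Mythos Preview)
Your proposal is correct and follows essentially the same approach as the paper: start from the trivial single-edge decomposition, and at each level run the weak $\lfloor\deg(v)/3\rfloor$-orientation on the virtual graph, pair up outgoing paths at each vertex and splice, yielding the recursion $\delta_j(v)\le\frac{2}{3}\delta_{j-1}(v)+4$ (the paper uses the constant $4$ rather than your $3$, but both sum to at most $12$), with the \congest simulation costing $O(2^{j-1})$ per virtual round because the paths are edge-disjoint. You are in fact slightly more careful than the paper in flagging the closed-trail subtlety during splicing; the paper's explanation glosses over it.
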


The reason of why such an adaptation works is due to the fact that all the paths in $\mathcal{P}$ are disjoint. Therefore, one round  in $G_{\mathcal{P}}$ can be simulated using $O(\lambda)$ rounds in $G$ in the \textsf{CONGEST} model, given $\mathcal{P}$ is a $(\delta, \lambda)$-path decomposition.

For completeness, we explain how the path-decomposition in Lemma \ref{lem:boost} can be obtained. Let $\mathcal{P}_0$ denote the initial path decomposition where each path is a single edge in $E$. Given $\mathcal{P}_{i-1}$, $\mathcal{P}_i$ can be built as follows: Obtain a $\lfloor \deg(v) / 3 \rfloor$-orientation on $G_{\mathcal{P}_{i-1}}$. For each vertex $u$, group the outgoing edges into at least $\lfloor \lfloor \deg(u) / 3 \rfloor / 2 \rfloor$ pairs. For each such edge pair $(u, x)$ and $(u,w)$, we reverse the path that corresponds to $(u,x)$ and append it with $(u,w)$. The new degree of $u$ becomes at most $\deg(u) - 2\cdot \lfloor \lfloor \deg(u) / 3 \rfloor / 2 \rfloor \leq \frac{2}{3}\deg(u) +4	$. Therefore, if $\mathcal{P}_{i-1}$ is a $(\delta, \lambda(v))$-decomposition then $\mathcal{P}_{i}$ is a $(\frac{2}{3}\cdot \delta(v) + 4, 2\lambda(v))$ decomposition. 

Since $\mathcal{P}_0$ is a $(\deg(v), 1)$-path decomposition, $\mathcal{P}_i$ a $(z_i(v),2^{i})$-path decomposition, where 

$$z_i(v) = \left(\frac{2}{3}\right)^i \cdot \deg(v) + 4\sum_{k=0}^{i-1} \left(\frac{2}{3}\right)^{k} \leq \left(\frac{2}{3} \right)^{i} \cdot \deg(v) + 12~.$$

The running time of the $j$'th iteration is $O(2^{j} \cdot T(n, \Delta))$. Therefore, the total running time is $\sum_{j=1}^{i} O(2^{j} \cdot T(n, \Delta)) = O(2^{i} \cdot T(n, \Delta)) $. By setting $i = \log_{3/2} (1/\epsilon)$ and $T(n ,\Delta) = \log^2 n$ from Lemma \ref{lem:sinkless}, we get a $(\epsilon \cdot \deg(v) + 12, (1/\epsilon)^{\log_{3/2} 2})$-path decomposition in $O((1/\epsilon)^{\log_{3/2} 2} \cdot \log^2 n )$ rounds.

Suppose $\mathcal{P}$ is a $(\delta, \lambda)$-path decomposition. If we orient the edges on each path in $\mathcal{P}$ in consistent with the direction of the path, then we must have $|\outdeg(v) - \indeg(v)|\leq \delta(v)$. Therefore, we obtain the following theorem.

\begin{theorem}\label{thm:splitting}
For $\epsilon > 0$, there exists a $O((1/\epsilon)^{1.71} \cdot  \log^2 n )$ rounds deterministic algorithm in the \congest model that computes an orientation such that for each vertex $v$, $|\outdeg(v) - \indeg(v)|\leq \epsilon \deg(v) + 12$.
\end{theorem}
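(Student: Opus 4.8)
The plan is to reduce the problem to constructing a short path decomposition with small endpoint-degree and then orient each path consistently. Concretely, I would invoke Lemma~\ref{lem:boost} with the base algorithm $\mathcal{A}$ taken to be the weak $\lfloor \deg(v)/3 \rfloor$-orientation algorithm of Lemma~\ref{lem:sinkless}, which runs in $T(n,\Delta) = O(\log^2 n)$ rounds in the \congest model. Choosing the iteration count $i := \lceil \log_{3/2}(1/\epsilon) \rceil$ guarantees $(2/3)^i \le \epsilon$, so Lemma~\ref{lem:boost} yields a $(\epsilon \cdot \deg(v) + 12,\, 2^i)$-path decomposition $\mathcal{P}$. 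Since $2^i \le 2 \cdot (1/\epsilon)^{\log_{3/2} 2} = O((1/\epsilon)^{1.71})$ (using $\log_{3/2} 2 < 1.71$), this construction takes $O(2^i \cdot \log^2 n) = O((1/\epsilon)^{1.71} \cdot \log^2 n)$ rounds, matching the claimed running time.

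Given $\mathcal{P} = \{P_1,\dots,P_\rho\}$, I would orient the edges of each path $P_j$ in agreement with a fixed traversal direction of $P_j$ (the vertices of a single path can agree on a direction locally, so this adds no asymptotic cost once $\mathcal{P}$ is known). Then every vertex $v$ lying in the interior of some path $P_j$ receives exactly one incoming and one outgoing edge from $P_j$, contributing $0$ to its discrepancy, while a vertex $v$ that is an endpoint of $P_j$ contributes $\pm 1$. Hence $|\outdeg(v) - \indeg(v)|$ is at most the number of paths of which $v$ is an endpoint, which by the first property of a $(\delta,\lambda)$-path decomposition is at most $\epsilon \cdot \deg(v) + 12$, as required.

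The main work — and the main obstacle — is making the boosting step of Lemma~\ref{lem:boost} run in the \congest model, which is exactly why Lemma~\ref{lem:sinkless} is proved via an augmenting-path argument rather than the cycle-based weak-orientation routines of \cite{GHKMSU17,GS17}. The key point is that the paths produced at each level are mutually edge-disjoint, so one communication round on the virtual graph $G_{\mathcal{P}_{k}}$ can be emulated by $O(2^{k})$ rounds on $G$ with $O(\log n)$-bit messages: there is dilation (each virtual edge is a physical path of length $\le 2^k$) but no congestion (distinct virtual edges traverse distinct physical edges). Summing the per-level costs gives $\sum_{j=1}^{i} O(2^{j} \cdot \log^2 n) = O(2^{i} \cdot \log^2 n)$, and the endpoint-degree recursion $\delta \mapsto \tfrac{2}{3}\delta + 4$ with $\delta_0(v) = \deg(v)$ solves to $(2/3)^i \deg(v) + 4\sum_{k\ge 0}(2/3)^k \le (2/3)^i \deg(v) + 12$, which accounts for the additive constant $12$ in the statement. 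I expect the care needed in bounding messages and round complexity of the \congest emulation (and verifying that Lemma~\ref{lem:sinkless}'s augmenting-path process indeed fixes sinks without creating new ones) to be where most of the effort goes; the final orientation and discrepancy bound are then immediate.
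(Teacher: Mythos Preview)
Your proposal is correct and follows essentially the same approach as the paper: invoke Lemma~\ref{lem:boost} with the weak-orientation routine of Lemma~\ref{lem:sinkless} as the base algorithm, choose $i = \lceil \log_{3/2}(1/\epsilon) \rceil$ to obtain an $(\epsilon\deg(v)+12,\,O((1/\epsilon)^{1.71}))$-path decomposition, and orient each path consistently so that the discrepancy at $v$ is bounded by the number of paths ending at $v$. Your remarks on the \congest emulation (edge-disjoint paths give dilation without congestion) and on the recursion $\delta \mapsto \tfrac{2}{3}\delta + 4$ also mirror the paper's justification.
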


\section*{Acknowledgment} We thank David Harris for pointing out Lemma \ref{lem:bits}, which can be used to improve the running time of the dual rounding algorithm in our previous version. We thank Quanquan Liu for pointing out a subtle bug and suggesting a fix for the proof of  Lemma \ref{lem:dual_feasible} in the previous version.
\bibliographystyle{alpha}
\bibliography{references}
\appendix

\section{Directed Densest Subgraph} \label{sec:directed-DS}
\begin{lemma} [Directed densest subgraph's locality] \label{lem:low-diameter-directed-ds}
 For all directed graphs, there existst  $S,T \subseteq V$ such that $G_{S \cup T}$ has undirected diameter at most $K/\eps \cdot \log n$ for some sufficiently large constant $K$ and furthermore  $d(S,T) \geq (1-\eps)D$.
\end{lemma}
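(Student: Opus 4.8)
The plan is to follow the proof of Lemma~\ref{lem:low-diameter-ds} almost verbatim, with two changes dictated by the directed setting: we apply the low-diameter decomposition to the \emph{underlying undirected graph} of the optimal directed subgraph, and we replace the simple averaging step by a Cauchy--Schwarz estimate to cope with the geometric-mean denominator $\sqrt{|V(S)||V(T)|}$.

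First I would fix an optimal pair $(S^*,T^*)$ with $d(S^*,T^*)=D$ (if $D=0$ the statement is trivial, taking $S=T=\{v\}$). Write $s^*=|S^*|$, $t^*=|T^*|$, $m^*=|E(S^*,T^*)|$, so $D=m^*/\sqrt{s^*t^*}$. Form the simple undirected graph $H^*$ on vertex set $S^*\cup T^*$ containing one edge $\{u,v\}$ whenever $(u,v)\in E(S^*,T^*)$ or $(v,u)\in E(S^*,T^*)$; then $|E(H^*)|\le m^*$ and each edge of $H^*$ is the image of at most two edges of $E(S^*,T^*)$. Applying the existential low-diameter decomposition used in Lemma~\ref{lem:low-diameter-ds} (see also Lemma~\ref{thm:low-diameter-decomposition}) to $H^*$ with parameter $\epsilon/2$ yields connected components $C_1,\dots,C_t$, each of undirected diameter at most $K/\epsilon\cdot\log n$ (absorbing the constant $2$ into $K$), with at most $(\epsilon/2)|E(H^*)|\le(\epsilon/2)m^*$ inter-component edges of $H^*$; hence at most $\epsilon m^*$ edges of $E(S^*,T^*)$ are inter-component.

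Next, set $S_j=S^*\cap C_j$ and $T_j=T^*\cap C_j$. Since the $C_j$ partition $S^*\cup T^*$, the $S_j$ partition $S^*$ and the $T_j$ partition $T^*$, so $\sum_j|S_j|=s^*$ and $\sum_j|T_j|=t^*$. Let $m_j$ be the number of edges of $E(S^*,T^*)$ with both endpoints in $C_j$; each such edge goes from a vertex of $S^*\cap C_j=S_j$ to a vertex of $T^*\cap C_j=T_j$, so $|E(S_j,T_j)|\ge m_j$, and $\sum_j m_j\ge(1-\epsilon)m^*$. Suppose toward a contradiction that $d(S_j,T_j)<(1-\epsilon)D$ for every $j$. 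Then $m_j\le|E(S_j,T_j)|<(1-\epsilon)D\sqrt{|S_j||T_j|}$, and summing and applying Cauchy--Schwarz ($\sum_j\sqrt{|S_j||T_j|}\le\sqrt{(\sum_j|S_j|)(\sum_j|T_j|)}=\sqrt{s^*t^*}$) gives
\[
(1-\epsilon)m^* \;\le\; \sum_j m_j \;<\; (1-\epsilon)D\sum_j \sqrt{|S_j||T_j|} \;\le\; (1-\epsilon)D\sqrt{s^*t^*} \;=\; (1-\epsilon)m^*,
\]
a contradiction. Hence some component $C_j$ has $d(S_j,T_j)\ge(1-\epsilon)D$.

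Finally, take $S=S_j$, $T=T_j$ for that component. The vertex set of $G_{S\cup T}$ is exactly $C_j$, and $G_{S\cup T}$ contains all edges of the connected graph $C_j$ (possibly more), so it is connected with undirected diameter at most that of $C_j$, i.e.\ at most $K/\epsilon\cdot\log n$. The one step that genuinely requires care — and the main departure from the undirected proof — is the Cauchy--Schwarz inequality $\sum_j\sqrt{|S_j||T_j|}\le\sqrt{s^*t^*}$, which plays the role of the telescoping $\sum_i n_i^*/n^*=1$ in Lemma~\ref{lem:low-diameter-ds}; a secondary technicality is collapsing antiparallel edges when building $H^*$ so that the (simple-graph) decomposition applies, which costs only constant factors that are absorbed into $K$ and into the choice $\epsilon/2$.
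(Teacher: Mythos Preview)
Your proof is correct and follows essentially the same route as the paper: apply the low-diameter decomposition to the (undirected version of the) optimal pair $(S^*,T^*)$, then use Cauchy--Schwarz on $\sum_j\sqrt{|S_j||T_j|}\le\sqrt{|S^*||T^*|}$ to reach the contradiction. The paper simply says ``ignoring edges' directions'' and applies the decomposition with parameter~$\epsilon$, whereas you take the extra precaution of collapsing antiparallel pairs and compensating with $\epsilon/2$; this is a harmless refinement, not a different argument.
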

\begin{proof}
Consider an arbitrary directed graph and suppose the maximum density is induced by ${S^*,T^*} \subseteq V$. Let $H$ be the graph such that  $V(H) = S^* \cup T^* $  and $E(H)=\{uv \in E: u \in S^*, v \in T^* \} = E(S^*,T^*)$. Note that the edges in $H$ are the directed edges from a vertex in $S^*$ to a vertex in $T^*$. First, we apply the low-diameter decomposition to  $H$, ignoring edges' directions. Let $C_1,\ldots,C_t$ be the components of the decomposition and let $S_i  = S^* \cap C_i$ and $T_i = T^* \cap C_i$. Note that for every component, both $S_i$ and $T_i$ must be non-empty otherwise it has infinite diameter. Now, suppose for all $i = 1,2,\ldots,t$,
\begin{align*}
\frac{|E(S_i,T_i)|}{\sqrt{|S_i||T_i|}} & < (1-\eps) \frac{|E(S^*,T^*)|}{\sqrt{|S^*||T^*|}} \\
\sum_{i=1}^{t} {|E(S_i,T_i)|} & < (1-\eps)  \frac{|E(S^*,T^*)|}{\sqrt{|S^*| |T^*|}} \sum_{i=1}^{t} \sqrt{|S_i||T_i|} ~.
\end{align*}
Appealing to Cauchy-Schwarz inequality,
\[
 \sum_{i=1}^{t} \sqrt{|S_i||T_i|} \leq \sqrt{\sum_{i=1}^{t} |S_i|}\sqrt{\sum_{i=1}^{t} |T_i|} \leq \sqrt{|S^*| |T^*|}~.
\]
Therefore, 
\[
\sum_{i=1}^{t} {|E(S_i,T_i)|} < (1-\eps) |E(S^*,T^*)|
\]
which is a contradiction since this means that the number of of inter-component edges is more than $\eps |E(S^*,T^*)| = \eps |E(H)|$. Hence, there must be some $(S_i,T_i)$ such that $d(S_i,T_i) \geq (1-\eps)D$. Finally, $G_{S_i \cup T_i} = C_i$ and therefore it has diameter at most $K/\eps \cdot \log n$.
\end{proof}

Unfortunately, if $(S_1,T_1)$ and $(S_2,T_2)$ are disjoint, their union $(S_1 \cup S_2, T_1\cup T_2)$ may have a much smaller density. For example, consider $S_1 = \{ v
\}, T_1 = \{a_1,\ldots,a_x \}, S_2 = \{ b_1,\ldots,b_x\}$, and $T_2 = \{ u\}$. We put directed edges from $v$ to $a_i$ and from $b_i$ to $u$ for all $1 \leq i \leq x$.  Note that $d(S_1,T_1) = d(S_2,T_2) = \sqrt{x}$ whereas $d(S_1 \cup S_2, T_1\cup T_2) = 2 x/(x+1) < 2$. If $(S_1,T_1)$ and $(S_2,T_2)$ are $\Omega(\diameterG)$ far away from each other, it is not possible to have one global output for the entire graph without spending $\Omega(\diameterG)$ time. However, if we are satisfied with local output, then the above also leads to an algorithmic result. In particular, each vertex $u$ will output $(s_u,t_u) \in [n]\times [n]$. Each $S_i = \{ u: s_u = i \} ,T_i =\{ u: t_u = i \}$ for $i > 0$ has density at least $(1-\eps)\tilde{D}$.

The algorithm is analogous to the undirected case with a minor modification. Originally, for each vertex $v$, initialize $s_v$ and $t_v$ to 0. Then, each vertex $v$ collects the subgraph induced by its $r$-neighborhood and computes the densest directed subgraph $(S(v),T(v))$ in that subgraph. If $d(S(v),T(v)) \geq (1-\eps)\tilde{D}$, then $v$ is active. The black vertices are identified similarly as in the undirected case. Each (active) black vertex $v$ will then broadcast $(S(v),T(v))$ to its $r$-neighborhood. If $u \in S(v)$, then $s_u \leftarrow v$. Similarly, if $u \in T(v)$, then $t_u \leftarrow v$. We conclude that there exists a deterministic algorithm for directed version of the dense subgraph problem that runs in $O((\log n) / \epsilon)$ rounds in the \local model.

\section{Omitted Proofs}\label{appendix:omitted-proofs}

\begin{proof}[Proof of Lemma \ref{lem:dual_feasible}]
Define the potential function 
\[
\Phi(t) = \sum_{e\in E} w_e^{(t)}.
\]
Clearly, $\Phi(0) \leq n^2$. We will use the following fact $(1-a)^b \leq e^{-ab} \leq 1-ab+(ab)^2/2$ for $a,b>0$. 

Note that from the algorithm, for any edge $e=uv$ and iteration $t$, we have $\alpha_{eu}^{(t)}, \alpha_{ev}^{(t)} \leq 2$. We have
\begin{align*}
    \Phi(t+1) & = \sum_{e  \in E} w_e^{(t+1)} \\
    & = \sum_{e  \in E} w_e^{(t)} (1-\epsilon)^{\alpha_{eu}^{(t)}+\alpha_{ev}^{(t)}} \\
    & \leq  \sum_{e  \in E} w_e^{(t)} \left( 1 - \epsilon (\alpha_{eu}^{(t)}+\alpha_{ev}^{(t)}) +\frac{\epsilon^2 (\alpha_{eu}^{(t)}+\alpha_{ev}^{(t)})^2}{2} \right) \\
    & \leq  \sum_{e  \in E} w_e^{(t)} \left( 1 - \epsilon (\alpha_{eu}^{(t)}+\alpha_{ev}^{(t)}) +8\epsilon^2  \right) \\
    & = (1+8\epsilon^2) \Phi(t) - \epsilon \sum_{e  \in E}  w_e^{(t)} (\alpha_{eu}^{(t)}+\alpha_{ev}^{(t)})  \\
    & \leq (1+8\epsilon^2) \Phi(t) - \epsilon \sum_{e  \in E}  {w_e^{(t)}} \\
    & = (1-\epsilon+8\epsilon^2)\Phi(t).
    \end{align*}
The last inequality follows from the assumption that $ \sum_{e \in E} w_e^{(t)} (\alpha^{(t)}_{eu} + \alpha^{(t}_{ev} )\geq \sum_{e \in E} w_e^{(t)}$.
Therefore, 
\[
\Phi(T) \leq (1-\epsilon+8\epsilon^2)^{T}    \Phi(0) \leq (1-\epsilon+8\epsilon^2)^{T}    n^2 .
\]
Recall that $\Phi(T) =\sum_{e \in E} w_e^{(T)}=\sum_{e \in E}(1-\epsilon)^{\ell^{(T)}(e)}$. This implies that for each $e \in E$:
\begin{align*}
    (1-\epsilon)^{\ell^{(T)}(e)} & \leq (1-\epsilon+8\epsilon^2)^{T  } n^2  \\
    \ell^{(T)}(e) \ln(1-\epsilon) & \leq T \ln (1-\epsilon+8\epsilon^2) + 2 \ln n \\
    \frac{\ell^{(T)}(e)}{T}  & \geq \frac{\ln (1-\epsilon+8\epsilon^2) }{\ln(1-\epsilon)}+ \frac{2 \ln n}{T \ln(1-\epsilon)}~.
\end{align*}
Let us first use (and verify later) the fact that for $0 < \epsilon < 1/2$:
\begin{align}
    \frac{\ln (1-\epsilon+8\epsilon^2) }{\ln(1-\epsilon)} \geq 1-10\epsilon \label{eq:inequality}~.
\end{align}
Note that $\ln (1-a) \leq -a$ for $0 < a < 1$. From the above, if $T \geq 2/\epsilon^2 \cdot \ln n$, we have
\begin{align*}
    \frac{\ell^{(T)}(e)}{T}  & \geq 1-10\epsilon+ \frac{2 \ln n}{T \ln(1-\epsilon)} \\
    & \geq 1-10\epsilon - \frac{2 \ln n}{\epsilon T } \\
    & \geq 1-10\epsilon -\epsilon = 1-11\epsilon.
\end{align*}
Therefore, 
\begin{align*}
    \frac{\ell^{(T)}(e)}{T} (1+12\epsilon) & \geq (1-11\epsilon)(1+12\epsilon) > 1 \\
    \frac{\sum_{t=1}^T (\alpha_{eu}^{(t)}+\alpha_{ev}^{(t)})}{T} (1+12\epsilon) & > 1 \\
    \alpha_{eu}+\alpha_{ev} & > 1~.
\end{align*}

Also note that in our algorithm, for each $u$ and any $1 \leq t \leq T$, $\sum_{e \ni u} \alpha^{(t)}_{eu} \leq z$. Therefore,
\begin{align*} 
\sum_{e \ni u} \frac{\sum_{t=1}^{T} \alpha^{(t)}_{eu} }{T} \cdot (1+12\epsilon) &\leq (1+12\epsilon) z \\
\sum_{e \ni u} \alpha_{eu} &\leq (1+12\epsilon)z~.
\end{align*}

This implies the solution returned by Algorithm \ref{alg:fractional_dual} is a feasible solution for $\DUAL{(1+12\epsilon)}$.

Finally, we check the inequality (\ref{eq:inequality}). We use the fact that $\ln \left( \frac{1}{1-a}  \right) > a$ for $0 < a < 1$ and $\ln \left( \frac{1}{1-a}  \right) \leq a+a^2$ for $0 < a < 1/2$. Thus, for sufficiently small $\epsilon$,
\begin{align*}
    \frac{\ln (1-\epsilon+8\epsilon^2) }{\ln(1-\epsilon)}  & = \frac{\ln \frac{1}{1-\epsilon+8\epsilon^2}}{\ln \frac{1}{1-\epsilon}} \\
    & \geq \frac{\epsilon-8\epsilon^2}{\epsilon+\epsilon^2} \geq 1-10\epsilon~. \qedhere
\end{align*}

\end{proof}

\begin{proof}[Proof of Lemma \ref{lem:subgraph}]

Recall that to maximize $\sum_{e \in E}w'_e (\alpha_{eu} + \alpha_{ev})$ over $\alpha \in S(z)$,  we do the following. Let $e_1, e_2, \ldots, e_{\deg(u)}$ be the edges incident to $u$ where $\ell(e_1) \leq \ell(e_2) \ldots \leq \ell(e_{\deg(u)})$. We set $\alpha_{e_i u} \leftarrow 2$ for $i = 1, \ldots, \min(\lceil z/2 \rceil - 1, \deg(u))$ and set $\alpha_{e_{\lceil z/2 \rceil u} } \leftarrow z - 2 \cdot (\lceil z/2 \rceil - 1)$ if $\deg(u) \geq \lceil z/2 \rceil$. 

Given a value $\lambda$, let $E_\lambda= \{e \in E: w'_e \geq \lambda \}$ denote the set of edges $e$ such that $w'_{e} \geq \lambda$. Let $\deg_{\lambda}(v)$ be the number of  edges in $E_\lambda$ incident to $v$.  Define 
\[
\deg_{\alpha, \lambda}(v) = \begin{cases} 2 \deg_{\lambda}(v) & \mbox{, if $\deg_{\lambda}(v) < \lceil z/2 \rceil$}\\ z & \mbox{, otherwise.} \end{cases}
\]
Note that $\deg_{\alpha,\lambda}(v) = \sum_{ e \ni v : w'_e \geq \lambda} \alpha_{ev}$. Therefore,

\[\sum_{v \in V}\deg_{\alpha,\lambda}(v) = \sum_{\substack{e \in E: w'_{e} \geq \lambda}} (\alpha_{ev} + \alpha_{eu})~.\] 
Suppose the weights $\{w'_e\}$ are bounded between $a$ and $b$, i.e., $0 \leq a \leq w_e' \leq b$ for all $e$. Then, 
\begin{align*}
\int_{0}^{b} |E_\lambda| d\lambda &= \sum_{e \in E}w'_{e} \\
&> F \cdot \sum_{e \in E}w'_{e} \cdot (\alpha_{eu}+ \alpha_{ev}) \\
&= F \cdot \int_{0}^{b}\sum_{e \in E} [w'_{e} \geq \lambda] \cdot (\alpha_{eu} + \alpha_{ev})  d \lambda \\
&= F \cdot \int_{0}^{b} \sum_{v \in V} \deg_{\alpha,\lambda}(v) d\lambda~.
\end{align*}

Note that if $0 \leq a' < a$,  then $E_{a'} = E_a$ and $\deg_{\alpha, a'}(v) = \deg_{\alpha, a}(v) =\deg(v)$. Therefore, there exists $\lambda \in [a,b]$ such that $|E_\lambda| > F \cdot \sum_{v \in V} \deg_{\alpha,\lambda}(v)$. 
Recall that $V''_\lambda$ is the set of vertices with $\deg_{\lambda}(v) \geq \lceil z / 2 \rceil$ from the defintion in the lemma statement. We have
\begin{align*}
|E(V''_\lambda)| &\geq |E_\lambda| - \sum_{v \in V-V''_\lambda} \deg_{\lambda}(v)\\
&> F \cdot \sum_{v \in V} \deg_{\alpha,\lambda}(v) - \sum_{v \in V-V''_\lambda} \deg_{\lambda}(v) \\
&= F \cdot \left( |V''_\lambda| \cdot z  + \sum_{v \in V - V''_\lambda}2\deg_{\alpha,\lambda}(v) \right) - \sum_{v \in V-V''_\lambda}\deg_{\alpha,\lambda}(v) \\
&\geq F \cdot z \cdot |V''_\lambda|  && \text{since } F > 1/2~.
\end{align*}

Since $|E(V''_\lambda)|  > F \cdot z \cdot |V''_\lambda|  \geq 0$, we deduce that $E(V''_\lambda)$ is non-empty and therefore $V''_\lambda$ is also non-empty. Hence, $d(G[V''_\lambda]) > F \cdot z$.
\end{proof}

\begin{proof}[Proof of Lemma \ref{lem:primal_feasible}]
Consider an iteration $t$ where $\sum_{e \in E}w_e^{(t)} (\alpha_{eu}^{(t)} + \alpha_{ev}^{(t)}) < \sum_{e \in E} w_e^{(t)}$. Let 
$$w'_e = \begin{cases} (1-\epsilon)^{\lceil \ell^{(t)}(e) \rceil}  &, \text{if } \ell^{(t)}(e) \in [\ell_{\min}, \ell_{\max}]\\ 0 &, \mbox{otherwise.}\end{cases}$$

When $w'_e \geq (1-\epsilon)^{\ell_{\max}}$, we have $\ell^{(t)}(e) \leq \ell_{\max}$ and therefore $w'_e \geq (1-\eps)^{\ell^{(t)}(e)+1} = w^{(t)}_e (1-\epsilon)$.  Therefore, $w^{(t)}_e \leq w'_e / (1-\epsilon) + (1-\epsilon)^{\ell_{\max}}$. We have

\begin{align*}
\sum_{e \in E} w'_e (\alpha_{eu}^{(t)} + \alpha_{ev}^{(t)} ) &\leq \sum_{e \in E} w_e^{(t)} (\alpha_{eu}^{(t)} + \alpha_{ev}^{(t)} ) \\& < \sum_{e \in E} w_e^{(t)} \\
& \leq \sum_{e \in E} \left(w'_e / (1-\epsilon) +  (1-\epsilon)^{\ell_{\max }}\right) \\
& = \left( \sum_{e \in E} w'_e / (1-\epsilon) \right)+ m(1-\epsilon)^{\ell_{\max }}~.
\end{align*}

The first inequality holds since $w'_e \leq w_e^{(t)}$. The second inequality follows from our assumption that $ \sum_{e \in E} w_e^{(t)} (\alpha_{eu}^{(t)}+ \alpha_{ev}^{(t)} ) < \sum_{e \in E} w_e^{(t)}$.  Note that 

\begin{align*}
m (1-\epsilon)^{\ell_{\max}} &\leq m (1-\epsilon)^{\ell_{\min}} \cdot (1-\epsilon)^{(1/\epsilon)\log(2m/\epsilon)} \\
&\leq m (1-\epsilon)^{\ell_{\min}} \cdot \exp(-\log(2m/\epsilon)) \\
&\leq \epsilon/2 \cdot (1-\epsilon)^{\ell_{\min}} \\
& \leq \epsilon/2  \cdot \max_{e \in E} w'_e  \leq \epsilon/2 \cdot \sum_{e \in E} w'_e~. 
\end{align*}

Therefore, 
\begin{align*}
\sum_{e \in E} w'_e (\alpha_{eu}^{(t)} + \alpha_{ev}^{(t)} ) &< \left( \sum_{e \in E} w'_e / (1-\epsilon) \right)+ m(1-\epsilon)^{\ell_{\max }} \\
&\leq \left( \sum_{e \in E} w'_e / (1-\epsilon) \right)+ 	\epsilon/2 \cdot \sum_{e \in E} w'_e \\
&\leq \frac{(1+\epsilon/2)}{(1-\epsilon)} \cdot \sum_{e \in E} w'_e \\
\frac{(1-\epsilon)}{(1+\epsilon/2)} \sum_{e \in E}  w'_e (\alpha_{eu}^{(t)} + \alpha_{ev}^{(t)} ) &<  \sum_{e \in E}  w'_e \\
(1-3\epsilon)\cdot  \sum_{e \in E}  w'_e (\alpha_{eu}^{(t)} + \alpha_{ev}^{(t)} ) &<  \sum_{e \in E} w'_e \tag*{$\frac{(1-\epsilon)}{(1+\epsilon/2)}  \geq 1-3\epsilon$ for all $\epsilon \geq 0$~.} 
\end{align*}

Recall that the variables $\{ \alpha^{(t)}_{eu} \}$ maximize $\sum_{e \in E} w^{(t)}_e (\alpha_{eu}^{(t)} + \alpha_{ev}^{(t)} ) $ as follows. Each vertex $u$ distributes the $z$ units of budget to the $\{ \alpha^{(t)}_{eu} \}$ variables of the incident edges with the lowest $\lceil z/2 \rceil$ loads (or 
$\lceil z/2 \rceil$ highest weights). It allocates 2 units of budget to the incident edges with the highest $\lceil z/2 \rceil - 1$ weights and the remaining budget to the remaining edge. We observe that if edges $e_1$ and $e_2$ are incident to $u$, then $w^{(t)}_{e_1} \geq w^{(t)}_{e_2}$ implies $w'_{e_1} \geq w'_{e_2}$. As a result, the variables $\{ \alpha^{(t)}_{eu} \}$ also maximize $\sum_{e \in E} w_e' (\alpha_{eu}^{(t)} + \alpha_{ev}^{(t)} ) $ over $S(z)$.

Now we can apply Lemma \ref{lem:subgraph} with $F = (1-3\epsilon)$ to get a subgraph $G[V''_\lambda]$ whose density is at least $(1-3\epsilon) \cdot z$ for some $\lambda \in \{  (1-\eps)^{\ell_{\max}} ,  (1-\eps)^{\ell_{\max} -1},\ldots,(1-\eps)^{ \ell_{\min} } \}$.  Therefore, Algorithm \ref{alg:mwu-densest-subgraph} will find such a subgraph since $G[V''_\lambda] = G[V'_\ell]$ for some $\ell \in [\ell_{\min}, \ell_{\max}]$.
\end{proof}

\begin{proof}[Proof of Lemma \ref{lem:bits}]
In Algorithm \ref{alg:fractional_dual}, if the parameter $z$ is an integer, then each $\alpha^{(t)}_{eu}$ is either 0,1, or 2. The summation $\sum_{t=1}^{T} \alpha^{(t)}_{eu}$ can be stored using $O(\log T)$ bits.  The average $\sum_{t=1}^{T} \alpha^{(t)}_{eu} / T$ can also be stored using $O(\log T)$ bits since division by $T$ corresponds to a shifting operation (recall that $T$ is a power of 2). 

Recall that in the final solution
\begin{align*}
\alpha_{eu} = (1+2\epsilon)\sum_{t=1}^{T} \alpha^{(t)}_{eu} / T = \underbrace{\sum_{t=1}^{T} \alpha^{(t)}_{eu} / T}_{O(\log T) \text{ bits}}  +\underbrace{2\epsilon \sum_{t=1}^{T} \alpha^{(t)}_{eu} / T}_{O(\log T + \log(1/\eps)) \text{ bits}} ~.\end{align*} 

Multiplying by $2\epsilon$ corresponds to a shifting operation as we assume $\epsilon$ is a negative power of 2. Therefore, it takes $O(\log T + \log(1/\epsilon)) = O(\log \log n  + \log(1/\epsilon))$ bits to represent $\alpha_{eu}$.
\end{proof}

\end{document}